 \newtheorem{thm}{Theorem}[section]
 \newtheorem{cor}[thm]{Corollary}
 \newtheorem{lem}[thm]{Lemma}
 \newtheorem{prop}[thm]{Proposition}
 \theoremstyle{definition}
 \theoremstyle{remark}
 \newtheorem{rem}[thm]{Remark}
 \numberwithin{equation}{section}
    \def\bbq{{\mathbb Q}}
    \def\calF{{\mathcal F}}
    \def\calC{{\mathcal C}}
    \def\calL{{\mathcal L}}
    \def\calL{{\mathcal L}}
    \def\calI{{\mathcal I}}
    \def\bbr{{\mathbb R}}
    \def\bbe{{\mathbb E}}
    \def\bbp{{\mathbb P}}
    \def\bbn{{\mathbb N}}
    \def\a{{a}}
    \def\b{{b}}
    \def\c{{c}}
    \def\f{{f}}
    \def\s{{s}}
    \def\t{{t}}
    \def\z{{z}}
    \def\X{{X}}
  \definecolor{Red}{rgb}{0.00, 0.00, 0.00}
    \newcommand{\Red}{\color{Red}}
    \definecolor{DRed}{rgb}{0.00, 0.00, 0.00}
    \newcommand{\DRed}{\color{DRed}}
    \definecolor{Green}{rgb}{0.00, 0.00, 0.00}
    \newcommand{\Green}{\color{Green}}
    \definecolor{Blue}{rgb}{0.00, 0.00, 0.00}
    \newcommand{\Blue}{\color{Blue}}
    \definecolor{PaleGrey}{rgb}{0,0,0}
    \newcommand{\PaleGrey}{\color{PaleGrey}}
\title{Small-time expansions of the distributions, densities, and option prices of stochastic volatility models with L\'evy jumps}
\author{Jos\'e E. Figueroa-L\'{o}pez\thanks{Department of Statistics, Purdue University, West Lafayette, IN, 47907,  USA ({\tt figueroa@purdue.edu}). {Research supported in part by the NSF grant: DMS-0906919.}}
 \and Ruoting Gong\thanks{School of Mathematics, Georgia Institute of Technology, Atlanta, GA, 30332, USA ({\tt rgong@math.gatech.edu}).}
 \and Christian Houdr\'e\thanks{School of Mathematics, Georgia Institute of Technology, Atlanta, GA, 30332, USA ({\tt houdre@math.gatech.edu}).}
}
\date{}
\begin{document}
\maketitle
\begin{abstract}
We consider a stochastic volatility model with L\'evy
jumps for a log-return process $Z=(Z_{t})_{t\geq 0}$ of the form $Z=U+X$,
where $U=(U_{t})_{t\geq 0}$ is a classical stochastic volatility process
and $X=(X_{t})_{t\geq 0}$ is an independent L\'evy process with absolutely
continuous L\'evy measure $\nu$. Small-time expansions, of arbitrary
polynomial order, in time-$t$, are obtained for the tails
$\bbp\left(Z_{t}\geq z\right)$, $z>0$, and for the call-option
prices  $\bbe\left(e^{z+Z_{t}}-1\right)_{+}$, $z\neq 0$,
assuming smoothness conditions on the {\PaleGrey density of $\nu$} away from the
origin and a small-time large deviation principle on $U$.
Our approach allows for a unified treatment of general payoff functions
of the form $\varphi(x){\bf 1}_{x\geq{}z}$ for smooth functions $\varphi$ and
$z>0$. As a consequence of our tail expansions, the polynomial expansions in $t$ of the transition densities $f_{t}$ are also {\Green obtained} under mild conditions.

\vspace{0.3 cm}

\noindent{\textbf{AMS 2000 subject classifications}: 60G51, 60F99, 91G20, 91G60.}

\vspace{0.3 cm}

\noindent{\textbf{Keywords and phrases}: stochastic volatility models with jumps; short-time asymptotic expansions; transition distributions; transition density, option pricing; implied volatility.}

\end{abstract}

\section{Introduction}

It is generally recognized that accurate modeling of the option market and asset prices requires a mixture of a continuous diffusive component and a jump component. For instance, based on high-frequency statistical methods for It\^o semimartingales, several empirical studies have {\PaleGrey statistically} rejected the null hypothesis of either a purely-jump or a purely-continuous model (see, e.g.,  \cite{AS02}, \cite{AitJacod06}, \cite{AitJacod10}, \cite{BNSh06}, \cite{Podolskij06}). Similarly, based on partially heuristic arguments, \cite{CW03} characterizes the small-time behavior of at-the-money (ATM) and out-of-the-money (OTM) call option prices, and based on these results, then argues that both, a continuous and a jump component, are {\PaleGrey needed} to explain the {behavior of the market implied volatilities for S\&P500 index options. {\Red Another empirical work in the same vein is \cite{MedSca07}, where a small-time small-log-moneyness approximation for the implied volatility surface was {\Green studied in the case of} a local jump-diffusion model with finite jump activity. Based on S\&P500 option market data, \cite{MedSca07} also concludes that jumps {\Blue are significant} in the risk-neutral price dynamics of the underlying asset}. 

Historically, local volatility models (and more recently stochastic volatility models) were the models of choice to replicate the skewness of the market implied volatilities at a given time (see \cite{Fouque} and \cite{Gatheral} for more details). However, it is a well-known empirical fact that implied volatility skewness is more accentuated as the expiration time approaches. Such a phenomenon is hard to reproduce within the purely-continuous framework unless the ``volatility of volatility" is forced to take very high values. Furthermore, as nicely explained in \cite{CT04} (Chapter 1), the very existence of a market for short-term options is evidence that option market participants operate under the assumption that a jump component is present.

In recent years the literature of small-time asymptotics for vanilla option prices of jump-diffusion models has grown significantly with strong emphasis to consider either a purely-continuous model or a purely L\'evy model. In {\PaleGrey case} of stochastic volatility models and local volatility models, we can mention, among others, \cite{Busca1:2002}, \cite{Busca2:2004}, \cite{FengFordeFouque}, \cite{Forde2:2010}, \cite{Forde:2010}, \cite{Gatheral:2009}, \cite{Henry:2009}, \cite{Paulot:2009}. For L\'evy {\PaleGrey processes}, \cite{Rop10} and  \cite{Tankov} show that OTM option prices are generally\footnote{That is, except for some pathological cases (see \cite{Rop10} for examples)} asymptotically equivalent to the time-to-maturity $\tau$ as $\tau\to{}0$. In turn, such a behavior implies that the implied volatilities of a L\'evy model explodes as $\tau\to{}0$. The exact first order asymptotic behavior of the implied volatility close to maturity was independently obtained in \cite{FigForde:2010} and \cite{Tankov}, while the former paper also gives the second order asymptotic behavior. There are very few articles that consider simultaneously stochastic volatility and jumps in the model. One
such work is \cite{CW03} which obtains, partially via heuristic
arguments, the first order asymptotic behavior of an It\^o semimartingale with jumps. Concretely, ATM option prices of pure-jump models of bounded variation decrease at the rate  $O(\tau)$, while they are just  $O(\sqrt{\tau})$ under the presence of a Brownian component. By considering a stable pure-jump component, \cite{CW03} also shows that, in general, the rate could be $O(\tau^{\beta})$, for some $\beta\in(0,1)$. For OTM options, it is also argued that the first order behavior is $O(e^{-c/\tau})$ in the purely-continuous case, while it becomes $O(\tau)$ under the presence of jumps. Very recently, \cite{Muhle} formally shows that the leading term of ATM option prices is of order $\sqrt{T}$ for a relatively general class of purely-continuous It\^o models, while for a more general type of It\^o processes with $\alpha$-stable-like small jumps, the leading term is $O(\tau^{1/\alpha})$ (see also \cite[Proposition 4.2]{FigForde:2010} and \cite[Proposition 5]{Tankov} for related results in exponential L\'evy models).
Fractional expansions are also obtained for the distribution functions of some L\'evy processes in \cite{Marchal}.

In this article, we consider a jump diffusion model by
combining a stochastic volatility model with a
pure-jump L\'evy process of infinite jump activity. More precisely, 
{\Red we consider a market consisting of a risk-free asset with constant interest rate $r\geq{}0$ and a risky asset with price process $(S_{t
})_{t\geq{}0}$ defined on a complete probability space $(\Omega,\calF,\bbq)$  equipped with a filtration
$(\calF_{t})_{t\geq{}0}$ satisfying the usual conditions. We assume that, under $\bbq$, the log-return process 
\begin{equation}\label{Eq:MIntZ}
	Z_{t}:=\log \frac{e^{-rt}S_{t}}{S_{0}},
\end{equation}
follows the jump diffusion model} 
\begin{eqnarray}\label{jumpdif}
    & Z_{t}=U_{t}+ X_{t},\quad
    dU_{t}= \mu(Y_{t}) dt +\sigma(Y_{t}) d W^{(1)}_{t},\\
    &d Y_{t}= \alpha (Y_{t}) dt +\gamma(Y_{t}) d
    W^{(2)}_{t}\label{jumpdifb}
\end{eqnarray}
with $U_{0}=X_{0}=0$, $Y_{0}=y_{0}\in\bbr$. Here, $W^{(1)}$ and
$W^{(2)}$ are Wiener processes adapted to $(\calF_{t})$, $X$ is an
independent $(\calF_{t})$-adapted pure-jump L\'{e}vy process with
triplet $(b,0,\nu)$, and $\sigma$, $\gamma$, $\mu$ and $\alpha$ are
deterministic functions such that (\ref{jumpdif})-(\ref{jumpdifb})
admits a solution. We also assume that $\int_{z>1} e^{z}\nu(dz)<\infty$ and
\begin{equation}\label{con30}
b=-\int_{\bbr}\left(e^{z}-1-z{\bf
1}_{|z|\leq{}1}\right)\nu(dz),\quad\text{and}\quad
\mu(y)=-\frac{1}{2}\sigma^{2}(y).
\end{equation}
{\Red In particular, note that $\bbq$ {\Blue is assumed to be} a martingale measure (i.e. $(e^{-rt}S_{t})_{t\geq{}0}$ is a
$\bbq$-martingale). The model (\ref{jumpdif})-(\ref{jumpdifb}) is appealing in practice since it incorporates jumps in the asset price process as well as volatility clustering and leverage effects.} {\DRed The process $(Y_{t})_{t\geq{}0}$ is the underlying volatility factor driving the stochastic volatility of the process.}

For $z\neq 0$ and $t>0$, let
\begin{equation}\label{KFP}
    G_{t}(z):=\bbe\left(e^{z+Z_{t}}-1\right)_{+},
\end{equation}
where $\bbe$ denotes, from now on, the expectation 
{\Red under $\bbq$}.
We will show that, under mild conditions, the following
small-time expansions for $G_{t}$ hold true:
\begin{align}\label{FExp1} G_{t}(z)
=\sum_{j=0}^{n}\b_{j}(z)\, \frac{t^{j}}{j!}
    +O(t^{n+1}),
\end{align}
for each $n\geq{}0$ and certain functions $\b_{j}$. Note that the
time-$t$ price of a European call option with strike $K$, which is
not at-the-money, {\DRed when the spot stock price and the underlying volatility factor have {\Green respective} values $s$ and $y_{0}$}, can then be expressed as
\begin{equation}
\label{Callopt}C(t,s):=e^{-r(T-t)}\bbe\left(\left.\left(S_{T}-K\right)_{+}\right|{\DRed \calF_{t}}\right)=Ke^{-r\tau}
G_{\tau}(\ln s-\ln K),
\end{equation}
where $\tau=T-t$ and $s\neq K$. Hence, the small-time behavior of (\ref{KFP}) leads to close-to-expiry approximations for the price of an arbitrary
not-at-the-money call option as a polynomial expansion in time-to-maturity. {\DRed From (\ref{Eq:MIntZ}), note also that the expression 
\begin{equation}\label{Eq:AntIntCOP}
	S_{0}e^{-z}G_{t}(z)=e^{-r t}\bbe\left(S_{t}-S_{0}e^{rt-z}\right)_{+}
\end{equation} 
can be interpreted as the time-$t$ call option price with log-moneyness\footnote{{\DRed As usual, log-moneyness is defined as the logarithm of the ratio between the strike $K$ and the forward price $S_{0}e^{rt}$.}} $\kappa:=-z$.}

{\Red Small-time option price asymptotics for the model {\Green (\ref{jumpdif})-(\ref{jumpdifb})} were also considered in \cite{MedSca07}, but only for finite-jump activity L\'evy processes $X$. Another important difference is that, here, we focus on small-time asymptotics for fixed log-moneyness $z\neq 0$}{\Green ,} while \cite{MedSca07} considers approximations where $z$ is simultaneously made to converge to $0$ as $t\to{}0$ ({\Red small-time and small-log-moneyness} asymptotics). Let us also {\Red remark} that \cite{MedSca07} assumes throughout (and without proof) that the implied volatility surface satisfies an expansion in powers of $z$ and $t$, which{\Green , in our opinion,} is a rather strong assumption. {\Red Another related work is \cite{Muhle}, where the first order {\Blue small-time small-log-moneyness} asymptotics {\Green is} considered for a class of It\^o semimartingales with non-zero continuous component (Theorem 3.1 therein). For the CGMY and related models}, higher order approximations for ATM option prices are {\DRed obtained in} \cite{FigGongHou:2011b}.

Our method of proof is built on a type of iterated Dynkin formula of the form
\begin{equation}\label{IDynkin0}
    \bbe g(X_{t})=g(0)
    +\sum_{k=1}^{n} \frac{t^{k}}{k!}
    L^{k}g(0)+
    \frac{t^{n+1}}{n!}
    \int_{0}^{1} (1-\alpha)^{n}\bbe\left(
    L^{n+1} g(X_{\alpha t})\right)d\alpha,
\end{equation}
where $g$ is a sufficiently smooth function and $L$ is the infinitesimal generator of the L\'evy process $X$ {\Red given by 
\begin{equation}\label{InfGenLevy0}
	Lg(x):=\frac{\sigma^{2}}{2}g''(x)+b g'(x)+\int \left(g(z+x)-g(x)-zg'(x){\bf 1}_{\{|z|\leq{}1\}}\right) \nu(dz),
\end{equation}
for $g\in C^{2}_{b}$ and a L\'evy triplet $(b,\sigma^{2},\nu)$ (see Section \ref{Sect:Not} below for {\Green terminology}).}
The main complication with option call prices arises from the lack of smoothness of the payoff function $g_{z}(x)=(e^{z+x}-1)_{+}$. In order to ``regularize" the payoff function $g$, we follow a two step procedure. First, we decompose the L\'evy process into a compound Poisson process with a smooth jump density vanishing in a neighborhood of the origin and an independent L\'evy process with small jumps. Then, we condition {\PaleGrey $\bbe g(X_{t})$} on the number of jumps of the compound Poisson component of $X$ and apply {\PaleGrey the} Dynkin's formula on each of the resulting terms. Contrary to the approaches in \cite{FigForde:2010} and \cite{Tankov}, where the special form of the payoff function $g_{z}(x)=(e^{z+x}-1)_{+}$ plays a key role, our approach can handle more general payoff functions of the form
\begin{equation}\label{GPFTC}
    g_{z}(x)=\varphi(x) {\bf 1}_{\{x\geq{}z\}},
\end{equation}
for a smooth function $\varphi$ and positive $z$. By considering the particular case $\varphi(x)\equiv 1$, we generalize the distribution expansions in \cite{FigHou:2008} to our jump-diffusion setting.
Let us emphasize that the process $Z$ in (\ref{jumpdif}) is not truly a Markov model but rather a hidden Markov model. This fact causes some technical subtleties that require a careful analysis of the iterated infinitesimal generator of  the bivariate Markov process $\{(U_{t},Y_{t})\}_{t\geq{}0}$.

As an equally relevant second contribution of our paper, we also obtain polynomial expansions for the transition densities
$f_{t}$ of the L\'evy process, under conditions
involving only the L\'evy
density of $X$. This is an important improvement to our former
results in \cite{FigHou:2008}, where a uniform boundedness
condition on all the derivatives of $f_{t}$ away from the origin was imposed.
Expansions for the transition densities of local volatility models (with possibly jumps but only of finite activity) have appeared before in the literature (e.g. see \cite{AitSahalia99}, \cite{AitSahalia08}, \cite{Yu}). Unlike our approach, the general idea in these papers consists of first proposing the general form of the expansion, and then choosing the coefficients so that either the backward or forward Kolmogorov equation is satisfied. The resulting coefficients typically involve iterated infinitesimal generators as in our expansions, even though our approximations are uniform away from the origin.

The paper is organized as follows. Section 2 contains some
preliminary results on L\'evy processes, which will be needed
throughout the paper. Section 3 establishes the small-time
expansions, of arbitrary polynomial order in $t$, for both the tail
distributions $\bbp(Z_{t}\geq z)$, $z>0$, and the call-option price
function $G_{t}(z)$, $z\neq 0$.  This section also justifies the validity of our results for payoff functions of the form (\ref{GPFTC}). Section 4 illustrates our approach by presenting the first few terms of those expansions. Interestingly enough, the first two
coefficients of the expansion of the general model coincide with the
first two coefficients of an exponential L\'evy model. Section 5
obtains the asymptotic behavior of the corresponding implied
volatility. Section 6 gives a small-time expansion for the transition density of a general
L\'evy process under rather mild conditions. The proofs of our main results are deferred to Appendices.

\section{Background and preliminary results}

\subsection{Notation}\label{Sect:Not}\hfill

Throughout this paper, $C^{n}$ {(or $C^{n}(\bbr)$)}, $n\geq 0$, is
the class of real valued functions, defined on $\bbr$, which have
continuous derivatives of order $0\leq k\leq n$, while $C_{b}^{n}\subset
C^{n}$ corresponds to those functions having bounded derivatives. In a
similar fashion, $C^{\infty}$ (or $C^{\infty}(\bbr)$) is the class of
real valued {\PaleGrey functions}, defined on $\bbr$, which have continuous
derivatives of any order $k\geq 0$, while
$C_{b}^{\infty}(\bbr)\subset C^{\infty}$ are again the ones having
bounded derivatives. Sometimes $\bbr$ will be replaced by
{\DRed $\bbr_{0}:=\bbr\setminus\{0\}$} or $\bbr^{k}$ when the functions are defined on
these spaces.

Throughout this section, $\X$ {\Red denotes} a L\'evy process with
triplet $(b,\sigma^{2},\nu)$ defined on
$\left(\Omega,\calF,(\calF_{t})_{t\geq{}0},\bbq\right)$. Let us
write $X$ in terms of its L\'evy-It\^o decomposition:
\[
    \X_{t}=bt+\sigma W_{t}+ \int_{0}^{t}\int_{|\z|>1} z \mu(ds,dz)+\int_{0}^{t}\int_{|z|\leq{}1} z\bar\mu(ds,dz),
\]
where $W$ is a Wiener process and $\mu$ is an independent Poisson
measure on $\bbr_{+}\times\bbr\backslash\{0\}$ with mean measure
$dt\nu(dz)$ and compensator $\bar\mu$.  For each $\varepsilon>0$,
let $c_{\varepsilon}\in C^{\infty}$ be a symmetric truncation
function such that ${\bf 1}_{[-\varepsilon/2,\varepsilon/2]}(z) \leq
c_{\varepsilon}(z) \leq {\bf 1}_{[-\varepsilon,\varepsilon]}(z)$ and {\Green let $\bar{c}_{\varepsilon}:=1-c_{\varepsilon}$. Next, for $0<\varepsilon<1$, consider two independent L\'evy processes, denoted by $\bar{X}^{\varepsilon}$ and ${X}^{\varepsilon}$, with respective L\'evy {\PaleGrey triplets} $(0,0,\bar{c}_{\varepsilon}(z)\nu(dz))$ and $(b_{\varepsilon},\sigma^{2},c_{\varepsilon}(z)\nu(dz))$, where 
\[
	 b_{\varepsilon}:=b-\int_{|z|\leq{}1}z \bar{c}_{\varepsilon}(z)\nu(dz).
\]
Note that $(X_{t})_{t\geq{}0}$ has the same law as $(X^{\varepsilon}_{t}+\bar{X}^{\varepsilon}_{t})_{t\geq{}0}$ and, therefore, without loss of generality, we assume hereafter that $X=X^{\varepsilon}+\bar{X}^{\varepsilon}$. Note also that}
$\bar{X}^{\varepsilon}$ is a
compound Poisson process with intensity of jumps
$\lambda_{\varepsilon}:=\int \bar{c}_{\varepsilon}(z)\nu(dz)$
and jumps distribution
$ \bar{c}_{\varepsilon}(z)\nu(dz)/\lambda_{\varepsilon}$. Throughout, $(\xi_{i}^{\varepsilon})_{i\geq{}1}$ stands for the jumps of the process $\bar{X}^{\varepsilon}$.
{\Green The} process  $X^{\varepsilon}$ has infinitesimal generator $L_{\varepsilon}$ given by
\begin{equation}\label{IG}
    {\DRed L_{\varepsilon} g(y)=
    b_{\varepsilon} g'(y)+\frac{\sigma^{2}}{2}g''(y)+\calI_{\varepsilon}g(y)},
\end{equation}
for $g\in C_{b}^{2}$, where {\DRed 
\begin{align*}
    \calI_{\varepsilon}g(y)&:=\int_{\bbr_{0}}
    \left\{g(y+z)-g(y)-zg'(y){\bf 1}_{|z|\leq{}1}\right\} c_{\varepsilon}(z)\nu(dz)\\
    &\,=\int_{\bbr_{0}}\int_{0}^{1} g''(y+\beta w)(1-\beta)d\beta w^{2}c_{\varepsilon}(w)\nu(dw).
\end{align*}
The} following tail estimate for $X^{\varepsilon}$ {\Red is also} used in
the sequel:
\begin{eqnarray}
\label{TailEstLevy} \bbp(|X_{t}^{\varepsilon}|\geq z)\leq
t^{az}\exp(az_{0}\ln z_{0})\exp(az-az\ln z),
\end{eqnarray}
where $a\in(0,\varepsilon^{-1})$, and $t,z>0$ satisfy $t<z/z_{0}$
for some $z_{0}$ depending only on $a$ (see \cite[Section
2.6]{Sato:1999}, \cite[Lemma 3.2]{Ruschendorf} and \cite[Remark
3.1]{FigHou:2008} for proofs and extensions).

Throughout the paper, we also make the following standing
assumptions:
\begin{align} \label{conB2}
    &\nu(dz)\!=\!s(z)dz,\,s\!\in\! C^{\infty}\!\left(\bbr\!\setminus\!\{0\}\right)\text{ and }\gamma_{k,\delta}\!:=\!\sup_{|z|>\delta}|s^{(k)}(z)|\!<\!\infty,  \text{ for all } \delta>0,\\
\label{conB1}&\int_{|z|>1}e^{c|z|}\nu(dz)<\infty,\text{ for some } c>2.
\end{align}
Finally, the following {\PaleGrey notation are also in use:}
\begin{align*}
    &{s}_{\varepsilon}:=c_{\varepsilon} s,\quad
    \bar{s}_{\varepsilon}:=(1- c_{\varepsilon})s, \quad
    L^{0} g =g,\quad L^{k+1} g =L(L^{k}g), \quad (k\geq{}0), \\
   &\bar{s}_{\varepsilon}^{*0}*g = g, \quad \bar{s}_{\varepsilon}^{*1}=\bar{s}_{\varepsilon}, \quad  \bar{s}_{\varepsilon}^{*k}(x)=\int  \bar{s}_{\varepsilon}^{*(k-1)}(x-u)\bar{s}_{\varepsilon}(u)du,\quad(k\geq 2).
\end{align*}

\subsection{Dynkin's formula for smooth subexponential functions}\hfill

Let us recall that taking expectations in the well-known Dynkin {\PaleGrey formula} gives:
\begin{equation}\label{Dynkin}
    \bbe g(X_{t}) =g(0)+\int_{0}^{t} \bbe\left(Lg(X_{u})\right)d u
    =g(0)+t\int_{0}^{1} \bbe\left(Lg(X_{\alpha t})\right)d\alpha,
\end{equation}
valid if $g\in C^{2}_{b}$. Iterating (\ref{Dynkin}), one obtains the
following expansion for $g\in C_{b}^{2n+2}$ (e.g.,
see~\cite[Proposition 9]{HPAS:1998}):
\begin{equation}\label{IDynkin}
    \bbe g(X_{t})=g(0)
    +\sum_{k=1}^{n} \frac{t^{k}}{k!}
    L^{k}g(0)+
    \frac{t^{n+1}}{n!}
    \int_{0}^{1} (1-\alpha)^{n}\bbe\left(
    L^{n+1} g(X_{\alpha t})\right)d\alpha.
\end{equation}
{\DRed Expansions of the form (\ref{IDynkin}) are then called iterated-type Dynkin expansions.} For our purposes, it will be useful to extend (\ref{IDynkin}) to
subexponential functions. The proof of the following result can be found in Appendix \ref{ApA}.
\begin{lem}\label{prop:IDynkin}
Let $\nu$ satisfy (\ref{conB1}), and let $g\in C^{2n+2}$ be such
that
\begin{equation}\label{NCID}
    \limsup_{|y|\to\infty} e^{-\frac{c}{2}|y|}|g^{(i)}(y)|<\infty,
\end{equation}
for any $0\leq i\leq 2n+2$. Then, (\ref{IDynkin}) holds true.
\end{lem}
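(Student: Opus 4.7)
The plan is to bootstrap from the bounded-derivative version of (\ref{IDynkin}) to the subexponential regime by truncating $g$ with a smooth cutoff and passing to the limit via dominated convergence. The argument splits into three pieces: (i) a priori growth estimates on the iterated generators $L^{k}g$, (ii) finiteness of the relevant exponential moments of $X_{t}$, and (iii) a uniform-in-$R$ control of the iterated generators of the truncated function.

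I would first establish by induction on $k=0,\dots,n+1$ that, under (\ref{conB1}) and (\ref{NCID}), the function $L^{k}g$ belongs to $C^{2n+2-2k}$ and satisfies $|(L^{k}g)^{(i)}(y)|\leq C_{k,i}\,e^{c|y|/2}$ for each $0\leq i\leq 2n+2-2k$. The local part of $L$ (involving $g'$ and $g''$) trivially preserves both smoothness and exponential growth. For the integral part $\calI h$ one splits the integration region: on $\{|z|\leq 1\}$, a second-order Taylor expansion combined with $\int_{|z|\leq 1}z^{2}\nu(dz)<\infty$ and the bound on $h''$ gives a contribution of order $e^{c|y|/2}$; on $\{|z|>1\}$, the majorant $|h(y+z)|\leq Ce^{c|y|/2}e^{c|z|/2}$ together with (\ref{conB1}) (using $c/2<c$) yields an integrable tail. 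Differentiation under the integral sign is justified by the same estimates applied to $h^{(i)}$; this is where the assumption (\ref{NCID}) up to order $2n+2$ is consumed, since $L^{n+1}g$ ultimately involves $g^{(2n+2)}$. Combined with the standard fact that (\ref{conB1}) implies $\bbe\bigl[e^{c|X_{t}|/2}\bigr]<\infty$, these growth bounds ensure $\bbe|L^{k}g(X_{s})|<\infty$ for every $0\leq k\leq n+1$ and $s\in[0,t]$, so both sides of (\ref{IDynkin}) are well-defined.

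Next, I would pick a smooth cutoff $\phi_{R}\in C^{\infty}(\bbr)$ with $\phi_{R}\equiv 1$ on $[-R,R]$, $\phi_{R}\equiv 0$ off $[-R-1,R+1]$, and $\sup_{R}\|\phi_{R}^{(i)}\|_{\infty}<\infty$ for every $i\geq 0$, so that $g_{R}:=g\phi_{R}\in C_{b}^{2n+2}$. The classical formula (\ref{IDynkin}) applies directly to $g_{R}$, and it remains to send $R\to\infty$. The trivial identity $g_{R}(0)=g(0)$ holds for $R\geq 1$, while $L^{k}g_{R}(0)\to L^{k}g(0)$ by dominated convergence applied to the integral representation of $L^{k}$. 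Similarly, $\bbe\,g_{R}(X_{t})\to\bbe\,g(X_{t})$ and $\bbe\,L^{n+1}g_{R}(X_{\alpha t})\to\bbe\,L^{n+1}g(X_{\alpha t})$ follow from dominated convergence with common majorant $Ce^{c|X_{\alpha t}|/2}$, whose expectation is finite by the moment bound above.

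The main obstacle is securing a uniform bound $|L^{n+1}g_{R}(y)|\leq\widetilde{C}\,e^{c|y|/2}$ with $\widetilde{C}$ independent of $R$. This is handled by replaying the induction of the previous step with $g_{R}$ in place of $g$: since $\phi_{R}$ and all its derivatives are uniformly bounded in $R$, Leibniz's rule yields $|g_{R}^{(i)}(y)|\leq C_{i}e^{c|y|/2}$ with $C_{i}$ independent of $R$, and the constants produced at each subsequent application of $L$ depend only on these bounds and on the finite integrals $\int_{|z|\leq 1}z^{2}\nu(dz)$ and $\int_{|z|>1}e^{c|z|/2}\nu(dz)$. With this uniformity in place, the dominated convergence step closes the argument and recovers (\ref{IDynkin}) for $g$.
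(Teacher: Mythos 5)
Your proof is correct, but it follows a genuinely different route from the paper's. The paper applies It\^o's formula directly to the subexponential $g(X_{t})$, verifies that the Brownian and compensated-Poisson stochastic integrals are true martingales via the exponential moment bound $\bbe\, e^{c|X_{u}|}<\infty$ coming from (\ref{conB1}), takes expectations to obtain the $n=0$ formula (\ref{Dynkin}), and then iterates after checking that $Lg$ inherits the subexponential control (\ref{LNCID}). You instead bootstrap from the already-known $C_{b}^{2n+2}$ version of (\ref{IDynkin}) by smooth truncation and dominated convergence, so you never have to re-verify the martingale property for unbounded integrands; the price is the truncation bookkeeping, in particular the uniform-in-$R$ estimate $|L^{n+1}g_{R}(y)|\leq \widetilde{C}e^{c|y|/2}$, which you correctly identify as the crux and which holds because $\phi_{R}$ and its derivatives are bounded uniformly in $R$. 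Note that the analytic core of both arguments is the same: the stability of the class $\{h:\ |h^{(i)}(y)|\leq C e^{c|y|/2}\}$ under $L$, which is exactly the paper's (\ref{LNCID}) and your inductive step (i). Two small points you gloss over but which are routine: you should normalize $0\leq\phi_{R}\leq 1$ so that $|g_{R}|\leq|g|$ gives the majorant for $\bbe\,g_{R}(X_{t})\to\bbe\,g(X_{t})$, and the convergence $L^{k}g_{R}(0)\to L^{k}g(0)$ requires pointwise convergence of $L^{k-1}g_{R}$ and its first two derivatives on all of $\bbr$ (not just at $0$), which follows from the same dominated-convergence argument applied to the derivative formulas.
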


Below, let 
\begin{align*}
    & b_{0}:=-\int_{\bbr}\bar{c}_{\varepsilon}(u) \nu(du),
    \quad
    {b_{1}:=b-\int_{\bbr}u ({c}_{\varepsilon}(u)-{\bf 1}_{|u|\leq{}1})\nu(du)},\\  &b_{2}:=\sigma^{2}/2, \quad{b_{3}:=\frac{1}{2}\int_{\bbr} u^{2} c_{\varepsilon}(u) \nu(du), \;\text{ and }\;
     b_{4}:= \int_{\bbr} \bar{c}_{\varepsilon}(u) \nu(du)}.
\end{align*}
Note that all these constants depend on $\varepsilon>0$, but
this is not explicitly indicated for the ease of notation.
In order to work with the iterated infinitesimal generator $L^{k}$
appearing in (\ref{IDynkin}), the forthcoming representation will
turn out to be useful (see \cite[Lemma 4.1]{FigHou:2008} for its
verification\footnote{{\Green For convenience we switched} the role
of $b_{3}$ and $b_{4}$.}).
\begin{lem}\label{RIGen}
    Let
    ${\bf \mathcal{K}}_{k}=\{{\bf k}=(k_{0},\dots,k_{4})\in\bbn^{5}:
    k_{0}+\dots+k_{4}=k\}$ and for ${\bf k}\in\mathcal{K}_{k}$,
    let $\ell_{\bf k}:=k_{1}+2k_{2}+2k_{3}$. Then, for any $k\geq{}1$ and $\varepsilon>0$,
\begin{equation}\label{DcmL2}
    L^{k}g (x)=\sum_{{\bf k}\in\mathcal{K}_{k}} \prod_{i=0}^{4} b_{i}^{k_{i}}
        \binom{k}{\bf k}B_{_{{\bf k},\varepsilon}}g (x),
\end{equation}
where
\begin{align*}
    B_{_{{\bf k},\varepsilon}}g(x)
    :=\left\{
    \begin{array}{ll}\int g^{({\ell_{\bf k}})}
    \left(x+\displaystyle{\sum_{j=1}^{k_{3}}}\beta_{j}w_{j}
        +\displaystyle{
    \sum_{i=1}^{k_{4}}}u_{i}
    \right)d\pi_{_{{\bf k},\varepsilon}},&\text{ if }k_{3}+k_{4}>0,\\
    g^{({\ell_{\bf k}})}
    \left(x\right), &\text{ if }k_{3}=k_{4}=0,\end{array}
    \right.
\end{align*}
and {\Green where} the above integral is with respect to the probability measure
\begin{align*}
    d\pi_{_{{\bf k},\varepsilon}}
    =\prod_{j=1}^{k_{3}}\frac{1}{b_{3}} c_{\varepsilon}(w_{j})w_{j}^{2}\nu(dw_{j})
(1-\beta_{j})d\beta_{j}\prod_{i=1}^{k_{4}}\frac{1}{b_{4}} \bar{c}_{\varepsilon}(u_{i})\nu(du_{i}),
\end{align*}
on $\bbr^{k_{3}}\times [0,1]^{k_{3}}\times \bbr^{k_{4}}$ (under the standard  conventions that $0/0=1$ and $\prod_{i=1}^{0}=1$).
\end{lem}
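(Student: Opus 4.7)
The plan is to decompose $L$ into five pairwise commuting elementary operators and then expand $L^{k}$ by the multinomial theorem (equivalently, by induction on $k$). First I would partition the L\'evy integral in $L$ using $1=c_{\varepsilon}+\bar c_{\varepsilon}$. Since $\operatorname{supp}c_{\varepsilon}\subset[-1,1]$, the compensator $zg'(x)\mathbf{1}_{|z|\leq 1}$ absorbs $c_\varepsilon$ cleanly, and the Taylor remainder identity $g(x+z)-g(x)-zg'(x)=z^{2}\int_{0}^{1}g''(x+\beta z)(1-\beta)d\beta$ turns the $c_{\varepsilon}$-part of the jump integral into $\calI_{\varepsilon}g(x)=b_{3}\,T_{3}g(x)$, where
\[
    T_{3}g(x):=\int\!\!\int_{0}^{1}g''(x+\beta w)\,\tfrac{1}{b_{3}}\,w^{2}c_{\varepsilon}(w)\nu(dw)(1-\beta)d\beta.
\]
The $\bar c_{\varepsilon}$-part gives $\int (g(x+u)-g(x))\bar c_{\varepsilon}(u)\nu(du)=b_{4}T_{4}g(x)+b_{0}g(x)$ with $T_{4}g(x):=\int g(x+u)\bar c_{\varepsilon}(u)\nu(du)/b_{4}$. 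Collecting the drift contributions into the first-order coefficient $b_{1}$ yields the five-term decomposition
\[
    Lg(x)=b_{0}\,g(x)+b_{1}\,g'(x)+b_{2}\,g''(x)+b_{3}\,T_{3}g(x)+b_{4}\,T_{4}g(x),
\]
which is exactly the $k=1$ instance of (\ref{DcmL2}).

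Next I would verify that $I,D,D^{2},T_{3},T_{4}$ commute pairwise. Differentiation commutes with itself, and it passes under the integrals defining $T_{3},T_{4}$ by dominated convergence, since $w^{2}c_{\varepsilon}(w)\nu(dw)$ and $\bar c_{\varepsilon}(u)\nu(du)$ are finite measures and $g$ has bounded derivatives of the relevant orders; $T_{3}$ and $T_{4}$ commute with themselves and with each other by Fubini. The multinomial theorem for commuting operators then yields
\[
    L^{k}=\sum_{\mathbf{k}\in\mathcal{K}_{k}}\binom{k}{\mathbf{k}}\prod_{i=0}^{4}b_{i}^{k_{i}}\,D^{k_{1}+2k_{2}}T_{3}^{k_{3}}T_{4}^{k_{4}}.
\]
Unrolling the composition, $D^{k_{1}+2k_{2}}T_{3}^{k_{3}}T_{4}^{k_{4}}g(x)$ equals $g^{(\ell_{\mathbf{k}})}\bigl(x+\sum_{j=1}^{k_{3}}\beta_{j}w_{j}+\sum_{i=1}^{k_{4}}u_{i}\bigr)$ integrated against the product probability measure $d\pi_{\mathbf{k},\varepsilon}$, with $\ell_{\mathbf{k}}=k_{1}+2k_{2}+2k_{3}$. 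This is precisely $B_{\mathbf{k},\varepsilon}g(x)$, establishing (\ref{DcmL2}).

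The main obstacle is purely bookkeeping: matching the multi-indexed composition $\prod_{i}T_{i}^{k_{i}}$ to the integrand of $B_{\mathbf{k},\varepsilon}$ and repeatedly justifying the interchange of differentiation and integration and the applications of Fubini. For $g$ sufficiently smooth with bounded derivatives (the natural setting in which $L^{k}g$ is defined), these interchanges are routine because all the underlying measures are finite. An alternative and perhaps cleaner route is induction on $k$: applying the five-term decomposition of $L$ to each $B_{\mathbf{k},\varepsilon}g$ produces five terms indexed by $\mathbf{k}+\mathbf{e}_{i}$, and Pascal's multinomial identity $\sum_{i=0}^{4}\binom{k}{\mathbf{k}-\mathbf{e}_{i}}=\binom{k+1}{\mathbf{k}}$ assembles the coefficients for $L^{k+1}$. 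This is essentially the argument in \cite[Lemma 4.1]{FigHou:2008}, with the roles of $b_{3}$ and $b_{4}$ swapped as noted in the footnote.
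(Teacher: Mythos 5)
Your proof is correct and follows essentially the same route as the paper, which does not spell out an argument but defers the verification to \cite[Lemma 4.1]{FigHou:2008}: splitting $L$ into the five pairwise commuting operators $I$, $D$, $D^{2}$, $T_{3}$, $T_{4}$ via $1=c_{\varepsilon}+\bar c_{\varepsilon}$ and the Taylor-remainder identity, then applying the multinomial theorem (equivalently, induction with Pascal's multinomial identity) is exactly the cited argument, with the $b_{3}$/$b_{4}$ roles swapped as the footnote indicates. Your unrolling of $D^{k_{1}+2k_{2}}T_{3}^{k_{3}}T_{4}^{k_{4}}$ into the integral $B_{\mathbf{k},\varepsilon}g$ with derivative order $\ell_{\mathbf{k}}=k_{1}+2k_{2}+2k_{3}$ and the justification of the interchanges against the finite measures $w^{2}c_{\varepsilon}(w)\nu(dw)$ and $\bar c_{\varepsilon}(u)\nu(du)$ are exactly what is needed.
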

\begin{rem}
   The expansion (\ref{DcmL2}) holds true for (possibly unbounded) functions $g\in C^{2k+2}$ satisfying (\ref{NCID}) for any $0\leq{}i\leq{}2k+2$.
\end{rem}

\section{Small-time expansions for the tail distributions and option prices}

In this section, we derive the small-time expansions for both the
tail distribution $\bbp\left(Z_{t}\geq z\right)$, $z>0$, and for the
call-option price function $\bbe\left(e^{z+Z_{t}}-1\right)_{+}$,
$z\neq 0$. With an approach similar to that in \cite[Theorem
3.2]{FigHou:2008}, the idea is to apply the following general moment
expansion (easily obtained by conditioning on the number of jumps of
the process $\bar{X}_{t}^{\varepsilon}$ introduced {\Green in Section \ref{Sect:Not}}{\PaleGrey )}:
\begin{align}\label{T1b}
    \bbe f(Z_{t})&
    = e^{-\lambda_{\varepsilon}t}\bbe f\left(U_{t}+X^{\varepsilon}_{t}\right)
   +
    e^{-\lambda_{\varepsilon}t}\sum_{k=n+1}^{\infty}\frac{(\lambda_{\varepsilon}t)^{k}}{k!}
    \bbe f\left(U_{t}+X^{\varepsilon}_{t}+\sum_{i=1}^{k}\xi_{i}^{\varepsilon}\right)\\
    \label{T2b}
    &\quad+
    e^{-\lambda_{\varepsilon}t}\sum_{k=1}^{n}\frac{(\lambda_{\varepsilon}t)^{k}}{k!}
    \bbe f\left(U_{t}+X^{\varepsilon}_{t}+\sum_{i=1}^{k}\xi_{i}^{\varepsilon}\right),
\end{align}
where {\DRed $\{\xi^{\varepsilon}_{i}\}_{i\geq{}1}$} are the jumps of the process
$\bar{X}^{\varepsilon}$. We shall take $f(u)=f_{z}(u):={\bf
1}_{\{u\geq z\}}$ in order to obtain the expansion of the transition
distribution and $f(u)=f_{z}(u):=(e^{z+u}-1)_{+}$ in order to obtain
the expansion of the call-option price. To work out the terms in
(\ref{T2b}), we use the iterated formula (\ref{IDynkin}), while to
estimate the terms in (\ref{T1b}), we assume that the underlying
stochastic volatility model $U$ satisfies a {\DRed short-time ``large
deviation principle" of the form}:
\begin{eqnarray}\label{tailU0}
    \lim_{t\rightarrow 0}t\ln\bbp(U_{t}>u)=-\frac{1}{2}d(u)^{2},\quad (u>0),
\end{eqnarray}
where $d(u)$ is a strictly positive rate function. In Section
\ref{CndLD} we review conditions for (\ref{tailU0}) to hold.

{\DRed The expansions provided in {\Green the sequel} will hold uniformly outside a neighborhood of the origin. Concretely, for fixed $z_{0}>0$ and $\varepsilon>0$, the term $O_{\epsilon,z_{0}}(t^{j})$ {\Green denotes a quantity, depending on $z$, $\varepsilon$, and $t$,} such that 
\begin{equation}\label{FIntSpecONt}
    \sup_{0<t\leq t_{0}}\sup_{|z|\geq{}z_{0}}t^{-j}|O_{\varepsilon,z_{0}}(t^{j})|<\infty,
\end{equation}
for some $t_{0}>0$, small enough, depending {\Green itself} on $\varepsilon$ and $z_{0}$.
}

\subsection{Expansions for the tail distributions}\hfill

We first treat the case $f_{z}(u):={\bf 1}_{\{u\geq z\}}$. The following expansion for the tail distributions of $Z$ (whose proof
can be found in Appendix \ref{ApB}) holds true{\PaleGrey .}
\begin{thm}\label{jumpdifdist}
Let $z_{0}>0$, $n\geq 1$, and $0<\varepsilon<z_{0}/(n+1)\wedge 1$.
Let the dynamics of $Z$ be given by (\ref{jumpdif}) and (\ref{jumpdifb}), and the
conditions (\ref{conB2})-(\ref{conB1}) and (\ref{tailU0}) be
satisfied. Then, there exists $t_{0}>0$ such that, for any $z\geq
z_{0}$ and $0<t<t_{0}$,
\begin{eqnarray}\label{jumdifexpan1}
\bbp(Z_{t}\geq z)=e^{-\lambda_{\varepsilon} t}\sum_{j=1}^{n}\widehat{A}_{j,t}(z)\frac{t^{j}}{j!}+O_{\varepsilon,
z_{0}}(t^{n+1}),
\end{eqnarray}
where
\begin{equation}
    \widehat{A}_{j,t}(z):=\sum_{k=1}^{j}
    \binom{j}{k}\bbe\left( \left(L_{\varepsilon}^{j-k} \widehat{f}_{k,z}\right)(U_{t})\right),\nonumber
\end{equation}
with $\widehat{f}_{k,z}(y):=\int_{z-y}^{\infty} \bar{s}^{*
k}_{\varepsilon} (u) du$.
\end{thm}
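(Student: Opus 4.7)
My plan is to prove (\ref{jumdifexpan1}) by applying the iterated Dynkin formula of Lemma \ref{prop:IDynkin} after first regularizing the discontinuous payoff $f_{z}(u):={\bf 1}_{\{u\geq z\}}$ via the Poisson-conditioning decomposition (\ref{T1b})-(\ref{T2b}) (taken with $f=f_{z}$). The tail series over $k\geq n+1$ is automatically $O(t^{n+1})$ since every probability inside is bounded by one. For the $k=0$ term $e^{-\lambda_{\varepsilon}t}\bbp(U_{t}+X_{t}^{\varepsilon}\geq z)$, I would use the inclusion $\{U_{t}+X_{t}^{\varepsilon}\geq z\}\subset\{U_{t}\geq\eta\}\cup\{X_{t}^{\varepsilon}\geq z-\eta\}$ for a small fixed $\eta>0$: the first probability decays faster than any polynomial in $t$ by (\ref{tailU0}), while the L\'evy tail estimate (\ref{TailEstLevy}) bounds the second by $C\, t^{a(z-\eta)}$ for any $a\in(0,\varepsilon^{-1})$. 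The hypothesis $\varepsilon<z_{0}/(n+1)\wedge 1$ enters precisely here: it allows the choice of $\eta$ small and $a$ close to $\varepsilon^{-1}$ with $a(z_{0}-\eta)>n+1$, yielding $O_{\varepsilon,z_{0}}(t^{n+1})$ uniformly for $z\geq z_{0}$.

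For the main sum $1\leq k\leq n$ in (\ref{T2b}), I would exploit that $\sum_{i=1}^{k}\xi_{i}^{\varepsilon}$ is independent of $U_{t}+X_{t}^{\varepsilon}$ with density $\bar{s}_{\varepsilon}^{*k}/\lambda_{\varepsilon}^{k}$, so that conditioning on $U_{t}+X_{t}^{\varepsilon}$ gives
\[
    \bbp\!\left(U_{t}+X_{t}^{\varepsilon}+\sum_{i=1}^{k}\xi_{i}^{\varepsilon}\geq z\right)=\frac{1}{\lambda_{\varepsilon}^{k}}\,\bbe\,\widehat{f}_{k,z}(U_{t}+X_{t}^{\varepsilon}).
\]
The key point is that the singular indicator has now been replaced by the smooth function $\widehat{f}_{k,z}$. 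Exploiting the independence of $U$ and $X^{\varepsilon}$, I would then apply Lemma \ref{prop:IDynkin} to the L\'evy process $X^{\varepsilon}$ (its L\'evy measure $c_{\varepsilon}\nu$ is compactly supported in $[-\varepsilon,\varepsilon]\subset[-1,1]$, so (\ref{conB1}) holds trivially) with the shifted function $y\mapsto\widehat{f}_{k,z}(u+y)$ and Dynkin order $n-k$; taking an outer expectation in $u=U_{t}(\omega)$ and using the translation invariance of $L_{\varepsilon}$ gives
\[
    \bbe\,\widehat{f}_{k,z}(U_{t}+X_{t}^{\varepsilon})=\sum_{i=0}^{n-k}\frac{t^{i}}{i!}\,\bbe\!\left(L_{\varepsilon}^{i}\widehat{f}_{k,z}\right)\!(U_{t})+R_{k}(t,z),
\]
with $R_{k}(t,z)=O(t^{n-k+1})$. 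Multiplying by $t^{k}/k!$, summing over $k=1,\dots,n$, and reindexing via $j:=k+i$ collects the leading terms into $\sum_{j=1}^{n}(t^{j}/j!)\widehat{A}_{j,t}(z)$.

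Justifying the invocation of Lemma \ref{prop:IDynkin} requires $\widehat{f}_{k,z}\in C^{2(n-k+1)}(\bbr)$ (actually $C^{\infty}$ here) together with the subexponential bound (\ref{NCID}). Since $\bar{s}_{\varepsilon}=(1-c_{\varepsilon})s$ vanishes on $[-\varepsilon/2,\varepsilon/2]$ and $s\in C^{\infty}(\bbr_{0})$ by (\ref{conB2}), one has $\bar{s}_{\varepsilon}\in C^{\infty}(\bbr)$; hence so are all convolution powers $\bar{s}_{\varepsilon}^{*k}$, and consequently $\widehat{f}_{k,z}\in C^{\infty}(\bbr)$. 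For $i\geq 1$, $\widehat{f}_{k,z}^{(i)}(y)=(-1)^{i-1}(\bar{s}_{\varepsilon}^{*k})^{(i-1)}(z-y)$, which lies in $L^{\infty}(\bbr)$ by Young's inequality (using that $\bar{s}_{\varepsilon}^{(\ell)}\in L^{\infty}\cap L^{1}$ by (\ref{conB2}) and (\ref{conB1})), while $\widehat{f}_{k,z}\leq\lambda_{\varepsilon}^{k}$; since boundedness dominates (\ref{NCID}), the hypothesis is verified.

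The main obstacle is the uniform control in $z\geq z_{0}$ of the Dynkin remainder, which reduces to $\sup_{z\geq z_{0}}\|L_{\varepsilon}^{n-k+1}\widehat{f}_{k,z}\|_{\infty}<\infty$. Using the combinatorial representation of Lemma \ref{RIGen}, $L_{\varepsilon}^{n-k+1}\widehat{f}_{k,z}$ is a finite linear combination (with $\varepsilon$-dependent coefficients) of integrals against the probability measures $d\pi_{{\bf k},\varepsilon}$ of shifted derivatives $\widehat{f}_{k,z}^{(\ell_{{\bf k}})}$; each such derivative is, up to a sign, a translate of $(\bar{s}_{\varepsilon}^{*k})^{(\ell)}$, and these are uniformly bounded on $\bbr$ by the previous paragraph, \emph{independently of $z$}. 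Collecting the Dynkin remainders with the $k=0$ and $k\geq n+1$ contributions into a single $O_{\varepsilon,z_{0}}(t^{n+1})$ error then yields (\ref{jumdifexpan1}).
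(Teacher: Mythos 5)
Your proposal is correct and follows essentially the same route as the paper's proof in Appendix B: Poisson-conditioning via (\ref{T1b})--(\ref{T2b}), the large-deviation bound (\ref{tailU0}) together with the tail estimate (\ref{TailEstLevy}) for the $k=0$ term, the crude bound for the $k\geq n+1$ tail, and the iterated Dynkin formula applied to the smoothed functions $\widehat{f}_{k,z}$ with the remainder controlled uniformly in $z$ via Lemma \ref{RIGen} and the bounds $\sup_{y,z}|(\bar{s}_{\varepsilon}^{*k})^{(i)}(z-y)|<\infty$. The only (immaterial) differences are that you condition on $U_{t}$ first and apply Dynkin to the shifted function $\widehat{f}_{k,z}(u+\cdot)$, whereas the paper integrates over $U_{t}$ first to form $\breve{f}_{k,z,t}$ and applies Dynkin to that, and you split the $k=0$ event at a small $\eta$ rather than at $z/2$.
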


The expression (\ref{jumdifexpan1}) is not really satisfactory since
the coefficients $\widehat{A}_{j,t}$ are time-dependent and so the
asymptotic {\DRed behavior of the tail probability $\bbp(Z_{t}\geq{}z)$} as $t\to{}0$ {\DRed is} unclear. In order to obtain an
expansion of $\widehat{A}_{j,t}$, we can further obtain an iterated {\DRed Dynkin expansion for}  $\bbe g(U_{t},Y_{t})$ {\DRed and a suitable moment function $g$}. Indeed, assuming for simplicity
that $W^{(1)}$ and $W^{(2)}$ are independent, $(U,Y)$ is a Markov
process with infinitesimal generator
\begin{equation}\label{IGSV}
    \calL g(u,y) =\mu(y)\frac{\partial g}{\partial u}
    +\frac{\sigma^{2}(y)}{2}\frac{\partial^{2} g}{\partial u^{2}}+\alpha(y)\frac{\partial g}{\partial y}+\frac{\gamma^{2}(y)}{2}\frac{\partial^{2} g}{\partial y^{2}},
\end{equation}
for $g\in C^{2}_{b}$. It\^o's formula and induction imply that
\begin{equation}\label{IDynkinDiff}
    \bbe g(U_{t},Y_{t})=g(u_{0},y_{0})
    +\sum_{k=1}^{n} \frac{t^{k}}{k!}
    \calL^{k}g(u_{0},y_{0})+
    \frac{t^{n+1}}{n!}
    \int_{0}^{1} (1-\alpha)^{n}\bbe\left\{
    \calL^{n+1} g(U_{\alpha t},Y_{\alpha t})\right\}d\alpha,
\end{equation}
for any function $g$ such that $\calL^{k}g(u,y)$ is well-defined and
belongs to $C_{b}$ for {\DRed any} $0\leq k\leq{} 2n+2$. As in the case of the
infinitesimal generator of $X$, one can view the operator
{(\ref{IGSV})} as the sum of four operators. However, given that in
general those operators do not commute, it is not possible to write
a simple closed-form expression for $ \calL^{k}g(u_{0},y_{0})$ as is the case for $X$. Nevertheless, the following result gives a
recursive method to get such an expression when
$\mu(y)=-\sigma^{2}(y)/2$ and $g(u,y)=h(u)$ as {\DRed needed} here {\PaleGrey .}
\begin{prop}\label{IDynkinDif}
Let the dynamics of $U$ and $Y$ be given by (\ref{jumpdif}) and (\ref{jumpdifb}) with
independent $W^{(1)}$ and $W^{(2)}$ and with $C^{\infty}$
deterministic functions $\alpha$, $\sigma^{2}$, and $\gamma^{2}$. {\DRed For $h,\tilde{h}\in C^{2}(\bbr)$,} 
let
\begin{equation}\label{DOpNd0}
  {\Red \calL_{1}} h(u):= h''(u)-h'(u), \quad
 {\Red \calL}_{2}\tilde{h}(y):= \frac{\gamma^{2}(y)}{2}\tilde{h}''(y)+\alpha(y)\tilde{h}'(y).
\end{equation}
Then, {\DRed for $h\in C^{2n+2}$ and $g(u,y):=h(u)$}, the infinitesimal
generator (\ref{IGSV}) is such that
\[
  \calL^{k}g(u,y) =\sum_{j=0}^{k} B_{j}^{k}(y){\Red \calL}_{1}^{j} h(u),
\quad\text{ for }{\DRed 0\leq k\leq{}n},
\]
where $B_{j}^{k}(y)$ are defined iteratively as follows:
\begin{align*}
 & B_{0}^{0}(y)=1,\quad  B_{j}^{k}(y)=0, \quad \forall j\notin\{{\DRed 1},\dots,k\}, \quad {\DRed k\geq{}1},\\
 & B_{j}^{k}(y)={\Red \calL}_{2}B_{j}^{k-1}(y)+\frac{\sigma^{2}(y)}{2}B_{j-1}^{k-1}(y),
\quad {\DRed 1}\leq{}j\leq{}k, \quad k\geq{}1.
\end{align*}
\end{prop}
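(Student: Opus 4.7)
The plan is to proceed by induction on $k$, exploiting a clean separation of the generator $\calL$ that both the martingale normalization $\mu(y) = -\sigma^2(y)/2$ and the independence of $W^{(1)}, W^{(2)}$ conspire to produce. The base case $k=0$ is tautological: $g(u,y) = h(u) = B_0^0(y)\,\calL_1^0 h(u)$ with $B_0^0 \equiv 1$.

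For the inductive step, substituting $\mu(y) = -\sigma^2(y)/2$ into (\ref{IGSV}) rewrites the generator as
\begin{equation*}
    \calL g(u,y) \;=\; \frac{\sigma^2(y)}{2}\bigl(\partial_u^2 - \partial_u\bigr) g(u,y) + \frac{\gamma^2(y)}{2}\,\partial_y^2 g(u,y) + \alpha(y)\,\partial_y g(u,y),
\end{equation*}
where the last pair of terms is precisely $\calL_2$ acting in the $y$-variable. Independence of $W^{(1)}$ and $W^{(2)}$ is what kills any cross term $\sigma(y)\gamma(y)\,\partial_u\partial_y g$; without it the splitting would fail. On a separable function $B(y) h(u)$, this immediately gives the Leibniz-type identity
\begin{equation*}
    \calL\bigl(B(y)\,h(u)\bigr) \;=\; \frac{\sigma^2(y)}{2}\,B(y)\,\calL_1 h(u) + (\calL_2 B)(y)\,h(u),
\end{equation*}
since $\calL_1$ sees only $u$ and $\calL_2$ sees only $y$.

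Applying $\calL$ term-by-term to $\sum_{j=0}^k B_j^k(y)\,\calL_1^j h(u)$ via the identity above with $B = B_j^k$ and $h$ replaced by $\calL_1^j h$, and re-indexing $j \mapsto j+1$ in the family generated by the first summand, the coefficient of $\calL_1^j h(u)$ in $\calL^{k+1}g$ collects to $\calL_2 B_j^k(y) + (\sigma^2(y)/2)\,B_{j-1}^k(y)$, which is exactly the recurrence defining $B_j^{k+1}$. The boundary conventions $B_j^k = 0$ for $j \notin \{1,\ldots,k\}$ when $k \geq 1$ propagate automatically, since $\calL_2(1) = 0$ forces $B_0^1 = \calL_2 B_0^0 = 0$, and the recurrence preserves vanishing in $j$ outside the allowed range.

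I do not foresee a serious obstacle; the only background check is regularity. Each $B_j^k(y)$ is a differential polynomial in the $C^\infty$ coefficients $\sigma^2, \gamma^2, \alpha$ and their derivatives, so every application of $\calL$ and $\calL_2$ above is legitimate, and the hypothesis $h \in C^{2n+2}$ is exactly what is needed to make $\calL_1^j h$ classically well-defined for $0 \leq j \leq n+1$ (each application of $\calL_1$ costs two derivatives). The substance of the proposition is thus the clean decoupling of the $u$- and $y$-directions enabled jointly by the martingale constraint on $\mu$ and the independence of the two driving Brownian motions.
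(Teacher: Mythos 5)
Your proof is correct and follows the same route as the paper, which simply states ``The proof is done by induction'' and leaves the details to the reader. You have supplied exactly those details: the splitting of $\calL$ into $\frac{\sigma^{2}(y)}{2}\calL_{1}$ (acting in $u$) plus $\calL_{2}$ (acting in $y$) via the martingale condition $\mu=-\sigma^{2}/2$ and the absence of a cross term, the action on separable terms $B(y)\calL_{1}^{j}h(u)$, and the re-indexing that reproduces the stated recurrence together with the boundary convention $B_{0}^{k}=0$ for $k\geq 1$.
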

\begin{proof}
The proof is done by induction.
\end{proof}
Using the previous result, we can easily {\DRed find} conditions for the
iterated formula (\ref{IDynkinDiff}) to hold. To this end, let us
{\DRed introduce} the following class of functions:
\begin{align*}
    \mathcal{C}_{l}^{n}=\big\{p\in\! C^{n} &: |p^{(i)}(x)|\leq\mathcal{M}_{n}(1+|x|),\text{ for all }0\leq{}i\leq{}n,\text{ and }\\
    &\quad \text{ for some }\mathcal{M}_{n}\!<\!\infty\text{ independent of }x\big\}.
\end{align*}

\begin{cor}\label{IDynkinSV}
In addition to the conditions of Proposition \ref{IDynkinDif}, let
$\gamma\in \mathcal{C}_{l}^{0}$ and let $\alpha,
\sigma^{2},\gamma^{2}\in \calC_{l}^{k}$, for any $k\geq{}0$. Then,
(\ref{IDynkinDiff}) is satisfied for  $g(u,y):=h(u)$ whenever
$h\in C^{2n+2}_{b}$.
\end{cor}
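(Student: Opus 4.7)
The plan is to combine Proposition \ref{IDynkinDif} with a localization argument, the key task being to upgrade the boundedness hypothesis (under which \eqref{IDynkinDiff} is stated in the paragraph preceding the corollary) to a polynomial-growth condition that can be handled by moment estimates on $Y$.

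First, by Proposition \ref{IDynkinDif}, for $g(u,y):=h(u)$ with $h\in C^{2n+2}_b$,
\[
    \calL^k g(u,y) = \sum_{j=0}^{k} B_j^k(y)\,\calL_1^j h(u), \qquad 0\leq k\leq n+1.
\]
Since $\calL_1 h=h''-h'$, iterating yields a linear combination of $h^{(i)}$ with $i\leq 2j\leq 2n+2$, so each $\calL_1^j h$ is uniformly bounded in $u$ under the hypothesis on $h$.

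I next claim, by induction on $k$, that $B_j^k\in C^{\infty}(\bbr)$ and that $B_j^k$, together with all of its derivatives, grows at most polynomially in $y$ with degree at most $k$. The base case $B_0^0\equiv 1$ is trivial. For the inductive step, the recurrence
\[
    B_j^k(y) = \frac{\gamma^2(y)}{2}(B_j^{k-1})''(y) + \alpha(y)(B_j^{k-1})'(y) + \frac{\sigma^2(y)}{2}B_{j-1}^{k-1}(y),
\]
combined with $\alpha,\sigma^2,\gamma^2\in \calC_l^m$ for every $m\geq 0$, preserves smoothness and, via Leibniz's rule, raises the polynomial degree by at most one at each step.

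Finally, the linear growth condition $\alpha,\gamma\in \calC_l^0$ yields, by a standard Gronwall argument applied to \eqref{jumpdifb}, that $\bbe\sup_{0\leq s\leq T}|Y_s|^p<\infty$ for every $p\geq 1$ and $T>0$. Combined with the polynomial growth of the $B_j^k$'s and the boundedness of the $\calL_1^j h$'s, this makes $\bbe|\calL^k g(U_s,Y_s)|$ locally bounded in $s$ for each $0\leq k\leq n+1$. The iterated formula \eqref{IDynkinDiff} is then obtained by a localization argument: one stops at $\tau_N:=\inf\{s:|Y_s|>N\}$, where $\calL^k g$ and the relevant martingale increments are bounded, iterates Dynkin's formula as in the derivation of \eqref{IDynkin}, and passes $N\to\infty$ using dominated convergence, the integrands being dominated by polynomials in $\sup_{s\leq t}|Y_s|$. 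The main obstacle is the polynomial-growth accounting for $B_j^k$ at Step 2; once that is in place, the localization is routine.
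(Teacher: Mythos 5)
Your proposal is correct and follows essentially the same route as the paper: both rest on the representation of $\calL^{k}g$ from Proposition \ref{IDynkinDif}, an induction on the recurrence for $B_{j}^{k}$ showing that these coefficients and their derivatives grow at most polynomially, and the moment bounds on $Y$ guaranteed by the linear-growth hypotheses. The only (immaterial) difference is at the last step, where the paper verifies the square-integrability conditions $\bbe\int_{0}^{t}(\partial_{u}\calL^{n}g\,\sigma)^{2}ds<\infty$ and $\bbe\int_{0}^{t}(\partial_{y}\calL^{n}g\,\gamma)^{2}ds<\infty$ directly to conclude the stochastic integrals are true martingales, whereas you reach the same conclusion by localization and dominated convergence.
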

\begin{proof}
Using It\^o's formula and induction, (\ref{IDynkinDiff}) will hold
provided that
    \[
        \int_{0}^{t} \frac{\partial \calL^{n}g(U_{s},Y_{s})}{\partial u} \sigma(Y_{s}) d W^{(1)}_{s}, \quad
        \text{and}\quad
        \int_{0}^{t} \frac{\partial \calL^{n}g(U_{s},Y_{s})}{\partial y} \gamma(Y_{s}) d W^{(2)}_{s},
    \]
    are true martingales. For this, it suffices that
    \[
        \bbe \int_{0}^{t} \left(\frac{\partial \calL^{n}g}{\partial u} \sigma\right)^{2} d s<\infty, \quad
        \text{and}\quad
        \bbe \int_{0}^{t} \left(\frac{\partial \calL^{n}g}{\partial y} \gamma\right)^{2} d s<\infty,\quad \forall t\geq{}0.
    \]
    Let us recall that since $\alpha$ and $\gamma$ belong to $\calC_{l}^{0}$,
    \begin{equation}\label{NEDf}
        \sup_{s\leq{}t}\bbe |Y_{s}|^{2m}<\infty,
    \end{equation}
    for any $t\geq{}0$ and $m\geq{}1$ (this is similar to \cite[Problem 5.3.15]{Karatzas:1988}).
    Hence, given the representation of Proposition \ref{IDynkinDif}, it suffices to show that for some constants $\mathcal{M}_{i}^{n}<\infty$ and non-negative integers $r_{i}^{n}$:
    \begin{equation}\label{NECoef}
        \left|(B_{j}^{n})^{(i)}(y)\right|\leq\mathcal{M}_{i}^{n}(1+ |y|)^{r_{i}^{n}},
    \end{equation}
    for any $i,n\geq{}0$ and $0\leq{}j\leq{}n$. This claim can again be shown by induction since, given that it is satisfied for $n-1$ and using the iterative representation for $B_{j}^{n}$ in Proposition \ref{IDynkinDif},
    \begin{align*}
        \left|(B_{j}^{n})^{(i)}(y)\right|\leq \sum_{\ell=0}^{i} \binom{i}{\ell}&\left|\frac{1}{2}\left(\gamma^{2}\right)^{(\ell)}\left(B_{j}^{n-1}\right)^{(i-\ell+2)} \right.\\
        &\left.+\left(\alpha\right)^{(\ell)}\left(B_{j}^{n-1}\right)^{(i-\ell+1)}+\frac{1}{2}\left(\sigma^{2}\right)^{(\ell)}\left(B_{j-1}^{n-1}\right)^{(i-\ell)}\right|,
    \end{align*}
    which can be bounded by $\mathcal{M}_{i}^{n}(1+|y|)^{r_{i}^{n}}$ since, by assumption, $\sigma^{2}$, $\alpha$, and $\gamma^{2}$ belong to $\calC_{l}^{k}$, for all $k\geq{}0$.
\end{proof}
{\Green The} previous result covers the {\DRed Heston model,
\begin{equation}\label{Heston}
    dU_{t}= -\frac{1}{2}Y_{t} dt +\sqrt{Y_{t}} d W^{(1)}_{t},\quad
    d Y_{t}=\chi (\theta- Y_{t}) dt +v \sqrt{Y_{t}} d W^{(2)}_{t},
\end{equation}
as well as the exponential Ornstein-Uhlenbeck (OU) model:
\begin{equation}\label{OUExp}
    dU_{t}= -\frac{1}{2}e^{2Y_{t}} dt +e^{Y_{t}} d W^{(1)}_{t},\quad
    d Y_{t}= \chi (\theta- Y_{t}) dt +v d W^{(2)}_{t}.
\end{equation}
{\Green Both models are used} in practice.}

Let us now use Corollary \ref{IDynkinSV} to obtain a second order
expansion for $\bbe h(U_{t})$. Omitting, for the ease of notation,
the evaluation of the functions $B_{j}^{k}$ at $y_{0}$, we can write
\begin{align*}
    \bbe h(U_{t})&=h(0)+B_{1}^{1}{\Red \calL}_{1}h(0)t+
    \left(B_{1}^{2}{\Red \calL}_{1}h(0)+B_{2}^{2}{\Red \calL}_{1}^{2}h(0)\right) t^{2}+
    O(t^{3}),
\end{align*}
where
\begin{align}\label{FTT}
    B_{1}^{1}=\frac{1}{2}\sigma_{0}^{2}, \quad B_{1}^{2}= {\DRed \frac{\gamma_{0}^{2}}{2}\left(\sigma_{0}\sigma''_{0}+
(\sigma'_{0})^{2}\right)+\alpha_{0}\sigma_{0}\sigma'_{0}},\quad
    B_{2}^{2}=\frac{1}{4}\sigma_{0}^{4}.
\end{align}
Above, we set $\sigma_{0}=\sigma(y_{0})$,
$\sigma'_{0}=\sigma'(y_{0})$, and $\sigma''_{0}=\sigma''(y_{0})$,
with similar notation for the other functions. {\DRed For the Heston model (\ref{Heston}),  the coefficients in (\ref{FTT}) are}
\[
	{\DRed B_{1}^{1}=\frac{y_{0}}{2}, \quad B_{2}^{2}=\frac{y_{0}^{2}}{2},\quad B_{1}^{2}=\frac{\chi(\theta-y_{0})}{2}}.
\] 
{\DRed Similarly, for the exponential OU model (\ref{OUExp}),} 
\[
	{\DRed B_{1}^{1}=\frac{e^{2y_{0}}}{2}, \quad B_{2}^{2}=\frac{e^{4y_{0}}}{4},\quad 
	B_{1}^{2}=e^{2y_{0}}\left(v^{2}+\chi(\theta-y_{0})\right).}
\] 

A general {\Green polynomial} expansion of transition distributions is
as follows{\PaleGrey .}
\begin{thm}\label{MthDst}
     With the notations and the conditions of Theorem \ref{jumpdifdist} and Corollary \ref{IDynkinSV},
    \begin{equation}\label{jumdifexpan3}
\bbp(Z_{t}\geq z)= e^{-\lambda_{\varepsilon} t}\sum_{j=1}^{n}\widehat{a}_{j}(z)\frac{t^{j}}{j!}+O_{\varepsilon,
z_{0}}(t^{n+1}),
\end{equation}
where
\begin{align*}
    \widehat{a}_{j}(z)&:=\sum_{p+q+r=j}
    \binom{j}{p,q,r}L_{\varepsilon}^{q}
     \left(\sum_{m=0}^{r}B_{m}^{r}(y_{0}){\Red \calL}_{1}^{m}
(\widehat{f}_{p,z})\right)
    (0)\nonumber,
\end{align*}
setting $\widehat{f}_{0,z}(y)\equiv 0$ and where the summation
is over all non-negative integers $p,q,r$.
\end{thm}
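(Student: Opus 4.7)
The plan is to take the expansion from Theorem \ref{jumpdifdist} and further expand each expectation $\bbe[(L_\varepsilon^{j-k}\widehat{f}_{k,z})(U_t)]$ appearing inside $\widehat{A}_{j,t}(z)$ by applying the iterated Dynkin formula of Corollary \ref{IDynkinSV} (with the generator decomposition of Proposition \ref{IDynkinDif}), and then to reindex the resulting multiple sum and collect powers of $t$.

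First, I would verify the hypotheses required to apply Corollary \ref{IDynkinSV} to $g(u,y):=h_{k,z}(u)$, where $h_{k,z}:=L_\varepsilon^{j-k}\widehat{f}_{k,z}$ for $1\leq k\leq j\leq n$. Because $\bar{s}_\varepsilon=(1-c_\varepsilon)s$ vanishes in a neighborhood of the origin, condition (\ref{conB2}) implies $\bar{s}_\varepsilon\in C_b^{\infty}(\bbr)$, and hence so does $\bar{s}_\varepsilon^{*k}$. Consequently $\widehat{f}_{k,z}(y)=\int_{z-y}^{\infty}\bar{s}_\varepsilon^{*k}(u)\,du$ lies in $C_b^{\infty}(\bbr)$ with all derivatives bounded uniformly in $z\geq z_0$, the integrability of $\bar{s}_\varepsilon^{*k}$ being ensured by (\ref{conB1}). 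Since $L_\varepsilon$ maps $C_b^{\infty}$ into itself (using that $w^{2}c_\varepsilon(w)\nu(dw)$ and $\bar{c}_\varepsilon(w)\nu(dw)$ are finite measures), so does $h_{k,z}$.

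Next, applying Corollary \ref{IDynkinSV} together with Proposition \ref{IDynkinDif} gives
\begin{equation*}
    \bbe\bigl[h_{k,z}(U_t)\bigr]=\sum_{r=0}^{n-j}\frac{t^{r}}{r!}\sum_{m=0}^{r}B_{m}^{r}(y_{0})\,\calL_{1}^{m}h_{k,z}(0)+O_{\varepsilon,z_0}(t^{n-j+1}).
\end{equation*}
Substituting back into (\ref{jumdifexpan1}) and noting that $L_\varepsilon$ and $\calL_{1}$ are both translation-invariant operators acting on functions of the single variable $u$ (and hence commute), one obtains a triple sum over $(k,j,r)$ with $1\leq k\leq j\leq n$ and $0\leq r\leq n-j$. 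Setting $p:=k$, $q:=j-k$, and $J:=p+q+r$, the combinatorial identity $\binom{j}{k}/(j!\,r!)=1/(p!\,q!\,r!)$ regroups the triple sum into $e^{-\lambda_\varepsilon t}\sum_{J=1}^{n}\widehat{a}_{J}(z)\,t^{J}/J!$ with $\widehat{a}_{J}$ as in the statement (the constraint $p\geq 1$ is absorbed into the convention $\widehat{f}_{0,z}\equiv 0$).

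The main obstacle will be establishing that the Dynkin remainder produced by Corollary \ref{IDynkinSV} is $O_{\varepsilon,z_0}(t^{n-j+1})$ \emph{uniformly} in $z\geq z_0$, so that summing over $1\leq j\leq n$ keeps the overall error of order $O_{\varepsilon,z_{0}}(t^{n+1})$ in the sense of (\ref{FIntSpecONt}). This requires bounding $\bbe\,|\calL^{n-j+1}h_{k,z}(U_{\alpha t},Y_{\alpha t})|$ uniformly in $z\geq z_0$ and $\alpha\in[0,1]$. To achieve this I would combine: (i) the polynomial-in-$y$ estimate (\ref{NECoef}) on the coefficients $B_{m}^{r}$ established in the proof of Corollary \ref{IDynkinSV}; (ii) the moment bound (\ref{NEDf}) on $Y_t$; and (iii) uniform-in-$z\geq z_0$ bounds for every $\calL_{1}^{m}L_\varepsilon^{j-k}\widehat{f}_{k,z}$ and its $u$-derivatives, which follow from the integrability of $\bar{s}_\varepsilon^{*k}$ and all of its derivatives on $\bbr$ under (\ref{conB2})–(\ref{conB1}). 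Together these yield the required uniform control and complete the proof.
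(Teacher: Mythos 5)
Your proposal is correct and follows essentially the same route as the paper: the paper's own (very terse) proof likewise plugs the iterated Dynkin expansion (\ref{IDynkinDiff}) for the $(U,Y)$-semigroup, via Proposition \ref{IDynkinDif} and Corollary \ref{IDynkinSV}, into the coefficients coming from the proof of Theorem \ref{jumpdifdist}, reindexes into a multinomial sum, and controls the remainder through the estimates (\ref{NEDf})--(\ref{NECoef}). The only cosmetic difference is that the paper expands $\bbe\widetilde{f}_{k,z}(U_{t}+\cdot)$ first and then applies $L_{\varepsilon}^{q}$, whereas you apply $L_{\varepsilon}^{q}$ first and then invoke the (valid) commutation of the translation-invariant operators $L_{\varepsilon}$ and $\calL_{1}$; both yield the stated coefficients.
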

\begin{proof} It is enough to plug the expansion
(\ref{IDynkinDiff}) into the coefficients of the first summation in
(\ref{expbrevef}) (See the proof of Theorem \ref{jumpdifdist} in Appendix B) and rearrange terms using Proposition
\ref{IDynkinDif}. Note that the last integral in (\ref{IDynkinDiff})
is bounded for $\widetilde{f}_{k,z}\in C^{\infty}_{b}(\bbr)$ from the representation in Proposition \ref{IDynkinDif} and the
estimates (\ref{NEDf})-(\ref{NECoef}).
\end{proof}

As a way to illustrate the expansions, note that in the case of constant volatility ($\alpha(y)=\gamma(y)\equiv0$),
\[
    B_{k}^{k}(y)\equiv\left(\frac{\sigma_{0}^{2}}{2}\right)^{k}, \quad  B_{j}^{k}(y)\equiv 0,\quad \forall j\neq k, \quad k\geq{}0.
\]
Hence,
\begin{align*}
    \widehat{a}_{j}(z)&:=\sum_{p+q+r=j}
    \binom{j}{p,q,r}\left(\frac{\sigma_{0}^{2}}{2}\right)^{r}
    L_{\varepsilon}^{q}\left({\Red \calL}_{1}^{r}(\widehat{f}_{p,z})\right)
    (0).
\end{align*}

{\Red 
\begin{rem}\label{Simplify}
\smallskip
\noindent 
\rm{{(i)}} Clearly, the expansion (\ref{jumdifexpan3}) leads to an expansion {\Blue of} the form: 
\begin{equation}\label{jumdifexpan3RealPow}
\bbp(Z_{t}\geq z)= \sum_{j=1}^{n}\breve{a}_{j}(z)\frac{t^{j}}{j!}+O_{\varepsilon,
z_{0}}(t^{n+1}).
\end{equation}
Indeed, by expanding $e^{-\lambda_{\varepsilon}t}$ in (\ref{jumdifexpan3}), we have 
\begin{equation}\label{Cnst1}
        \breve{a}_{k}(z)=\sum_{j=1}^{k}  \binom{k}{j} \widehat{a}_{j}(z) (-\lambda_{\varepsilon})^{k-j}.
\end{equation}

\smallskip
\noindent 
\rm{{(ii)}} 
The coefficients $\breve{a}_{k}(z)$ in (\ref{jumdifexpan3RealPow}) are actually independent {\Green of} $\varepsilon$ (for $\varepsilon$ small enough) since they can be defined iteratively as limits of $ \bbp\left(X_{t}\geq{}y\right)$ as follows: 
\begin{align*}
	\breve{a}_{1}(z)=\lim_{t\rightarrow{}0}\frac{1}{t}\bbp\left( Z_{t}\geq{}z\right),\quad
	\frac{\breve{a}_{k}(z)}{k!}=\lim_{t\rightarrow{}0}
        \frac{1}{t^{k}}\left\{
        \bbp\left(Z_{t}\geq{}z\right)-
        \sum_{j=1}^{k-1} \breve{a}_{j}(z)\,\frac{t^{j}}{j!}
        \right\}.
\end{align*}
\end{rem}
}


\subsection{Expansions for the call option price}\hfill

For $z\neq 0$ and $t>0$, let
\begin{equation}\label{jumpdifOpt}
    G_{t}(z):=\bbe\left(e^{z+Z_{t}}-1\right)_{+},
\end{equation}
where $Z$ is the jump-diffusion process given by (\ref{jumpdif}) and (\ref{jumpdifb}). We
proceed to derive the small-time expansion of $G_{t}$ as
$t\downarrow 0$. We first consider the out-of-the-money case $z<0$
from which one can easily derive the in-the-money case $z>0$ via
put-call parity (see Corollary \ref{jumpdifinthemoney} below).
Throughout this section, we set
\[
    f(u)=f_{z}(u):=(e^{z+u}-1)_{+},
\]
and we also assume the following uniform boundedness condition:
there exists $0<M<\infty$, such that
\begin{eqnarray}
\label{con320}0<\sigma(y)\leq M.
\end{eqnarray}
\begin{rem}\label{IDynkinSV2} Under the uniform boundedness condition
(\ref{con320}), it is easy to see that $\bbe e^{cU_{t}}<\infty$, for
some $c>2$.  Then, a proof similar to that {\Red given} {\Green in} Corollary \ref{IDynkinSV}, using the
representation of Proposition \ref{IDynkinDif}, shows that
(\ref{IDynkinDiff}) is satisfied for $g(u,y):=h(u)$, whenever $h\in
C^{2n+2}$ is a subexponential function satisfying (\ref{NCID}).
\end{rem}
The next theorem gives an expansion for the
out-of-the-money call option prices {\DRed in {\Green terms} of the integro-differential {\PaleGrey operators} $L_{\varepsilon}$ and $\calL_{1}$ defined in (\ref{IG}) and (\ref{DOpNd0})} (its proof is given in Appendix
\ref{ApC}){\PaleGrey .}
\begin{thm}\label{jumpdifopt}
Let $z_{0}<0$, $n\geq 1$, and {\Green $0<\varepsilon<-z_{0}/2(n+1) \wedge 1$}.
Let the dynamics of $Z$ be given by (\ref{jumpdif}) and (\ref{jumpdifb}), and the
conditions of both Theorem \ref{jumpdifdist} and Corollary
\ref{IDynkinSV} as well as (\ref{con320}) be satisfied. Then there
exists $t_{0}>0$ such that, for any $0<t<t_{0}$ and $z<z_{0}$,
\begin{equation}\label{jumdifexpan2}
G_{t}(z)=e^{-\lambda_{\varepsilon}t}\sum_{j=1}^{n}\widehat{b}_{j}(z)\frac{t^{j}}{j!}+O_{\varepsilon,
z_{0}}(t^{n+1}),
\end{equation}
where
\begin{equation}\label{Eq:ExpHatbGen}
\widehat{b}_{j}(z):=\sum_{p+q+r=j}\binom{j}{p,q,r}L_{\varepsilon}^{q}\left(\sum_{m=0}^{r}B_{m}^{r}(y_{0}){\Red \calL}_{1}^{m}(\widehat{f}_{p,z})\right)(0),
\end{equation}
with $\widehat{f}_{0,z}(y)\equiv 0$, and
\begin{equation*}
\widehat{f}_{k,z}(y):=\int_{\bbr}
f_{z}(y+u)\bar{s}^{*k}_{\varepsilon}(u)du=\int_{\bbr}
\left(e^{z+y+u}-1\right)_{+}\bar{s}^{*k}_{\varepsilon}(u)du.
\end{equation*}
\end{thm}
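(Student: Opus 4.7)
The plan is to extend the argument used for Theorem \ref{MthDst} to the exponentially growing payoff $f_z(u)=(e^{z+u}-1)_+$. The overall scheme has three stages: (i) split $\bbe f_z(Z_t)$ into a small-jump contribution ($k=0$), a main sum over $1\leq k\leq n$ large-jump counts, and a Poisson tail ($k\geq n+1$), using the decomposition (\ref{T1b})-(\ref{T2b}); (ii) show the $k=0$ and $k\geq n+1$ pieces are both $O_{\varepsilon,z_0}(t^{n+1})$; and (iii) apply the iterated Dynkin expansion (\ref{IDynkin}) to $X^\varepsilon$ conditional on $U$, then the expansion (\ref{IDynkinDiff}) to the diffusive pair $(U,Y)$, and finally reindex to obtain the $\widehat{b}_j(z)$ of (\ref{Eq:ExpHatbGen}).

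For stage (ii), the $k\geq n+1$ tail is dominated via $f_z(u)\leq e^{z+u}$ together with the factorization
\[
    \bbe\bigl[e^{U_t+X^\varepsilon_t+\sum_{i=1}^{k}\xi^\varepsilon_i}\bigr]\leq C_1 C_2^{k},
\]
where the finiteness of $\bbe e^{U_t}$ (from (\ref{con320}), which makes $e^{U_t}$ a local martingale of bounded variation uniformly in $t\leq t_0$), of $\bbe e^{X^\varepsilon_t}$, and of $\bbe e^{\xi^\varepsilon_i}$ all follow from (\ref{conB1}). Summing the geometric bound against the Poisson weights yields $O(t^{n+1})$. For the $k=0$ term, the condition $\varepsilon<-z_0/(2(n+1))\wedge 1$ and $z<z_0$ imply that $\{U_t+X^\varepsilon_t\geq -z\}$ requires at least one of $U_t,X^\varepsilon_t$ to exceed $-z/2\geq(n+1)\varepsilon$; the large deviation hypothesis (\ref{tailU0}) makes $\bbp(U_t\geq -z/2)$ super-polynomially small, while (\ref{TailEstLevy}) makes $\bbp(X^\varepsilon_t\geq -z/2)$ of order $t^{(n+1)+\eta}$ for a suitable choice of $a\in(0,\varepsilon^{-1})$. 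A Cauchy--Schwarz bound on $\bbe[e^{z+U_t+X_t^\varepsilon}\mathbf{1}_{U_t+X_t^\varepsilon\geq -z}]$ then produces the required estimate.

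For stage (iii), the essential simplification comes from integrating out the jumps: since the $\xi^\varepsilon_i$ are i.i.d.\ with density $\bar s_\varepsilon/\lambda_\varepsilon$ and independent of $(U,X^\varepsilon)$,
\[
    \bbe\bigl[f_z(U_t+X^\varepsilon_t+\textstyle\sum_{i=1}^k\xi^\varepsilon_i)\bigr]=\lambda_\varepsilon^{-k}\,\bbe\bigl[\widehat f_{k,z}(U_t+X^\varepsilon_t)\bigr].
\]
A direct computation using the fundamental theorem of calculus shows $\widehat f_{k,z}\in C^\infty(\bbr)$ for $k\geq 1$: boundary terms cancel and one obtains, e.g., $\widehat f_{k,z}'(y)=e^{z+y}\int_{-z-y}^\infty e^u\bar s_\varepsilon^{*k}(u)\,du$, with analogous closed forms for all higher derivatives. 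All derivatives are bounded by a constant multiple of $1+e^{z+y}$, so the subexponential condition (\ref{NCID}) is satisfied because $c/2>1$ by (\ref{conB1}). Lemma \ref{prop:IDynkin} thus applies to $X^\varepsilon$ with $g=\widehat f_{k,z}$, yielding an expansion of $\bbe[\widehat f_{k,z}(u+X^\varepsilon_t)]$ whose coefficients are the $(L_\varepsilon^q\widehat f_{k,z})(u)$; inserting $u=U_t$ and taking expectations converts this into sums of $\bbe[(L_\varepsilon^q\widehat f_{k,z})(U_t)]$.

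Finally, each $\bbe[(L_\varepsilon^q\widehat f_{k,z})(U_t)]$ is expanded via Corollary \ref{IDynkinSV} in the version given by Remark \ref{IDynkinSV2} — valid because $L_\varepsilon^q\widehat f_{k,z}$ inherits subexponential growth with rate $c/2$ from $\widehat f_{k,z}$, as can be checked from the representation of Lemma \ref{RIGen} applied to $\widehat f_{k,z}$. Proposition \ref{IDynkinDif} then supplies the factors $B_m^r(y_0)\calL_1^m$, and a trinomial reindexing over $(p,q,r)$ with $p+q+r=j$ assembles $\widehat b_j(z)$ exactly as in (\ref{Eq:ExpHatbGen}). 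The chief technical obstacle is the joint control of the two remainder terms (the Dynkin remainder at order $n-k+1$ for $X^\varepsilon$, and the It\^o remainder at order $n-r+1$ for $(U,Y)$): one must verify that the iterated operators $\calL^{n-r+1}$ and $L_\varepsilon^{n-k+1}$ applied to $\widehat f_{k,z}$ still satisfy the subexponential bound with rate $c/2$ uniformly for $z\leq z_0$, so that the resulting expectations are bounded uniformly in $t\in(0,t_0]$; this uses (\ref{conB1}) one last time to absorb the exponential factor $e^{U_t+X^\varepsilon_{\alpha t}}$ against $\bbe e^{c(U_t+X^\varepsilon_{\alpha t})/2}<\infty$.
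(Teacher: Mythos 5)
Your proposal follows essentially the same route as the paper's Appendix C proof: the same decomposition by the number of large jumps, the same tail and Cauchy--Schwarz estimates for the $k=0$ and $k\geq n+1$ pieces (using $\bbe e^{2U_{t}}<\infty$ from (\ref{con320}) and the exponential moment condition (\ref{conB1})), and the same two-stage iterated Dynkin expansion followed by the trinomial reindexing over $p+q+r=j$. The only cosmetic difference is that you freeze $u=U_{t}$ and then apply the diffusion expansion to $L_{\varepsilon}^{q}\widehat{f}_{k,z}$, whereas the paper applies $L_{\varepsilon}^{q}$ to the already-expanded $\bbe\,\widehat{f}_{k,z}(U_{t}+\cdot)$; the two orderings agree because $L_{\varepsilon}$ and $\calL_{1}$ are translation-invariant and hence commute, so the coefficients coincide with (\ref{Eq:ExpHatbGen}).
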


\begin{rem}
   {\DRed By expanding  $e^{-\lambda_{\varepsilon}t}$ in (\ref{jumdifexpan2})}, {\Red one obtains (\ref{FExp1}) with the following coefficients}
\begin{equation}\label{Cnst1}
       {\Green \b_{k}(z):=\sum_{j=1}^{k}  \binom{k}{j} \widehat{b}_{j}(z) (-\lambda_{\varepsilon})^{k-j}}.
\end{equation}
{\Red As it was the case {\Green for} {\Blue the} expansions {\Green of} the tail probabilities (see Remark \ref{Simplify}-(ii) above), the coefficients $b_{k}(z)$ {\Blue can be defined iteratively as certain limits of $G_{t}(z)=\bbe \left(e^{z+Z_{t}}-1\right)_{+}$,} {\DRed as $t\to{}0$,} and, {\Green therefore}, they are independent {\Green of} $\varepsilon$ (for $\varepsilon$ small enough).}  
\end{rem}
To deal with the in-the-money case $z>0$, note that
\begin{align*}
\bbe\left(e^{z+Z_{t}}-1\right)_{+}&=\bbe(e^{z+Z_{t}}-1)+(e^{z+Z_{t}}-1)_{-}\nonumber\\
{}&=e^{z}-1+\bbe(e^{z+Z_{t}}-1)_{-}.\nonumber
\end{align*}
The expansion of $\bbe(e^{z+Z_{t}}-1)_{-}$ {\Red with $z>0$} is similar to
that of $\bbe\left(e^{z+Z_{t}}-1\right)_{+}$  {\Red with $z<0$. Therefore, we obtain the following result.}
\begin{cor}\label{jumpdifinthemoney}
Let $z_{0}>0$, $n\geq 1$, and {\PaleGrey $0<\varepsilon<z_{0}/2(n+1)\wedge 1$}.
Under conditions of Theorem \ref{jumpdifdist}, there exists a
$t_{0}>0$ such that, for any $0<t<t_{0}$, $z>z_{0}$,
\begin{align}
G_{t}(z)=e^{z}-1+e^{\lambda_{\varepsilon}t}{\sum_{m=1}^{n}\widetilde{b}_{m}(z)\frac{t^{m}}{m!}}+O_{\varepsilon,
z_{0}}(t^{n+1}),
\end{align}
where
\begin{equation}
\widetilde{b}_{j}(z):=\sum_{i+j+k=m}\binom{m}{i,j,k}L_{\varepsilon}^{i}\left(\sum_{l=0}^{i}B_{l}^{i}(y_{0}){\Red \calL}_{1}^{l}\widehat{g}_{k,z}\right)(0)\nonumber
\end{equation}
with
\begin{equation}
    \widehat{g}_{k,z}(y):=\int_{\bbr}\left(e^{y+z+u}-1\right)_{-}\bar{s}_{\varepsilon}^{*k}(u) du.\nonumber
\end{equation}
\end{cor}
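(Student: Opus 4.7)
The plan is to reduce the in-the-money case ($z>0$) to a bounded-payoff analogue of Theorem~\ref{jumpdifopt} via put-call parity and then mimic that proof almost verbatim. Because $\bbq$ is a martingale measure and condition (\ref{con30}) ensures $\bbe(e^{Z_t})=1$, one has $\bbe(e^{z+Z_t}-1)=e^z-1$, so the identity $(x)_+=x+(x)_-$ yields
\begin{equation*}
G_t(z)\,=\,(e^z-1)+\bbe\,\widetilde{f}_z(Z_t),\qquad \widetilde{f}_z(u):=(e^{z+u}-1)_-.
\end{equation*}
For $z>0$, $\widetilde{f}_z$ is bounded by $1$, is $C^\infty$ on $\bbr\setminus\{-z\}$ with bounded one-sided derivatives of all orders, and is supported in $(-\infty,-z]$.

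Next I would apply the jump-count decomposition (\ref{T1b})-(\ref{T2b}) to $\bbe\widetilde{f}_z(Z_t)$. Because $\|\widetilde{f}_z\|_\infty\leq 1$, the infinite tail over $k\geq n+1$ is automatically $O(t^{n+1})$. For each $1\leq k\leq n$, conditioning on the Poisson jumps $\xi_i^{\varepsilon}$ (whose density is $\bar s_\varepsilon/\lambda_\varepsilon$) gives
\begin{equation*}
\bbe\,\widetilde{f}_z\Big(U_t+X^{\varepsilon}_t+\sum_{i=1}^k\xi_i^{\varepsilon}\Big)\,=\,\frac{1}{\lambda_\varepsilon^k}\,\bbe\,\widehat{g}_{k,z}(U_t+X^{\varepsilon}_t).
\end{equation*}
The key regularization, which mirrors the OTM case, is that by (\ref{conB2}) the convolution $\bar s_\varepsilon^{*k}$ is $C^\infty$ with all derivatives bounded and is supported in $\{|u|\geq k\varepsilon\}$; hence $\widehat{g}_{k,z}\in C_b^\infty$ and vanishes on $[-z+k\varepsilon,\infty)$, hence on a neighborhood of $0$ under the hypothesis $\varepsilon<z_0/(2(n+1))$. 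I would then apply the iterated Dynkin formula (\ref{IDynkin}) to $X^{\varepsilon}$ (conditionally on $U$) and the It\^o-based expansion (\ref{IDynkinDiff}) of Corollary~\ref{IDynkinSV} to $U$, combine them through Lemma~\ref{RIGen} and Proposition~\ref{IDynkinDif}, and regroup powers of $t$ by the multinomial identity to produce the coefficients $\widetilde{b}_m(z)$ in the stated form.

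What remains is to show that the ``zero-jump'' residual $e^{-\lambda_\varepsilon t}\bbe\,\widetilde{f}_z(U_t+X^{\varepsilon}_t)$ is $O(t^{n+1})$. Since $\widetilde{f}_z$ is supported in $(-\infty,-z]$,
\begin{equation*}
\bbe\,\widetilde{f}_z(U_t+X^{\varepsilon}_t)\,\leq\,\bbp\bigl(X^{\varepsilon}_t\leq -z/2\bigr)+\bbp\bigl(U_t\leq -z/2\bigr),
\end{equation*}
and both are $o(t^N)$ for every $N$: the first by the tail bound (\ref{TailEstLevy}) and the second by the small-time large-deviation decay of the lower tail of $U_t$.

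The main obstacle is precisely this last lower-tail estimate, since assumption (\ref{tailU0}) supplies only the upper-tail LDP. One can either invoke the symmetric lower-tail LDP, which holds for all SV models of interest in this paper (in particular (\ref{Heston}) and (\ref{OUExp})) by the same arguments that yield (\ref{tailU0}), or bypass the LDP entirely by combining the exponential-moment bound on $U_t$ with a Markov/Chernoff argument to deduce the required super-polynomial decay. Everything else is a bookkeeping adaptation of the proof of Theorem~\ref{jumpdifopt}, with $\widehat{g}_{k,z}$ replacing $\widehat{f}_{k,z}$ and the singularity of the payoff moved from $\{u=-z>0\}$ to $\{u=-z<0\}$.
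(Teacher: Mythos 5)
Your overall route is exactly the paper's: the authors dispose of the in-the-money case in two lines by writing $(x)_{+}=x+(x)_{-}$, using the martingale property $\bbe e^{Z_{t}}=1$ (from (\ref{con30}) and (\ref{con320})) to get $\bbe(e^{z+Z_{t}}-1)=e^{z}-1$, and then asserting that the expansion of $\bbe(e^{z+Z_{t}}-1)_{-}$ for $z>0$ is obtained by repeating the out-of-the-money argument of Theorem \ref{jumpdifopt}. Your fleshed-out version of that adaptation is faithful, and your observation that several steps become \emph{easier} here because $(e^{z+u}-1)_{-}$ is bounded by $1$ (so the $k\geq n+1$ tail is handled as in Theorem \ref{jumpdifdist} rather than via exponential moments) is correct. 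Your remark about the zero-jump term is a genuine catch: one needs super-polynomial decay of the \emph{lower} tail $\bbp(U_{t}\leq -z/2)$, which the stated hypothesis (\ref{tailU0}) does not literally provide and on which the paper is silent; your Chernoff fix, using (\ref{con320}) and the bound $\bbe e^{-\theta U_{t}}\leq e^{(\theta+\theta^{2})M^{2}t/2}$ with $\theta\asymp 1/t$ to get $e^{-c/t}$ decay, does close this.

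One claim in your write-up is false, although it is not needed. The support of $\bar{s}_{\varepsilon}^{*k}$ is \emph{not} contained in $\{|u|\geq k\varepsilon\}$: supports add as sumsets under convolution, and $\{|u|\geq \varepsilon/2\}+\{|u|\geq \varepsilon/2\}=\bbr$, so $\bar{s}_{\varepsilon}^{*k}$ may charge every neighborhood of the origin for $k\geq 2$. Likewise $\widehat{g}_{k,z}$ does not vanish on $[-z+k\varepsilon,\infty)$: already for $k=1$ one has $\widehat{g}_{1,z}(0)=\int_{-\infty}^{-z}(1-e^{z+u})\bar{s}_{\varepsilon}(u)\,du>0$ whenever $\nu$ charges $(-\infty,-z)$. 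What you actually need, namely $\widehat{g}_{k,z}\in C_{b}^{\infty}$, follows as in Appendix \ref{ApC} by differentiating under the integral: each derivative produces a bounded term of the form $\bar{s}_{\varepsilon}^{*(k-1)}*\bar{s}_{\varepsilon}^{(j)}$, controlled by the constants $\gamma_{j,\varepsilon/2}$ from (\ref{conB2}). The restriction $\varepsilon<z_{0}/2(n+1)$ is used only when applying the L\'evy tail estimate (\ref{TailEstLevy}) to $\bbp(|X_{t}^{\varepsilon}|\geq z/2)$, not for any support property of $\widehat{g}_{k,z}$.
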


\begin{rem}
	{\Red The results {\Green of} this section provide expansions for the option price of {\PaleGrey out-of-the-money (OTM)} and {\PaleGrey in-the-money (ITM)} options in non-negative integer powers of the time to maturity. However, as stated in the introduction, {\PaleGrey at-the-money (ATM)} option prices are known to be expandable as fractional powers, at least in the first leading term (see  \cite[Proposition 5]{Tankov}, \cite[Proposition 4.2]{FigForde:2010}, and \cite{Muhle}).  A natural question is then to try to understand the reasons for such {\Blue radically} different asymptotic behaviors. The key assumption that allows us to obtain non-negative integer powers is the \emph{smoothness} of the L\'evy density $s$ outside the origin. Concretely, if $s$ were discontinuous or were not differentiable at a fixed log-moneyness value $z$, we {\Green would} expect that the expansions for the tail probability $\mathbb{P}(Z_{t}\geq{}z)$ and for the OTM call option price $\mathbb{E} (e^{k+Z_{t}}-1)_{+}$ {\Green would} {\DRed exhibit} fractional powers. This fact was pointed out in \cite{Marchal} for the tail probabilities and a particular type of pure-L\'evy processes. The same {\Green is} true for $\mathbb{E} (e^{k+Z_{t}}-1)_{+}$, in view of its representation as the difference of two tail probabilities (see (14) in \cite{FigForde:2010}).  In {\Green view} of this observation, it is now {\PaleGrey intuitively} {\Green clear} {\PaleGrey that} ATM option prices typically exhibit fractional leading terms. Indeed, at-the-money option prices $\mathbb{E} (e^{Z_{t}}-1)_{+}$ {\Green correspond} to the log-moneyness $z=0$ and the L\'evy density $s$, is discontinuous at $0$ for an infinite-jump activity L\'evy process.} 
\end{rem}

\subsection{Other payoff functions}\hfill

One of the advantages of our approach is that it can be applied to more general payoff functions. Concretely, consider a function of the form:
\[
    f_{z}(u):=\varphi(u){\bf 1}_{\{u\geq{}z\}},
\]
where $\varphi\in C_{b}^{\infty}$. One can easily verify that, under the conditions of Theorem \ref{MthDst} {\Red and for $z>0$},
    \begin{equation}\label{jumdifexpan3B}
    \bbe f_{z}(Z_{t})= e^{-\lambda_{\varepsilon} t}\sum_{j=1}^{n}\widetilde{a}_{j}(z)\frac{t^{j}}{j!}+O_{\varepsilon,
z_{0}}(t^{n+1}),
\end{equation}
where
\begin{align*}
    \widetilde{a}_{j}(z)&:=\sum_{p+q+r=j}
    \binom{j}{p,q,r}L_{\varepsilon}^{q}
     \left(\sum_{m=0}^{r}B_{m}^{r}(y_{0}){\Red \calL}_{1}^{m}(\widehat{f}_{p,z})\right)
    (0)\nonumber,
\end{align*}
with $\widehat{f}_{0,z}(y)=0$, and
$\widehat{f}_{k,z}(y):=\int_{\bbr}
f_{z}(y+u)\bar{s}^{*k}_{\varepsilon}(u)du=\int_{z-y}^{\infty}\varphi(y+u)\bar{s}^{*k}_{\varepsilon}(u)du$.
Indeed, from the proof of Theorem  \ref{jumpdifdist} (which is the
key to Theorem \ref{MthDst}), the only step that requires some
extra care is to justify that
\[
    \widetilde{f}_{k,z}(y) :=\lambda_{\varepsilon}^{-k}\int_{z-y}^{\infty}\varphi(y+u) \bar{s}_{\varepsilon}^{*k}(u)du
\]
is in $C^{\infty}$ and that $\sup_{y}|\widetilde{f}_{k,z}^{(j)}(y)|<\infty$. This is proved by verifying ({\PaleGrey via} induction) that
\[
    \widetilde{f}_{k,z}^{(j)}(y)=\lambda_{\varepsilon}^{-k}\int_{z-y}^{\infty}\varphi^{(j)}(y+u) \bar{s}_{\varepsilon}^{*k}(u)du
    +\lambda_{\varepsilon}^{-k}\sum_{i=0}^{j-1}(-1)^{i}\varphi^{(j-1-i)}(z) \bar{s}_{\varepsilon}^{*(k-1)}*\bar{s}_{\varepsilon}^{(i)}(z-y).
\]

Similarly, under the stronger conditions of Theorem \ref{jumpdifopt}, one can easily consider payoff functions of the form
\[
    f_{z}(u):=\varphi(u){\bf 1}_{\{u\geq{}-z\}}, \quad(z<0),
\]
with $\varphi\in C^{\infty}$ such that $|\varphi^{(j)}(u)|\leq{}
M_{j} e^{u}$ for some constant $M_{j}<\infty$ and all $j\geq{}0$.

\subsection{On the {\DRed short-time} large deviation principle for diffusions}\label{CndLD} \hfill

Large deviation results of the form (\ref{tailU0}) have recently
been developed for different stochastic volatility (SV) models. For
instance, for uncorrelated SV models, \cite{Forde2:2010} shows (\ref{tailU0}) under the following
conditions:
\begin{align}
\label{con31}\bullet&\quad\text{The function }\alpha\text{ is {\PaleGrey uniformly} bounded and {\PaleGrey Lipschitz} continuous in } \bbr.\\
\label{con32}\bullet&\quad\exists\,M_{2}>M_{1}>0,\,\,s.t.\,\,0\leq
M_{1}\leq\sigma(y)\wedge\gamma(y)\leq\sigma(y)\vee\gamma(y)\leq
M_{2}<\infty.\\
\label{con33}\bullet&\quad\sigma,\,\gamma\in
C^{\infty},\text{ and }\sigma(y)\rightarrow\sigma_{\pm},\,\,\gamma(y)\rightarrow\gamma_{\pm},\text{ as }y\rightarrow\pm\infty.\\
\label{con34}\bullet &\quad\sigma\text{ and }\gamma\text{ are diffeomorphisms with }\sigma'>0\text{ and }\gamma'>0.\\
\label{con35}\bullet&\quad\exists\,y_{c}\in\bbr,\text{ such that }\sigma''>0,\,\,\gamma''>0\,\,for\,\,y<y_{c},\,\,\sigma''<0,\,\,\gamma''<0\\
&\quad\text{for }y>y_{c}\text{ and }\sigma'\vee\gamma'<M<\infty\text{ for some }M>0.\nonumber\\
\label{con36}\bullet&\quad\text{The function
}u\mapsto\frac{\gamma(\sigma^{-1}(u))}{u}\text{ is non-increasing}.
\end{align}
We refer to \cite{Forde2:2010} for an explicit expression for the
rate function $I$, which is not relevant here. The Heston model
(\ref{Heston}) (even with correlated Wiener processes $W^{(1)}$ and
$W^{(2)}$) was also considered in \cite{Forde:2009} and
\cite{Forde:2010}.

\section{{\DRed Particular examples and analysis of the parameter contributions}}
{\DRed In this section, we shed some light on the specific forms of the general expansions obtained in the previous section for some {\Green specific} models.
We also analyze the contribution, to the call option prices, of the various parameters of the model.} 

\subsection{{\DRed Exponential L\'evy models}}\hfill

{\DRed Let us start by considering the exponential L\'evy model}, which {\DRed can be seen as} a particular case of the
jump-diffusion models (\ref{jumpdif}) and (\ref{jumpdifb}) {\DRed by setting $\alpha(y)\equiv \gamma(y)\equiv 0$ and $\sigma(y)\equiv \sigma$. In this setting,} the log-return process $Z$ {\DRed is} a L\'evy process with {\PaleGrey generating} triplet $(b,\sigma^{2},\nu)$. Then, the following expansion for the {\PaleGrey out-of-the-money} call option price holds
true (see also \cite{FigForde:2010} {\DRed for an alternative derivation}). Note that the process $Z$ in this section is the same as the process $X$ in Section 2 and therefore all the notations are transferred accordingly.
\begin{cor}\label{OMoptLevy} Let $z_{0}<0$, $n\geq 1$, and
{\PaleGrey $0<\varepsilon<-z_{0}/2(n+1)\wedge 1$}. Let $Z=(Z_{t})_{t\geq 0}$ be a
L\'evy process with triplet $(b,\sigma^{2},\nu)$ satisfying
(\ref{conB2})-(\ref{conB1}). Then there exists $t_{0}>0$ such that,
for any $z< z_{0}$ and $0<t<t_{0}$,
\begin{align}\label{OMexLevy}
    G_{t}(z) =e^{-\lambda_{\varepsilon}t}\sum_{j=1}^{n}\c_{j}(z)\, \frac{t^{j}}{j!}
    +O_{\varepsilon,z_{0}}(t^{n+1}),
\end{align}
where
\[
    \c_{j}(z):=
    \sum_{k=1}^{j} \binom{j}{k} L_{\varepsilon}^{j-k}\widehat{h}_{k,z}(0),
\]
with
\begin{align*}
    \widehat{h}_{k,z}(y)
    :=\int_{\bbr} \left(e^{z+y+u}-1\right)_{+}\bar{s}_{\varepsilon}^{*k}(u) du.
\end{align*}
\end{cor}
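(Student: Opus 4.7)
The plan is to view Corollary \ref{OMoptLevy} as the specialization of Theorem \ref{jumpdifopt} in which the stochastic volatility component $U$ degenerates to zero (the continuous Gaussian component is already absorbed into $X$'s generating triplet $(b,\sigma^{2},\nu)$). Concretely, I would rerun the argument underlying Theorem \ref{jumpdifopt} with $U_{t}\equiv 0$, using the decomposition $Z=X^{\varepsilon}+\bar{X}^{\varepsilon}$ introduced in Section \ref{Sect:Not}.

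Starting from the conditioning identity (analogous to (\ref{T1b})--(\ref{T2b}) with $U\equiv 0$), I would write, for $f_{z}(u)=(e^{z+u}-1)_{+}$,
\begin{equation*}
	G_{t}(z)
	=e^{-\lambda_{\varepsilon}t}\sum_{k=0}^{n}\frac{(\lambda_{\varepsilon}t)^{k}}{k!}\,\bbe\bigl[\widetilde{h}_{k,z}(X^{\varepsilon}_{t})\bigr]
	+e^{-\lambda_{\varepsilon}t}\sum_{k=n+1}^{\infty}\frac{(\lambda_{\varepsilon}t)^{k}}{k!}\,\bbe f_{z}\!\left(X^{\varepsilon}_{t}+\textstyle\sum_{i=1}^{k}\xi_{i}^{\varepsilon}\right),
\end{equation*}
where $\widetilde{h}_{k,z}(y):=\bbe f_{z}(y+\sum_{i=1}^{k}\xi_{i}^{\varepsilon})=\lambda_{\varepsilon}^{-k}\widehat{h}_{k,z}(y)$ by independence of $X^{\varepsilon}$ and $(\xi_{i}^{\varepsilon})$ and the convolution formula for $\bar{s}_{\varepsilon}^{*k}$. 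The tail sum is handled exactly as in Theorem \ref{jumpdifopt}: combining the exponential moment condition (\ref{conB1}), which controls $\bbe f_{z}(X^{\varepsilon}_{t}+\cdots)$, with the Poisson factor $(\lambda_{\varepsilon}t)^{k}/k!$ yields an $O_{\varepsilon,z_{0}}(t^{n+1})$ bound uniform in $z\leq z_{0}$.

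For each $0\leq k\leq n$, the plan is to expand $\bbe \widehat{h}_{k,z}(X^{\varepsilon}_{t})$ via the iterated Dynkin formula (\ref{IDynkin}). This requires verifying the smoothness hypothesis and subexponential growth (\ref{NCID}) needed for Lemma \ref{prop:IDynkin}. Smoothness of $\widehat{h}_{k,z}$ follows from (\ref{conB2}) (so that $\bar{s}_{\varepsilon}\in C^{\infty}(\bbr)$ since $\bar{c}_{\varepsilon}$ vanishes in a neighborhood of the origin) together with standard convolution arguments; the bound $|\widehat{h}_{k,z}^{(j)}(y)|=O(e^{y})$ at $y\to+\infty$ (and boundedness at $-\infty$ since $f_{z}\equiv 0$ there) follows from $\int_{|u|>1}e^{u}\nu(du)<\infty$ in (\ref{conB1}). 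Applying (\ref{IDynkin}) to order $n-k$ for each $k\geq 1$ and to order $n$ for $k=0$, and then collecting coefficients of $t^{j}$, produces the claimed expression for $\widehat{c}_{j}(z)$ provided the $k=0$ contribution vanishes.

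The step requiring most care is precisely this last vanishing: one must check that $L_{\varepsilon}^{m}f_{z}(0)=0$ for all $1\leq m\leq n$. Here the constraint $\varepsilon<-z_{0}/(2(n+1))$ becomes crucial. Since $f_{z}$ vanishes on $(-\infty,-z)$ and $L_{\varepsilon}$ spreads support by at most $\varepsilon$ per application (because the nonlocal part of $L_{\varepsilon}$ integrates only against $c_{\varepsilon}(u)\nu(du)$, supported in $[-\varepsilon,\varepsilon]$), the function $L_{\varepsilon}^{m}f_{z}$ still vanishes on $(-\infty,-z-m\varepsilon)$; for $m\leq n$ and the above choice of $\varepsilon$, one has $-z-m\varepsilon>-z_{0}/2>0$, so evaluation at $0$ yields zero. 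Rearranging the surviving terms according to the new index $j=k+\ell$, where $\ell$ is the Dynkin iteration order, and using $\lambda_{\varepsilon}^{k}\widetilde{h}_{k,z}=\widehat{h}_{k,z}$, gives exactly $\widehat{c}_{j}(z)=\sum_{k=1}^{j}\binom{j}{k}L_{\varepsilon}^{j-k}\widehat{h}_{k,z}(0)$, completing the proof.
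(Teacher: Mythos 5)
Your overall architecture (condition on the number of jumps of $\bar{X}^{\varepsilon}$, apply the iterated Dynkin formula to the smoothed payoffs $\widehat{h}_{k,z}$ for $k\geq 1$, absorb the $k>n$ tail into the error) is the same as the paper's, which obtains the corollary by specializing the proof of Theorem \ref{jumpdifopt} to the case where the stochastic volatility factor is trivial. However, your treatment of the $k=0$ term has a genuine gap. You propose to apply the iterated formula (\ref{IDynkin}) to $f_{z}(u)=(e^{z+u}-1)_{+}$ itself and then argue that the coefficients $L_{\varepsilon}^{m}f_{z}(0)$ vanish by support propagation. But $f_{z}$ is not $C^{2}$ at $u=-z$ (let alone $C^{2n+2}$), so Lemma \ref{prop:IDynkin} does not apply: the iterated generators $L_{\varepsilon}^{m}f_{z}$ fail to be defined in a neighborhood of $-z$, and in particular the remainder term $\int_{0}^{1}(1-\alpha)^{n}\,\bbe\bigl(L_{\varepsilon}^{n+1}f_{z}(X^{\varepsilon}_{\alpha t})\bigr)\,d\alpha$ is meaningless, since $X^{\varepsilon}_{\alpha t}$ charges every neighborhood of $-z$. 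Showing that the coefficients at $0$ \emph{would} be zero if they existed does not control this remainder, so your argument does not establish that the $k=0$ contribution is $O_{\varepsilon,z_{0}}(t^{n+1})$.

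The paper handles this term without any Dynkin expansion: as in step (1) of the proof of Theorem \ref{jumpdifopt} (Appendix C, see (\ref{expfUX})), one bounds
\begin{equation*}
\bbe\left(e^{z+X^{\varepsilon}_{t}}-1\right)_{+}\;\leq\;\bbe\!\left(e^{X^{\varepsilon}_{t}}{\bf 1}_{\{X^{\varepsilon}_{t}\geq -z\}}\right)\;\leq\;\left(\bbe\, e^{2X^{\varepsilon}_{t}}\right)^{1/2}\left(\bbp\bigl(X^{\varepsilon}_{t}\geq -z\bigr)\right)^{1/2},
\end{equation*}
and then invokes the concentration estimate (\ref{TailEstLevy}) for the small-jump component to conclude that this is $O_{\varepsilon,z_{0}}(t^{n+1})$ uniformly in $z\leq z_{0}$. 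Note that the Cauchy--Schwarz step halves the exponent produced by (\ref{TailEstLevy}); this is precisely where the factor $2$ in the hypothesis $\varepsilon<-z_{0}/2(n+1)$ is used, not in the support-propagation count you give (which would already be satisfied with $\varepsilon<-z_{0}/(n+1)$). Replacing your $k=0$ argument with this moment/tail estimate repairs the proof; the rest of your proposal (smoothness and subexponential growth of $\widehat{h}_{k,z}$ via (\ref{conB2})--(\ref{conB1}), the reindexing $j=k+\ell$, and the tail sum over $k>n$) is sound and matches the paper.
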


{\Red As {\Green in} Corollary \ref{jumpdifinthemoney}, we also have the following result for the in-the-money case.}
\begin{cor}\label{IMoptLevy}
Let $z_{0}>0$, $n\geq 1$, and $0<\varepsilon<z_{0}/(n+1)\wedge{}1$.
Then, there exists $t_{0}>0$ such that, for
    any $z>z_{0}$ and $0<t<t_{0}$,
\begin{align}\label{IMexLevy}
    G_{t}(z) = e^{z}-1+e^{-\lambda_{\varepsilon}t}\sum_{j=1}^{n}\tilde{c}_{j}(z)\, \frac{t^{j}}{j!}
    +O_{\varepsilon,z_{0}}(t^{n+1}),
\end{align}
where
\[
    \tilde{c}_{j}(z):=
    \sum_{k=1}^{j}
    \binom{j}{k}L_{\varepsilon}^{j-k}\tilde{h}_{k,z}(0),\quad\tilde{h}_{k,z}(y)
    :=\int_{\bbr} \left(e^{z+y+u}-1\right)_{-}\bar{s}_{\varepsilon}^{*k}(u) du.
\]
\end{cor}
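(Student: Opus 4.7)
The plan is to reduce the in-the-money case to a put-type expectation via put-call parity and then reapply the argument that proves Corollary \ref{OMoptLevy}.

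The put-call parity step: from the identity $(e^{z+Z_t}-1)_+ = (e^{z+Z_t}-1) + (e^{z+Z_t}-1)_-$ together with the martingale property $\bbe(e^{Z_t})=1$ (valid in the exponential L\'evy setting thanks to the drift specification (\ref{con30}), suitably extended to include the contribution $-\sigma^{2}/2$ of $\mu$), one obtains
\[
G_t(z) = e^z - 1 + H_t(z), \qquad H_t(z) := \bbe \bigl(e^{z+Z_t}-1\bigr)_-.
\]
Thus it suffices to show $H_t(z) = e^{-\lambda_\varepsilon t}\sum_{j=1}^n \tilde{c}_j(z)\, t^j/j! + O_{\varepsilon, z_0}(t^{n+1})$. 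This is precisely the content of Corollary \ref{OMoptLevy} with the bounded payoff $f_z(u) := (e^{z+u}-1)_-$ replacing $(e^{z+u}-1)_+$; note that $f_z$ is supported in $\{u \leq -z\} \subseteq \{u \leq -z_0\}$, so its support is bounded away from the origin exactly as in the out-of-the-money case, just reflected in sign.

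The rest of the proof then mirrors Corollary \ref{OMoptLevy}. Writing $Z = X^\varepsilon + \bar{X}^\varepsilon$ and conditioning on the number of jumps of $\bar{X}^\varepsilon$ as in (\ref{T1b})-(\ref{T2b}), the tail for $k \geq n+1$ contributes $O(t^{n+1})$ because $\|f_z\|_\infty \leq 1$. The $k = 0$ term is controlled by $\bbe f_z(X_t^\varepsilon) \leq \bbp(|X_t^\varepsilon| \geq z_0)$, which, by the tail estimate (\ref{TailEstLevy}) with $a \in (0, \varepsilon^{-1})$ chosen so that $a z_0 > n+1$ (possible because $\varepsilon < z_0/(n+1)$), is also $O_{\varepsilon, z_0}(t^{n+1})$ uniformly in $z > z_0$. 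For $1 \leq k \leq n$, the sum $\sum_{i=1}^k \xi_i^\varepsilon$ has density $\bar{s}_\varepsilon^{*k}/\lambda_\varepsilon^k$, so
\[
\bbe f_z\Bigl(X_t^\varepsilon + \sum_{i=1}^k \xi_i^\varepsilon\Bigr) = \lambda_\varepsilon^{-k}\,\bbe \tilde{h}_{k,z}(X_t^\varepsilon).
\]
Applying the iterated Dynkin formula (Lemma \ref{prop:IDynkin}) to each such expectation up to order $n-k$ and then reindexing the resulting double sum via $m = k + j$ recovers the stated coefficient $\tilde{c}_m(z) = \sum_{k=1}^m \binom{m}{k} L_\varepsilon^{m-k}\tilde{h}_{k,z}(0)$.

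The one place that needs care is verifying that $\tilde{h}_{k,z} \in C^{2n+2}$ with derivatives satisfying (\ref{NCID}), uniformly in $z > z_0$, so that Lemma \ref{prop:IDynkin} applies and the remainder inherits the required uniformity. Writing $\tilde{h}_{k,z}(y) = \int_{-\infty}^{-(z+y)}\bigl(1 - e^{z+y+u}\bigr)\bar{s}_\varepsilon^{*k}(u)\,du$, the boundary term at the kink vanishes upon the first differentiation in $y$ (since $1 - e^0 = 0$), and subsequent differentiations produce only integrands carrying the factor $e^{z+y+u} \leq 1$ on the region of integration, together with values and derivatives of $\bar{s}_\varepsilon^{*k}$ at $-(z+y)$. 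Since $\bar{s}_\varepsilon$ lies in $C^\infty(\bbr)$ with bounded derivatives (by (\ref{conB2}), and because $\bar{s}_\varepsilon$ vanishes in a neighborhood of the origin), all derivatives of $\tilde{h}_{k,z}$ are in fact bounded functions of $y$, uniformly in $z > z_0$, which is more than enough to conclude.
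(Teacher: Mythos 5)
Your proposal is correct and follows essentially the same route as the paper: put--call parity (using $\bbe e^{Z_t}=1$) reduces the claim to expanding $\bbe\left(e^{z+Z_t}-1\right)_-$, which is then handled exactly as in the out-of-the-money case via the jump-conditioning decomposition (\ref{T1b})--(\ref{T2b}), the tail estimate (\ref{TailEstLevy}), and the iterated Dynkin formula. Your added observation that the negative-part payoff is bounded and supported in $\{u\leq -z_0\}$ — so that the estimates are actually those of the simpler tail-probability proof rather than the unbounded OTM payoff — is accurate and correctly accounts for the weaker constraint $\varepsilon<z_0/(n+1)$ in this corollary.
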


\subsection{{\DRed Second-order expansions}}\hfill

Here are the first two coefficients of (\ref{jumdifexpan3})  for
$\varepsilon>0$ small enough:
\begin{align*}
    \widehat{a}_{1}(z)&=B_{0}^{0}(y_{0}) \widehat{f}_{1,z}(0)=\int_{\bbr}f_{z}(u) \bar{s}_{\varepsilon}(u)du=\int_{z}^{\infty}{{s}(u)}du;\\
    \widehat{a}_{2}(z)&=\underbrace{2 L_{\varepsilon}( \widehat{f}_{1,z})(0)}_{p=1,q=1,r=0}
    +\underbrace{2 B_{1}^{1}(y_{0}) {\Red \calL}_{1}(\widehat{f}_{1,z})(0)}_{p=1,q=0,r=1}
    +\underbrace{\widehat{f}_{2,z}(0)}_{p=2,q=0,r=0}\\
    &=2\left(b_{\varepsilon}{{s}(z)}-\int_{\bbr}\int_{0}^{1}{{s}'(z-\beta u)}(1-\beta)d\beta u^{2} s_{\varepsilon}(u)du\right)\\
    &\quad -\sigma^{2}(y_{0}){({s}'(z)+{s}(z))}+\int_{\bbr^{2}}{\bf 1}_{\{u_{1}+u_{2}\geq{} z\}}\bar{s}_{\varepsilon}(u_{1})
    \bar{s}_{\varepsilon}(u_{2}) du_{1} du_{2}.
\end{align*}
The corresponding coefficients for (\ref{jumdifexpan2}) are obtained
as above with $f_{z}(x) ={\bf 1}_{\{x\geq{}z\}}$ replaced by
$f_{z}(y):=(e^{z+y}-1)_{+}$ with $z<0$. Hence, for
$\varepsilon>0$ small enough,
\begin{align}
    \widehat{b}_{1}(z)&=\int_{\bbr}\left(e^{z+u}-1\right)_{+}{{s}(u)}du;\nonumber
    \\
    \widehat{b}_{2}(z)&=\sigma^{2}(y_{0}){{s}(-z)}+2b_{\varepsilon}\!\int_{-z}^{\infty}\!e^{z+u}{{s}(u)}du \nonumber \\
    &\quad +\int_{\bbr^{2}}\!\!\left(e^{z+u_{1}+u_{2}}-1\right)_{+}\!\!\bar{s}_{\varepsilon}(u_{1})
    \bar{s}_{\varepsilon}(u_{2}) du_{1} du_{2} \label{Eq:hatb2Trm}\\
    &\quad+2\int_{\bbr}\int_{0}^{1} (1-\beta)\left(
    \int_{-z-\beta u}^{\infty}e^{z-\beta u+w}{{s}(w)}dw+{{s}(-z-\beta u)}\right)d\beta u^{2} s_{\varepsilon}(u)du.\nonumber
\end{align}
In the previous expressions one can substitute $c_{\varepsilon}(y)$
and $\bar{c}_{\varepsilon}(y)$ by ${\bf 1}_{0<|y|<\varepsilon}$ and
${\bf 1}_{|y|\geq \varepsilon}$, respectively.

\begin{rem}
{\DRed Combining (\ref{FExp1}), (\ref{Callopt}), and the expression for
$\hat{b}_{1}(z)=\b_{1}(z)$ above {\Green gives} the {\DRed following} expansion for the
price function of the {\PaleGrey out-of-the-money} call option near the expiration
$T$:}
\begin{align}
C(t,s)&=Ke^{-r(T-t)}G_{T-t}(\ln s-\ln
K)\nonumber\\
\label{OptExp1}&=(T-t)\int_{\bbr}\left(se^{u}-K\right)_{+}{{s}(u)}du+O_{\varepsilon,\ln(s/K)}\big((T-t)^{2}\big).
\end{align}
{\DRed In particular, {\Green the} leading term for out-of-the-money call option prices is ``dominated" by the jump component of the model regardless of the underlying continuous stochastic volatility component $U$.}
\end{rem}

\begin{rem}\label{CnnexLevy}
{\DRed Given that in (\ref{FTT}), $B_{0}^{0}(y_{0})$ and $B_{1}^{1}(y_{0})$
depend only on $\sigma(y_{0})$, it is interesting to note that the
first two coefficients in our expansions (\ref{jumdifexpan3}) and
(\ref{jumdifexpan2}) coincide with the coefficients corresponding to
an exponential L\'evy model with volatility $\sigma=\sigma(y_{0})$. In fact, the initial values
of $\alpha$ and $\gamma$ begin to appear with the {\PaleGrey coefficients}
$\widehat{a}_{3}(z)$ and $\widehat{b}_{3}(z)$ through the
coefficients $B_{1}^{2}$ in (\ref{FTT}).}
\end{rem}

\subsection{{\DRed Parameters contributions}}\hfill

{\DRed For simplicity, let $S_{0}=1$ and $r=0$. 
The {\DRed first term on the right-hand side of} (\ref{Eq:hatb2Trm}) {\DRed  indicates} the main contribution of the spot volatility $\sigma(y_{0})$ {\Green in} the call price. Indeed, recalling (\ref{Eq:AntIntCOP}) and the subsequent interpretation of $-z$ as the log-moneyness $\kappa$, one can say that the spot volatility has the effect of increasing the call price by 
\begin{equation}\label{Eq:CntrVol2ndTrm}
	\sigma^{2}(y_{0})e^{\kappa}s(\kappa)\frac{t^{2}}{2}\left(1+o(1)\right).
\end{equation}
This observation was first pointed out in \cite{FigForde:2010} for an exponential L\'evy model. 

As explained in Remark \ref{CnnexLevy}, the specific information {\Green present in}  the stochastic volatility model manifests itself up to {\PaleGrey an} $O(t^{3})$ term. Indeed, by analyzing each of the terms of the expansion for $\widehat{b}_{3}$ given in (\ref{Eq:ExpHatbGen}), the spot drift $\alpha(y_{0})$ and {\Green the} diffusion component $\gamma(y_{0})$ of the underlying stochastic volatility factor $Y$ {\Green are} only felt through the term $B_{1}^{2}(y_{0}) e^{\kappa}s(\kappa)t^{3}/2$, which, in light of (\ref{FTT}), takes the form:
\begin{equation}\label{Eq:SVLOEff}
	\left(\frac{\gamma^{2}(y_{0})}{2}\left(\sigma(y_{0})\sigma''(y_{0})+
(\sigma'(y_{0}))^{2}\right)+\alpha(y_{0})\sigma(y_{0})\sigma'(y_{0})\right)  e^{\kappa}s(\kappa)\, \frac{t^{3}}{2}.
\end{equation}
For instance, in the Heston model (\ref{Heston}), {\DRed $\sigma(y)=\sqrt{y}$ and} the effect of the stochastic volatility starts to {\DRed being felt} through the third order term 
\begin{equation}\label{Eq:EffChi}
	\frac{\chi(\theta -y_{0})}{4}e^{\kappa}s(\kappa)\,t^{3}.
\end{equation}
So, when the volatility is high {\DRed (low), relative to its average value $\theta$, a mean-reversion speed of $\chi$ will decrease (increase) the call price by the amount (\ref{Eq:EffChi})}. This result is intuitive as the larger the speed is, the faster one would expect the volatility to come back to its average value {\DRed when this is high, hence resulting in a discount}. In a Heston model, the effect of the volatility of volatility $v$ appears up to a fourth degree term of the form 
\[
	v^{2} y_{0}e^{\kappa}\left(s''(\kappa)-s'(\kappa)\right)\frac{t^{4}}{4!}.
\]
In contrast for the exponential OU model (\ref{OUExp}), the quantity (\ref{Eq:SVLOEff}) simplifies {\Green to}
\[
	\frac{e^{2 y_{0}}\left(\chi(\theta -y_{0})+v^{2}\right)}{2}e^{\kappa}s(\kappa)\, t^{3}
\]
and both the speed $\chi$ of mean reversion and the volatility  of volatility $v$ already appear in the third order term. {\DRed {\Green In} spite of appearing in a third-order term, in some situations the contribution of the stochastic volatility model could be significant such as under a fast-mean reverting framework where the value of $\chi$ is high. This {\Green phenomenon} has already been suggested in the literature, partially motivated by empirical evidence (see, e.g., \cite{Fouque}).}

By analyzing the terms in (\ref{Eq:ExpHatbGen}), one can see that in addition to the term (\ref{Eq:SVLOEff}) {\DRed (which {\Green clearly} involves the information of $\sigma$ at $y_{0}$)}, the contribution of the volatility $\sigma(\cdot)$ to the third-order term {\DRed $\widehat{b}_{3}$} also appears in the following term(s):
\begin{equation}\label{Eq:Cntrthirdtermssigm}
	\left(\frac{3}{4}\sigma^{4}(y_{0})\left(s''(\kappa)+s'(\kappa)\right)
	+\frac{3\sigma^{2}(y_{0})}{2}\left(\bar{s}^{*2}_{\varepsilon}(\kappa)-2b_{\varepsilon} s'(\kappa)\right)\right)e^{\kappa}\frac{t^{3}}{3!}.
\end{equation}

Identifying the effect of each model component at each level of resolution $t,t^{2},\dots$, might be more enlightening than just explicitly writing down each of the terms. For instance, one can say that under a Heston model with independent L\'evy jumps, the first and second-order terms are the same as those of the underlying exponential L\'evy model with volatility $\sigma(y_{0})=y_{0}$; while the third order term is the superposition of the corresponding third order term of the underlying  exponential L\'evy model with volatility $\sigma(y_{0})=y_{0}$ and the term (\ref{Eq:EffChi}). From here, we can write the third order term of the underlying exponential L\'evy model using the expansion of Theorem \ref{jumpdifopt} or using the following representation obtained in \cite{FigForde:2010} (see the proof of Proposition 2.2 therein):
\begin{equation}\label{Eq:FigFRep}
	\bbe\left(e^{\widetilde{X}_{t}}-e^{\kappa}\right)_{+}=\bbp({X}^{*}_{t}\geq{}\kappa)
	-e^{\kappa}\bbp(\widetilde{X}_{t}\geq{}\kappa).
\end{equation}
Here, $\widetilde{X}$ represents a L\'evy process with triplet $(\tilde{b},\sigma^{2}(y_{0}),\nu)$, while ${X}^{*}$ represents a L\'evy process with triplet $(b^{*},\sigma^{2}(y_{0}),\nu^{*})$, where 
\begin{align}
\label{Eq:ExprTildeb}
	&
	\tilde{b}=-\frac{1}{2}\sigma^2(y_{0})- \int_{-\infty}^{\infty} (e^x-1- x1_{|x|\le
1})\nu(dx),\quad \nu^{*}(dx)=e^{x}\nu(dx),\\
\label{Eq:ExprbStar}
&b^{*}=\frac{1}{2}\sigma^2(y_{0})- \int_{-\infty}^{\infty} (e^x-1- x1_{|x|\le
1})\nu(dx)+\int_{|x|\leq{}1} x \left(e^{x}-1\right)\nu(dx).
\end{align}
Next, let $d_{j}(\kappa;b,\sigma,\nu)$ be the $j^{th}$-order term in the tail expansion {\DRed of a L\'evy process $X$ with triplet $(b,\sigma^{2},\nu)$}, i.e., {\DRed the $d_{j}$'s are such that} 
\[
	\bbp(X_{t}\geq{}\kappa)=\sum_{j=1}^{n}d_{j}(\kappa; b,\sigma,\nu) \frac{t^{j}}{j!}+O(t^{n+1}).
\]
{\DRed Then}, the third-order term of the option price (\ref{Eq:FigFRep}) takes the form
\[
	\left(d_{3}(\kappa;b^{*},\sigma(y_{0}),\nu^{*})
	-e^{\kappa}d_{3}(\kappa;\tilde{b},\sigma(y_{0}),\nu)\right)\frac{t^{3}}{3!},
\]
which in turn {\Green allows} us to write the third-order term of the model (\ref{Eq:MIntZ})-(\ref{jumpdifb}) with the Heston specification (\ref{Heston}) as 
\[
	\left(d_{3}(\kappa;b^{*},\sigma(y_{0}),\nu^{*})
	-e^{\kappa}d_{3}(\kappa;\tilde{b},\sigma(y_{0}),\nu)
	+\frac{3\chi(\theta -y_{0})}{2}e^{\kappa}s(\kappa)\right)\frac{t^{3}}{3!}.
\]
We can further disentangle the volatility effect from the second and third order terms by using (\ref{Eq:CntrVol2ndTrm}) and (\ref{Eq:Cntrthirdtermssigm}). {\Green Specifically,} under the Heston model (where $\sigma(y)=\sqrt{y}$), the second-order term admits the representation
\begin{equation}\label{Eq:SecndOrdCrr}
	\left(d_{2}(\kappa;b^{*},0,\nu^{*})
	-e^{\kappa}d_{2}(\kappa;\tilde{b},0,\nu)+y_{0}e^{\kappa}s(\kappa)\right)\frac{t^{2}}{2},
\end{equation}
while the third order term can be written as 
\begin{align}\nonumber
	\frac{t^{3}}{3!}&\bigg(d_{3}(\kappa;b^{*},0,\nu^{*})
	-e^{\kappa}d_{3}(\kappa;\tilde{b},0,\nu)\\
	\nonumber
	&\quad
	+\frac{3\chi(\theta -y_{0})}{2}e^{\kappa}s(\kappa) +\frac{3}{4}y_{0}^{2}\left(s''(\kappa)+s'(\kappa)\right)e^{\kappa}\\
	\label{Eq:LstLnt3Trm}
	&\quad +\frac{3}{2}y_{0}\left(\bar{s}^{*2}_{\varepsilon}(\kappa)-2b_{\varepsilon} s'(\kappa)-2s(\kappa)\lambda_{\varepsilon}\right)e^{\kappa}\bigg).
\end{align}
In {\Green these} expressions, $b^{*}$ and $\tilde{b}$ are computed as in (\ref{Eq:ExprTildeb})-(\ref{Eq:ExprbStar}) but setting $\sigma(y_{0})=0$. {\DRed Note also that the term $-s(\kappa)e^{\kappa}\lambda_{\varepsilon}t^{3}/2$ in (\ref{Eq:LstLnt3Trm}) comes from expanding $e^{-\lambda_{\varepsilon}t}$ {\PaleGrey in} the second-order term $e^{-\lambda_{\varepsilon}t}\,\widehat{b}_{2}(z)t^{2}/2$, and isolating the contribution of $\sigma(y_{0})$ in the latter term (see (\ref{Eq:CntrVol2ndTrm})).}
The expressions for $d_{3}(\kappa;b^{*},0,\nu^{*})$ and  $d_{3}(\kappa;\tilde{b},0,\nu)$ can easily be written from the expressions in \cite{FigHou:2008} (see Theorem 4.3 therein) or from Theorem \ref{MthDst} above.} 

{\DRed Alternatively, one could also use (\ref{Eq:SecndOrdCrr}) and (\ref{Eq:LstLnt3Trm}) to develop {\PaleGrey the} second and third order SV corrections to call option prices of pure-jump L\'evy models. Concretely, the following expansions hold for the model (\ref{Eq:MIntZ})-(\ref{jumpdifb}) with the Heston SV specification (\ref{Heston}):
\begin{align*}
	\bbe\left(S_{t}-e^{\kappa}\right)_{+}
	&=\bbe\left(e^{X_{t}}-e^{\kappa}\right)_{+}+\sigma^{2}(y_{0})e^{\kappa}s(\kappa)\frac{t^{2}}{2}+O(t^{3}),\\		
	&=\bbe\left(e^{X_{t}}-e^{\kappa}\right)_{+}+\sigma^{2}(y_{0})e^{\kappa}s(\kappa)\frac{t^{2}}{2}\\
	&\quad +\bigg(\frac{3\chi(\theta -y_{0})}{2}e^{\kappa}s(\kappa) +\frac{3}{4}y_{0}^{2}\left(s''(\kappa)+s'(\kappa)\right)e^{\kappa}\\
	\label{Eq:LstLnt3Trm}
	&\quad \qquad +\frac{3}{2}y_{0}\left(\bar{s}^{*2}_{\varepsilon}(\kappa)-2b_{\varepsilon} s'(\kappa)-2s(\kappa)\lambda_{\varepsilon}\right)e^{\kappa}\bigg)\frac{t^{3}}{3!} +O(t^{4}).
\end{align*}
Above,  $X$ denotes the pure-jump L\'evy model underlying $S$, i.e., $X$ is a L\'evy process with L\'evy triplet $(b,0,\nu)$, with $b$ chosen as in (\ref{con30}). 
The call option prices $\bbe\left(e^{X_{t}}-e^{\kappa}\right)_{+}$ can be computed using Fourier inversion formulas or, in some particular cases, via {\Green analytic} closed form formulas.}

\section{Asymptotics of the implied volatility}
Using the leading term of the time-$t$ price for the {\PaleGrey out-of-the-money}
call option as computed in the previous section, we now obtain the
asymptotic behavior of the implied volatility $\hat{\sigma}(t;s)$
near $T$. {\DRed This} is defined implicitly by the equation
\begin{equation}\label{BS1}
C(t,s)=C_{BS}(t,s;\hat{\sigma}(t;s),r),
\end{equation}
where $C_{BS}(t,s;\sigma,r)$ is the classical time-$t$ Black-Scholes
call-option price  corresponding to an interest rate $r$, a
volatility $\sigma$, and time-$t$ spot price $s$. First, we need the following well-known result (see, e.g., Lemma 2.5 in \cite{Gatheral:2009}){\PaleGrey .}
\begin{lem}
Let $C_{BS}(t,s;\sigma,r)$ be the classical Black-Scholes call price
function. Then, as $t\uparrow T$,
\begin{align}\label{OptBS1}
C_{BS}(t,s;\sigma,r)&=\frac{1}{\sqrt{2\pi}}\frac{K\sigma^{3}(T-t)^{3/2}}{(\ln
K-\ln s)^{2}}\exp\left(-\frac{(\ln K-\ln
s)^{2}}{2\sigma^{2}(T-t)}\right)\\
&\qquad{}\times\exp\left(-\frac{\ln K-\ln s}{2}+\frac{r(\ln K-\ln
s)}{\sigma^{2}}\right)+R(t,s;\sigma,r).\nonumber
\end{align}
The remainder satisfies
\begin{equation}
|R(t,s;\sigma,r)|\leq M(T-t)^{5/2}\exp\left(-\frac{(\ln K-\ln
s)^{2}}{2\sigma^{2}(T-t)}\right),
\end{equation}
where $M=M(s,\sigma,r,K)$ is uniformly bounded if all the indicated
parameters vary in a bounded region.
\end{lem}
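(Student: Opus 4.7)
The plan is a direct calculation from the closed-form Black--Scholes price
\[
	C_{BS}(t,s;\sigma,r)=sN(d_{1})-Ke^{-r\tau}N(d_{2}),\qquad \tau:=T-t,
\]
with $d_{1,2}:=[\ln(s/K)+(r\pm\sigma^{2}/2)\tau]/(\sigma\sqrt{\tau})$ and $N,\phi$ the standard normal c.d.f. and density. Since the lemma is used only in the out-of-the-money regime, I would work under $s<K$, in which case both $d_{i}\to-\infty$ as $\tau\downarrow 0$; the case $s>K$ follows by the symmetry $N(x)=1-N(-x)$ together with put-call parity. The key analytical input is the Mills ratio expansion
\[
	1-N(x)=\frac{\phi(x)}{x}\Bigl(1-\frac{1}{x^{2}}+\frac{3}{x^{4}}+O(x^{-6})\Bigr),\qquad x\to+\infty,
\]
which, applied to $N(d_{i})=1-N(-d_{i})$ (with $-d_{i}>0$), yields $N(d_{i})=\phi(|d_{i}|)/|d_{i}|-\phi(|d_{i}|)/|d_{i}|^{3}+O(\phi(|d_{i}|)/|d_{i}|^{5})$.

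The crucial algebraic simplification is the identity $s\phi(d_{1})=Ke^{-r\tau}\phi(d_{2})$, which follows from
\[
	\frac{d_{2}^{2}-d_{1}^{2}}{2}=-\frac{(d_{1}-d_{2})(d_{1}+d_{2})}{2}=-\ln(s/K)-r\tau.
\]
Combined with $d_{1}-d_{2}=\sigma\sqrt{\tau}$, the leading Mills contributions of $sN(d_{1})$ and $Ke^{-r\tau}N(d_{2})$ cancel up to
\[
	Ke^{-r\tau}\phi(|d_{2}|)\Bigl(\frac{1}{|d_{1}|}-\frac{1}{|d_{2}|}\Bigr)=Ke^{-r\tau}\phi(|d_{2}|)\,\frac{\sigma\sqrt{\tau}}{|d_{1}|\,|d_{2}|},
\]
which explains the emergence of $\tau^{3/2}$ rather than the naive $\tau^{1/2}$: since $|d_{1}||d_{2}|\sim(\ln K-\ln s)^{2}/(\sigma^{2}\tau)$, this expression equals $K\sigma^{3}\tau^{3/2}\phi(|d_{2}|)/(\ln K-\ln s)^{2}$ to leading order. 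Writing $\phi(|d_{2}|)=(2\pi)^{-1/2}\exp(-d_{2}^{2}/2)$ and expanding $d_{2}^{2}/2=(\ln K-\ln s)^{2}/(2\sigma^{2}\tau)+(\ln K-\ln s)/2-r(\ln K-\ln s)/\sigma^{2}+O(\tau)$ produces exactly the two exponential factors in the statement.

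The remainder $R$ collects the higher Mills tails $O(\phi(|d_{i}|)/|d_{i}|^{5})$ together with the errors from replacing $|d_{1}||d_{2}|$ and $d_{2}^{2}$ by their leading $\tau\to 0$ behaviors. Each additional factor of $1/|d_{i}|$ carries a $\sigma\sqrt{\tau}/|\ln K-\ln s|$, so these corrections produce precisely an $O(\tau^{5/2})\exp(-(\ln K-\ln s)^{2}/(2\sigma^{2}\tau))$ bound, with a constant $M=M(s,\sigma,r,K)$ depending continuously on the parameters over any bounded region bounded away from $s=K$ and $\sigma=0$. The main obstacle is therefore the careful bookkeeping of the cancellations: without invoking the identity $s\phi(d_{1})=Ke^{-r\tau}\phi(d_{2})$ the calculation would yield a spurious $O(\tau^{1/2})$ leading term, and one must subtract the two Mills expansions in the correct order to expose the true $\tau^{3/2}$ scaling.
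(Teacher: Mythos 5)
Your derivation is correct, and the key points are all in place: the Mills-ratio expansion of $N(d_i)$, the identity $s\phi(d_1)=Ke^{-r\tau}\phi(d_2)$ that forces the cancellation of the naive $O(\tau^{1/2})$ terms, the resulting $\sigma\sqrt{\tau}/(|d_1||d_2|)\sim \sigma^3\tau^{3/2}/(\ln K-\ln s)^2$ factor, and the expansion of $d_2^2/2$ producing the two exponential prefactors; the leftover Mills terms and the $1+O(\tau)$ relative corrections all sit inside the claimed $\tau^{5/2}$ remainder. Note, however, that the paper does not prove this lemma at all — it is quoted as a known result with a citation to Lemma 2.5 of the Gatheral--Hsu--Laurence--Ouyang--Wang reference — so there is no in-paper argument to compare against; your computation is essentially the standard derivation underlying that cited lemma. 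Two small caveats: the statement as written is only meaningful for the out-of-the-money case $s<K$ (for $s>K$ the call price tends to $s-K$, not to $0$, so the asserted asymptotics cannot hold literally; put--call parity gives the analogous statement for the time value rather than for $C_{BS}$ itself), and the uniform boundedness of $M$ requires the parameters to stay away from $s=K$ and $\sigma=0$ in addition to being bounded — both points you correctly flag, and both are consistent with how the lemma is actually used in the paper.
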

The next result gives the asymptotic behavior of $\hat{\sigma}(t,s)$. This has already been obtained for a
pure-L\'evy processes (see, e.g., \cite{Tankov} and
\cite{FigForde:2010}) and is presented here for the sake of
completeness{\PaleGrey .}
\begin{prop}\label{ImplVol1stOrd}
{\DRed Let the dynamics of $Z$ be given by (\ref{jumpdif}) and (\ref{jumpdifb}), and the
conditions of both Theorem \ref{jumpdifdist} and Corollary
\ref{IDynkinSV} as well as (\ref{con320}) be satisfied.}
Let $\hat{\sigma}(t;s)$ be the implied volatility when {\Red the time-$t$ stock
price $S_{t}$ is $s$}. Then, as $t\uparrow T$,
\begin{equation}\label{imvol}
\hat{\sigma}^{2}(t;s)\sim\frac{(\ln K-\ln s)^{2}}{-2(T-t)\ln (T-t)}.
\end{equation}
\end{prop}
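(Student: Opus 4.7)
The plan is to equate the small-$\tau$ asymptotics of the true call price $C(t,s)$, as provided by Theorem 3.3, with the small-$\tau$ asymptotic expansion of the Black--Scholes price $C_{BS}(t,s;\hat\sigma(t;s),r)$ given by Lemma 5.1, and then to invert the resulting relation to solve for $\hat\sigma$. Without loss of generality, assume the OTM case $s<K$; the ITM case follows from put--call parity together with the analogue of Corollary 3.5 for the put. Set $\tau:=T-t$, $\kappa:=\ln K-\ln s>0$, and $z:=-\kappa<0$. Applying Theorem 3.3 with $n=1$ and invoking (2.4) gives
\[
C(t,s)=Ke^{-r\tau}G_\tau(z)=K\,\hat b_1(z)\,\tau\,(1+o(1)),\qquad \hat b_1(z)=\int_{-z}^\infty (e^{z+u}-1)s(u)\,du>0,
\]
so in particular $\ln C(t,s)=\ln\tau+O(1)$ as $\tau\downarrow 0$.

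Next, under the \emph{a posteriori} assumptions $\hat\sigma^2(t;s)\,\tau\to 0$ and $\ln\hat\sigma(t;s)=o(\ln\tau)$, Lemma 5.1 implies
\[
\ln C_{BS}\bigl(t,s;\hat\sigma(t;s),r\bigr)=-\frac{\kappa^2}{2\hat\sigma^2(t;s)\,\tau}+\tfrac{3}{2}\ln\tau+3\ln\hat\sigma(t;s)+O(1),
\]
since the remainder $R$ is then negligible relative to the leading Gaussian factor. Matching this with the log of the previous display via the defining identity (5.1) yields
\[
\frac{\kappa^2}{2\hat\sigma^2(t;s)\,\tau}=-\tfrac12\ln\tau+3\ln\hat\sigma(t;s)+O(1).
\]
Dividing by $-\ln\tau$ and letting $\tau\downarrow 0$ gives $\kappa^2/\bigl(2\hat\sigma^2(t;s)\,\tau\,(-\ln\tau)\bigr)\to 1$, which is exactly (5.4). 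Self-consistency is then automatic: the limiting relation forces $\hat\sigma^2(t;s)\,\tau\sim \kappa^2/(-2\ln\tau)\to 0$ and $\ln\hat\sigma(t;s)\sim -\tfrac12\ln(-\tau\ln\tau)=o(\ln\tau)$.

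The only delicate step is the a priori verification that $\hat\sigma(t;s)$ neither blows up nor decays too quickly, so that the asymptotic form of $C_{BS}$ in Lemma 5.1 actually applies. This is standard and carried out via a sandwich argument using the strict monotonicity of $C_{BS}$ in $\sigma$: starting from $C(t,s)=O(\tau)$ bounded below by $c\,\tau$ for some $c>0$, one produces explicit sequences $\sigma_\pm(\tau)$ with $\sigma_\pm^2\tau\to 0$ and $\ln\sigma_\pm=o(\ln\tau)$ such that $C_{BS}(\cdot;\sigma_-,\cdot)\le C(t,s)\le C_{BS}(\cdot;\sigma_+,\cdot)$, which traps $\sigma_-\le\hat\sigma\le\sigma_+$ (cf.\ \cite{FigForde:2010,Tankov}). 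This is the main obstacle; everything else is a direct logarithmic inversion of the Black--Scholes formula matched against the $O(\tau)$ leading term from Theorem 3.3.
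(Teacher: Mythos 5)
Your proposal follows essentially the same route as the paper: equate the $O(\tau)$ leading term $K\widehat{b}_{1}(z)\tau$ of the call price with the small-time Black--Scholes asymptotic (\ref{OptBS1}) and invert logarithmically. Where you genuinely diverge is in how the a priori control on $\hat\sigma$ is obtained. The paper extracts it directly from the asymptotic equivalence (\ref{sim1}): it rules out $\limsup_{t\uparrow T}\hat\sigma(t)\tau^{1/2}\in(0,\infty]$ because in either case the right-hand side of (\ref{sim1}) would have a nonzero limsup while the left-hand side tends to $0$; it then takes logarithms and multiplies the resulting identity by $\hat\sigma^{2}(t)\tau\to 0$, which kills the $r\kappa/\hat\sigma^{2}$ and $\ln(u/v)$ terms without ever needing $\hat\sigma$ bounded below. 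You instead defer to a monotonicity/sandwich argument, which is correctly identified and does work: with $\sigma_{\pm}^{2}:=(1\pm\epsilon)\kappa^{2}/(2\tau\ln\frac{1}{\tau})$ one gets $C_{BS}(\sigma_{\pm})\asymp\bigl(\ln\frac{1}{\tau}\bigr)^{-3/2}\tau^{1/(1\pm\epsilon)}$, which straddles $c\tau$ for small $\tau$, so $\sigma_{-}\leq\hat\sigma\leq\sigma_{+}$ and, $\epsilon$ being arbitrary, this already yields (\ref{imvol}) in full --- making your subsequent log-matching a consistency check rather than the engine of the proof. Two loose ends to tighten: (i) absorbing $r\kappa/\hat\sigma^{2}$ into the $O(1)$ of $\ln C_{BS}$ is not justified by your stated a posteriori hypotheses alone, since $\ln\hat\sigma=o(\ln\tau)$ still permits $\hat\sigma\to 0$ slowly enough that this term is unbounded; it is only legitimate after the sandwich has forced $\hat\sigma\to\infty$ (or, as in the paper, after multiplying through by $\hat\sigma^{2}\tau$). (ii) ``Self-consistency is automatic'' cannot replace the a priori step; the sandwich is load-bearing and must actually be written out, not just cited as standard. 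Both are repairable without new ideas, and both proofs implicitly assume $\int(se^{u}-K)_{+}s(u)\,du>0$ so that the leading term is nondegenerate.
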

\begin{proof}
{\DRed Let $\kappa:=\ln K/s$ be the log-moneyness and let $\tau:=T-t$ be the time-to-maturity, where, for simplicity, we write $\hat\sigma(t)$ instead of $\hat\sigma(t;s)$.}
Using the leading terms in (\ref{OptExp1}) and (\ref{OptBS1}), we
obtain that as $t\uparrow T$:
\begin{align}\label{sim1}
\tau u(s,K)\!\sim
v(s,K)\hat{\sigma}^{3}(t)\tau^{3/2}\exp\left(-\frac{\kappa^{2}}{2\hat{\sigma}^{2}(t)\tau}+\frac{r\kappa}{\hat{\sigma}^{2}(t)}\right),
\end{align}
where
\begin{align*}
u(s,K)=\int_{\bbr}\left(se^{u}-K\right)_{+}s(u)du,\qquad 
v(s,K)=\frac{1}{\sqrt{2\pi}}\frac{K}{\kappa^{2}}e^{-\frac{\kappa}{2}}.
\end{align*}
Assume that $\limsup_{t\uparrow
T}\hat{\sigma}(t)\tau^{1/2}=c\in(0,+\infty)$, then $\limsup_{t\uparrow T}\hat{\sigma}(t)=+\infty$ and, thus,
\begin{align}
&\limsup_{t\uparrow
T}\big(\hat{\sigma}(t)\tau^{1/2}\big)^{3}
\exp\left(-\frac{\kappa^{2}}{2\hat{\sigma}^{2}(t)\tau}+\frac{r\kappa}{\hat{\sigma}^{2}(t)}\right)=c^{3}\exp{\left(-\frac{\kappa^{2}}{2c^{2}}\right)}\neq
0\nonumber.
\end{align}
So the right hand side of (\ref{sim1}) does not converge to 0 while
the left hand side does, which is clearly a contradiction.
Now if
\(
	\limsup_{t\uparrow
T}\hat{\sigma}(t)\tau^{1/2}=+\infty,
\)
then $\limsup_{t\uparrow
T}\hat{\sigma}(t)=+\infty$ and, thus,
\begin{eqnarray}
\limsup_{t\uparrow
T}\big(\hat{\sigma}(t)\tau^{1/2}\big)^{3}\exp\left(-\frac{\kappa^{2}}{2\hat{\sigma}^{2}(t)\tau}+\frac{r\kappa}{\hat{\sigma}^{2}(t)}\right)=+\infty\nonumber.
\end{eqnarray}
Again we obtain the same contradiction.
Therefore, we have $\limsup_{t\uparrow
T}\hat{\sigma}(t)\tau^{1/2}=0${\PaleGrey .} Then, (\ref{sim1})
can now be equivalently written as
\begin{align}
\exp\!\left(\!-\frac{\kappa^{2}}{2\hat{\sigma}^{2}(t)\tau}+\frac{r\kappa}{\hat{\sigma}^{2}(t)}+3\ln
\left(\hat{\sigma}(t)\tau^{1/2}\right)-\ln
\tau\right)\!\sim\!\frac{u(s,K)}{v(s,K)}.\nonumber
\end{align}
Hence, as $t\uparrow T$,
\[
	\lim_{t\uparrow{}T}
	\left(\!-\frac{\kappa^{2}}{2\hat{\sigma}^{2}(t)\tau}+\frac{r\kappa}{\hat{\sigma}^{2}(t)}+3\ln
\left(\hat{\sigma}(t)\tau^{1/2}\right)-\ln
\tau -\ln \frac{u(s,K)}{v(s,K)}\right)=0
\]
Finally, since
\(
\lim_{t\uparrow T}\hat{\sigma}^{2}(t)\tau\ln
\left(\hat{\sigma}(t)\tau^{1/2}\right)=0,
\) 
 we obtain (\ref{imvol}).
\end{proof}

\begin{rem}
\hfill
\begin{enumerate}
	\item {\DRed As seen in Proposition \ref{ImplVol1stOrd}, the leading order term of the implied volatility near expiration is ``model free", i.e.,  it does not depend on any of the model parameters.}
	\item {\DRed As discussed in Remark \ref{CnnexLevy}, the second order expansion of OTM call option prices coincides with that of a purely exponential L\'evy model with volatility parameter $\sigma$ equal to the spot volatility $\sigma(Y_{t})$. Thus, the second-order expansion for the implied volatility given in \cite{FigForde:2010} applies:
\begin{equation*}
\hat{\sigma}^2(t;s) = v_0(\tau;\kappa) \left(1+
v_1(\tau;\kappa) \,+\, o\left(\frac{1}{\log\frac{1}{\tau}}\right) \right),
 \quad \quad \quad \quad (\tau \to 0),
\end{equation*}
where $\kappa$ and $\tau$ denote respectively the spot log-moneyness $\kappa:=\log K/s$ and time-to-maturity $\tau:=T-t$, while $v_{0}$ and $v_{1}$ are given by
\begin{align*}
v_0(\tau;\kappa)&= \frac{\frac{1}{2}\kappa^2}{-\tau \log \tau}\,, \nonumber \\
v_1(\tau;\kappa)&=  \frac{1}{\log(\frac{1}{\tau})}\log\left(\frac{4 \sqrt{\pi}e^{-\kappa/2}}{\kappa}\int (e^{u}-e^{\kappa})_{+}s(u)du \log^{3/2}\left(\frac{1}{\tau}\right)\right). \nonumber
\end{align*}
}
	\item {\DRed In the very recent manuscript \cite{GL11}, the authors give a blueprint to generate expansions of arbitrary order for the implied volatility $\hat\sigma^{2}(t;s)$. Interestingly,  such expansions are determined exclusively by the leading order term of the option price expansion, meaning that the stochastic volatility correction is not relevant for implied volatility\footnote{{\DRed We thank an anonymous referee for bringing our attention to this manuscript.}}.}
\end{enumerate}
\end{rem}

\section{Small-time expansions for the L\'evy transition densities}

In this part, we revisit the important problem of finding small-time
expansions for the transition densities of L\'evy processes. This
problem has recently been considered in \cite{Ruschendorf} and also in \cite{FigHou:2008}. As in Section
\ref{Sect:Not}, we consider a general L\'evy process $X$ with L\'evy
triplet $(b,\sigma^{2},\nu)$. It is well-known that under general
conditions (see, e.g, \cite{Leandre} and \cite{Picard:1997}):
\begin{equation}\label{LDF}
    \lim_{\t\to{}0}\frac{1}{\t}\f_{\t}(x)=s(x), \quad (x\neq 0),
\end{equation}
where  $\f_{\t}$ denotes the probability density of $X_{t}$ and $s$
is the L\'evy density of $\nu$ (both densities are assumed to
exist). In many applications, the following uniform convergence
result is more desirable
\begin{equation}\label{LDF2}
    \lim_{\t\to{}0}\sup_{|x|\geq{}\eta} \left|\frac{1}{\t}\f_{\t}(x)-\s(x)\right|=0,
\end{equation}
for a fixed $\eta>0$.
The limit (\ref{LDF2}) is related to the general expansions of the transition densities:
\begin{equation}\label{GED}
 \f_{t}(x)=\sum_{k=1}^{n} \frac{\a_{k}(x)}{k!} t^{k}+ t^{n+1}O_{\eta}(1),
\end{equation}
{\Green which is} valid for any $|x|\geq{}\eta$ and $0<t<t_{0}$, with $t_{0}$ possibly depending on the given $\eta>0$ and $n\geq{}0$. Above, {\DRed and as in (\ref{FIntSpecONt}),} $O_{\eta}(1)$ denotes a function of $x$ and $t$ such that
\[
    \sup_{0<t\leq t_{0}}\sup_{|x|\geq{}\eta}|O_{\eta}(1)|<\infty.
\]
Note that (\ref{LDF2}) follows from (\ref{GED}) when $n=1$ and $\a_{k}(x)=\s(x)$.

{\Red The expansion (\ref{LDF2}) was first proposed in \cite{Ruschendorf}} building on results of
\cite{Leandre}, where {\Red the pointwise convergence in (\ref{LDF}) was obtained}. In both papers, the standing {\Green assumptions} on the
L\'evy density $\s$ of the L\'evy process $X$ are:
 \begin{align}\label{C1E}
    &{\Red {\bf (i)}\;\;\text{There exists } 0<\alpha<2},\text{ such that } \liminf_{\eta\to{}0} \eta^{\alpha-2}\int_{-\eta}^{\eta}z^{2}s(z)dz>0; \\
    \label{C1Ea}
    &{\Red {\bf (ii)}\;\;}s\in C^{\infty}(\bbr\setminus\{0\});\\
    \label{C1Eb}
    &{\Red {\bf (iii)}\;\;}\int_{|z|\geq{}\eta}\frac{|s'(z)|^{2}}{s(z)}dz<\infty, \; {\Red \text{ for all }} \eta>0; \\
    \label{C2E}
    &{\Red {\bf (iv)}\;\;\text{There exists }} h \in C^{\infty}\text{ such that }h(z)=O(z^{2}) \;\;( z\to{}0),\\
    &\qquad h(z)>0
    \text{ if }s(z)>0,\;\text{ and } \int_{|z|\leq{}1}\left|\frac{d}{dz}h(z)\s(z)\right|^{2}\frac{1}{\s(z)} dz<\infty.
    \nonumber
 \end{align}
Condition (\ref{C1E}) is used to conclude the existence of a
$C^{\infty}$ transition density $\f_{t}$ (see \cite[Chapter 5]{Sato:1999}),
while (\ref{C1Ea})-(\ref{C2E}) are needed to establish an estimate
for the transition density using Malliavin Calculus. However, the method of proof of
\cite{Ruschendorf} {\Green is not convincing} {\PaleGrey and} can only {\PaleGrey yield} the first order expansion in (\ref{GED}) (see the introduction of
\cite{FigHou:2008} for more details). {\PaleGrey Recently, \cite{FigHou:2008}} obtained (\ref{GED}) under the following two hypotheses:
\begin{align}
\label{C3E}&\gamma_{\eta,k}:=\sup_{|x|\geq{}\eta} |\s^{(k)}(x)|<\infty,\\
\label{C3F}&\limsup_{t\searrow{}0}\sup_{|x|\geq{}\eta}
|\f_{t}^{(k)}(x)|<\infty,\quad  {\text{ for all } k\geq{}0\text{ and for all } \eta>0}.
\end{align}
Condition (\ref{C3E}) is quite mild but condition (\ref{C3F}) could
be hard to prove in general since closed-form expressions of the densities $\f_{t}$ are lacking. Nevertheless
\cite{FigHou:2008} shows that condition (\ref{C3F}) is satisfied by,
e.g., the CGMY model of \cite{Madan} or Koponen~\cite{Koponen} and
by other types of tempered stable L\'evy processes (as defined in
\cite{Rosinski:2007}).

In this section, we show that (\ref{C3F}) is not {needed} to obtain
(\ref{GED}).
See Appendix \ref{ApC} for the proof of the following result{\PaleGrey .}
\begin{thm}\label{ExpTrDnsty}
Let $\eta>0$ and  $n\geq 1$, and let the conditions
(\ref{C1E})-(\ref{C3E}) be satisfied. Then, (\ref{GED}) holds true
for all $0<t\leq 1$ and $|x|\geq \eta$. Moreover,  there exists
$\varepsilon_{0}(\eta,n)>0$ such that for all
$0<\varepsilon<\varepsilon_{0}$, the coefficients $\a_{k}$ admit the
following representation (which is moreover {independent of $\varepsilon$} for any
$0<\varepsilon<\varepsilon_{0}$):
\begin{equation}\label{Cnst11}
        \a_{k}(x):=
        \sum_{j=1}^{k}  \binom{k}{j}(-\lambda_{\varepsilon})^{k-j} \sum_{i=1}^{j} \binom{j}{i} L_{\varepsilon}^{j-i}\hat{s}_{i,x}(0),
\end{equation}
where $\hat{s}_{i,x}(u):= \bar{s}_{\varepsilon}^{*i}(x-u)$.
\end{thm}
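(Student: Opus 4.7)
The plan mirrors the proof of Theorem \ref{jumpdifdist}, but applied directly to the density rather than the tail distribution. Under (\ref{C1E}), both $X_{t}$ and $X_{t}^{\varepsilon}$ admit $C^{\infty}$ densities $f_{t}$ and $f_{t}^{\varepsilon}$. Writing $X=X^{\varepsilon}+\bar{X}^{\varepsilon}$, where $\bar{X}^{\varepsilon}$ is a compound Poisson process with intensity $\lambda_{\varepsilon}$ and jump density $\bar{s}_{\varepsilon}/\lambda_{\varepsilon}$, and conditioning on the number of jumps of $\bar{X}^{\varepsilon}$ in $[0,t]$, one obtains
\begin{equation*}
f_{t}(x) = e^{-\lambda_{\varepsilon} t}\, f_{t}^{\varepsilon}(x) + e^{-\lambda_{\varepsilon} t}\sum_{k\geq 1}\frac{t^{k}}{k!}\,\bbe\bigl[\hat{s}_{k,x}(X_{t}^{\varepsilon})\bigr].
\end{equation*}

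For the sum over $k\geq 1$, first truncate at $k=n$; the tail is controlled by $\|\bar{s}_{\varepsilon}^{*k}\|_{\infty}\leq\gamma_{0,\varepsilon}\lambda_{\varepsilon}^{k-1}$, which, using (\ref{C3E}), yields a uniform $O(t^{n+1})$ bound. For each $1\leq k\leq n$, condition (\ref{C3E}) implies that $\hat{s}_{k,x}\in C_{b}^{\infty}$ with derivatives bounded uniformly in $x$, so the iterated Dynkin formula (Lemma \ref{prop:IDynkin}) applied to $g=\hat{s}_{k,x}$ up to order $n-k$ gives
\begin{equation*}
\bbe\bigl[\hat{s}_{k,x}(X_{t}^{\varepsilon})\bigr] = \sum_{j=0}^{n-k}\frac{t^{j}}{j!}L_{\varepsilon}^{j}\hat{s}_{k,x}(0)+O_{\eta}(t^{n-k+1}),
\end{equation*}
the uniformity in $x$ of the remainder following from Lemma \ref{RIGen} combined with the uniform boundedness of the derivatives of $\bar{s}_{\varepsilon}^{*k}$.

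The principal obstacle is to dispose of the isolated $k=0$ contribution: one must prove $f_{t}^{\varepsilon}(x)=O_{\eta}(t^{n+1})$ uniformly in $|x|\geq\eta$. This is precisely the estimate that makes it possible to dispense with the boundedness hypothesis (\ref{C3F}) used in \cite{FigHou:2008}. The intuition is that $X^{\varepsilon}$ consists only of small jumps (of size at most $\varepsilon<\eta/2$) together with a Gaussian--drift component, so $X_{t}^{\varepsilon}$ concentrates exponentially near the origin for small $t$ and its density decays super-polynomially in $t$ at any point bounded away from $0$. Quantitatively, the R\"uschendorf-type tail estimate (\ref{TailEstLevy}) gives $\bbp(|X_{t}^{\varepsilon}|\geq\eta/2)=O(t^{N})$ for arbitrarily large $N$, and combining this with a Fourier-inversion pointwise density estimate that exploits the non-degeneracy (\ref{C1E}) (which yields $|e^{t\psi_{\varepsilon}(\xi)}|\leq e^{-ct|\xi|^{\alpha}}$ for large $|\xi|$) produces the required decay of $f_{t}^{\varepsilon}(x)$.

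Assembling the three ingredients yields
\begin{equation*}
f_{t}(x) = e^{-\lambda_{\varepsilon} t}\sum_{k=1}^{n}\frac{t^{k}}{k!}\sum_{i=1}^{k}\binom{k}{i}L_{\varepsilon}^{k-i}\hat{s}_{i,x}(0) + O_{\eta}(t^{n+1}).
\end{equation*}
Expanding $e^{-\lambda_{\varepsilon} t}$ as a power series and regrouping by powers of $t$ produces (\ref{GED}) with the coefficients $a_{k}(x)$ of the form (\ref{Cnst11}). Their independence from $\varepsilon$ (for $\varepsilon$ small) follows, as in Remark \ref{Simplify}(ii), from the iterative characterization $a_{k}(x)/k!=\lim_{t\downarrow 0}t^{-k}\bigl[f_{t}(x)-\sum_{j<k}(a_{j}(x)/j!)\,t^{j}\bigr]$, which is intrinsic to $f_{t}$ and independent of the splitting parameter.
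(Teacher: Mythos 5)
Your overall architecture coincides with the paper's: decompose $X=X^{\varepsilon}+\bar X^{\varepsilon}$, condition on the number of big jumps, control the $k\geq n+1$ tail via $\|\bar{s}_{\varepsilon}^{*k}\|_{\infty}\leq \gamma_{\varepsilon/2,0}\lambda_{\varepsilon}^{k-1}$, apply the iterated Dynkin formula to the terms $1\leq k\leq n$, and isolate the $k=0$ term $f_{t}^{\varepsilon}(x)$ as the crux. (The paper works with tail probabilities and then differentiates in $x$, justifying the interchange via (\ref{PDv})--(\ref{PDvb}); you work directly at the density level with $\hat{s}_{k,x}\in C_{b}^{\infty}$ — a cosmetic and arguably cleaner variant.) The bookkeeping, the regrouping by total power of $t$, and the $\varepsilon$-independence via the recursive limit characterization all match.

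The genuine gap is the step you yourself flag as "the principal obstacle": the estimate $\sup_{|x|\geq\eta}f_{t}^{\varepsilon}(x)\leq M(\eta,\varepsilon)\,t^{n+1}$. The paper does not prove this from scratch; it invokes Proposition III.2 of L\'eandre \cite{Leandre}, a Malliavin-calculus estimate whose hypotheses are exactly conditions (\ref{C1Ea})--(\ref{C2E}) — conditions your argument never uses, which is itself a warning sign. Your proposed substitute does not work as stated. The tail estimate (\ref{TailEstLevy}) bounds $\bbp(|X_{t}^{\varepsilon}|\geq\eta/2)$, i.e.\ mass, not a pointwise density value; and the Fourier bound $|e^{t\psi_{\varepsilon}(\xi)}|\leq e^{-ct|\xi|^{\alpha}}$ coming from (\ref{C1E}) yields $\|f_{t}^{\varepsilon}\|_{\infty}=O(t^{-1/\alpha})$, which \emph{grows} as $t\to 0$. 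The natural way to "combine" the two — write $f_{t}^{\varepsilon}=f_{t/2}^{\varepsilon}*f_{t/2}^{\varepsilon}$ and split the convolution integral over $\{|y|\leq\eta/2\}$ and $\{|y|>\eta/2\}$ — controls the second piece by $\|f_{t/2}^{\varepsilon}\|_{\infty}\,\bbp(|X_{t/2}^{\varepsilon}|>\eta/2)=O(t^{N-1/\alpha})$, but leaves a first piece bounded only by $\sup_{|z|\geq\eta/2}f_{t/2}^{\varepsilon}(z)$, i.e.\ the same quantity at a smaller threshold, so the recursion never closes. A direct integration-by-parts attack on the Fourier inversion integral runs into the non-integrability of derivatives of $\psi_{\varepsilon}$ caused by the singularity of $s$ at the origin. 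In short, the super-polynomial decay of the small-jump density away from the origin is a substantive result requiring either Malliavin calculus (L\'eandre, Picard) or a comparably delicate argument, and your two-sentence sketch does not supply one. Everything else in your write-up is correct and matches the paper.
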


\begin{rem}
Combining the proofs of Theorem \ref{jumpdifdist} and Theorem
\ref{ExpTrDnsty}, it is possible to obtain a small-time expansion
for the jump-diffusion model (\ref{jumpdif})-(\ref{jumpdifb})
assuming, for instance, that the stochastic volatility model admits
a density function $d_{t}$ satisfying the small-time estimate:
    \[
        \sup_{|x|\geq{}\eta}d_{t}(x)\leq{} M_{p,\eta}t^{p},
    \]
    for any $p\geq{}1$ and $0<t<t_{0}(p,\eta)$ and some constant $M_{p,\eta}<\infty$.
\end{rem}

\medskip
\noindent
{\DRed \textbf{Acknowledgments:} 
The authors are grateful to two anonymous referees for their constructive and insightful comments that greatly helped to improve {\PaleGrey the} paper.}

\appendix
\renewcommand{\theequation}{A-\arabic{equation}}
\setcounter{equation}{0}  

\section{Proof of Lemma \ref{prop:IDynkin}} \label{ApA}
Let us show (\ref{IDynkin}) for $n=1$ (the other cases are easily obtained by induction).
First, applying It\^o's lemma (\cite[Theorem
I.4.56]{Shiryaev:2003}),
\begin{align*}
    g(X_{t})
        &=g(0)+\int_{0}^{t}  Lg (X_{u}) du+\sigma \int_{0}^{t} g'(X_{u})d W_{u}\\
        &\quad+\int_{0}^{t}\int_{\bbr} {\Blue \left(g(X_{u^{-}}+z)-g(X_{u^{-}})\right)}\bar\mu(du,dz),
\end{align*}
where {\Blue $L$ is given by (\ref{InfGenLevy0}).} One can easily check that $Lg(x)$ is well-defined {in light of the}
continuity of $g^{(i)}$ and (\ref{NCID}). Indeed, there {exist
constants $M_{i},i=0,1,2$,} such that
$|g^{(i)}(x)|\leq{}M_{i}e^{\frac{c}{2}|x|}$, for all $x$, and thus,
\begin{align*}
    &\left|\int_{|z|\geq{}1} g(x+z)\nu(dz)\right|\leq{} {\Blue M_{0}}\int_{|z|\geq{}1} e^{\frac{c}{2} |z|}\nu(dz) e^{\frac{c}{2}|x|},\\
    &\left|\int_{|z|\leq{}1} (g(x+z)-g(x)-g'(x)z)\nu(dz)\right|\leq{}M_{2} e^{\frac{c}{2}} \int_{|z|\leq{}1} z^{2}\nu(dz) e^{\frac{c}{2}|x|}.
\end{align*}
Next, we show that the last two terms of the expansion of $g(X_{t})$
above are true martingales. Indeed, it suffices that
\begin{align}\label{B1}
    &\bbe \int_{0}^{t} \left|g'(X_{u})\right|^{2} du <\infty,\\
    \label{B20}
    &\bbe \int_{0}^{t}\int_{|z|>1} \left|g(X_{u}+z)-g(X_{u})\right|\nu(dz)du<\infty,\\
    &{\bbe \int_{0}^{t}\int_{|z|\leq{}1} \left|g(X_{u}+z)-g(X_{u})\right|^{2}\nu(dz)du<\infty}.\label{B2}
\end{align}
Using (\ref{NCID}) and the continuity of $g'$, there exists a
constant $M>0$ such that
\begin{align*}
    \bbe \int_{0}^{t} \left|g'(X_{u})\right|^{2} du\leq M\int_{0}^{t}\bbe e^{c|X_{u}|}du
    \leq M\int_{0}^{t}\bbe e^{c X_{u}}du+\int_{0}^{t}\bbe e^{-c X_{u}}du <\infty,
\end{align*}
for any $t\geq{}0$. Similarly, setting $\bar{B}=\{z:|z|>1\}$,
(\ref{B20}) is satisfied since
\begin{align*}
    \bbe \int_{0}^{t}\int_{\bar{B}} \left|g(X_{u}+z)-g(X_{u})\right|\nu(dz) du
    &\leq\bbe \int_{0}^{t}\int_{\bar{B}} \left|
    \int_{0}^{z} g'(X_{u}+w)d w \right| \nu(dz) du\\
    &\leq M\int_{0}^{t} \bbe e^{c|X_{u}|}du\int_{\bar{B}} \int_{0}^{|z|} e^{c w} d w \nu(dz)<\infty.
\end{align*}
Also, setting $B=\{z:|z|\leq 1\}$,
\begin{align*}
    \bbe \int_{0}^{t}\int_{B} \left|g(X_{u}+z)-g(X_{u})\right|^{2}\nu(dz)du&\leq \bbe \int_{0}^{t}\int_{B} \int_{0}^{1} |g'(X_{u}+z\beta)|^{2}d\beta z^{2}\nu(dz) du\\
    &\leq \int_{0}^{t} \bbe e^{c|X_{u}|}du \int_{B} \int_{0}^{1} e^{c|z|\beta}d\beta z^{2}\nu(dz)<\infty.
\end{align*}
We then have that
\begin{align*}
    \bbe g(X_{t})=g(0)+\bbe \int_{0}^{t}  Lg (X_{u}) du,
\end{align*}
which leads to (\ref{Dynkin}), provided \(
    \int_{0}^{t}\bbe \left|Lg(X_{u})\right|du<\infty.
\) The {\Red latter} is proved using (\ref{NCID}) and arguments as above.

\noindent
In order to obtain (\ref{IDynkin}) for $n=1$ by iterating
(\ref{Dynkin}), we need to show that for any $C^{4}$ function $g$
satisfying (\ref{NCID}),
\begin{equation}\label{LNCID}
\limsup_{|y|\to\infty}\;e^{-\frac{c}{2}|y|}|(Lg)^{(i)}(y)|<\infty,
\end{equation}
for $i=0,1,2$. To this end, we first note that
\begin{equation}
(Lg)^{(i)}(y)=bg^{(i+1)}(y)+\frac{\sigma^{2}}{2}g^{(i+2)}(y)+\int_{\bbr}(g^{(i)}(y+z)-g^{(i)}(y)-zg^{(i+1)}(y){\bf
1}_{|z|\leq 1})\nu(dz)\nonumber
\end{equation}
for $i=0, 1, 2$. Hence, it is sufficient to show (\ref{LNCID}) when
$i=0$. {\Red But,} 
\begin{align}
\label{lgeq1}
e^{-\frac{c}{2}|y|}|Lg(y)| &\leq  b
e^{-\frac{c}{2}|y|}|g'(y)|+\frac{\sigma^2}{2}e^{-\frac{c}{2}|y|}|g''(y)|\\
\label{lgeq2}&\quad+e^{-\frac{c}{2}|y|}\int_{|z|>1}|g(y+z)-g(y)|\nu(dz)\\
\label{lgeq3}&\quad+e^{-\frac{c}{2}|y|}\int_{|z|\leq
1}|g(y+z)-g(y)-zg'(y)|\nu(dz).
\end{align}
The {\PaleGrey limits} of each term of the right-hand terms in (\ref{lgeq1}) are trivially finite as $|y|\rightarrow\infty$ by the assumption
(\ref{NCID}). For the term in (\ref{lgeq2}), again by the assumption
(\ref{NCID}) and the continuity of $g^{(i)}$, there exists $M>0$
such that,
\begin{equation}
|g^{(i)}(y)|\leq M e^{\frac{c}{2}|y|},\quad i=0,1,2.\nonumber
\end{equation}
It then follows that
\begin{align*}
e^{-\frac{c}{2}|y|}\int_{|z|>1}\!\!|g(y+z)-g(y)|\nu(dz)&=e^{-\frac{c}{2}|y|}\int_{|z|>1}\Big|\int_{0}^{z}g'(y+w)\big|dw\Big|\nu(dz)\nonumber\\
&\leq M \int_{|z|>1}\!\!\Big(\!\int_{0}^{|z|}e^{\frac{c}{2}w}dw\!\Big)\nu(dz)\nonumber\\
&=M\int_{|z|>1}e^{\frac{c}{2}|z|}\nu(dz)<\infty,\nonumber
\end{align*}
which immediately implies that
\begin{equation}
\limsup_{|y|\rightarrow\infty}\;e^{-\frac{c}{2}|y|}\!\!\int_{|z|>1}|g(y+z)-g(y)|\nu(dz)<\infty.
\end{equation}
Similarly, we can show that the limit as
$|y|\rightarrow\infty$ of (\ref{lgeq3}) is finite. Therefore, we can iterate
(\ref{Dynkin}) to obtain (\ref{IDynkin}) for $n=1$. \hfill $\Box$

\section{Proof of Theorem \ref{jumpdifdist}} \label{ApB}
\renewcommand{\theequation}{B-\arabic{equation}}

We analyze each term on the right-hand side of the expansion of
$\bbe f(Z_{t})$ given in {(\ref{T1b})-(\ref{T2b})}:

\medskip
\noindent
\textbf{(1)} For any $z\geq z_{0}$, we have
\begin{equation}\label{TailUX}
\bbe f_{z}\left(U_{t}+X^{\varepsilon}_{t}\right)=\bbp(U_{t}+X_{t}^{\varepsilon}\geq{}z)\leq
    \bbp(U_{t}\geq{}z/2)+\bbp(X_{t}^{\varepsilon}\geq{}z/2).
\end{equation}
By our assumption (\ref{tailU0}), there exists $t_{0}(z_{0})>0$ such that for
any $0<t\leq t_{0}$, $z\geq z_{0}>0$,
\begin{equation}\label{TailU}
\bbp(U_{t}\geq{}z/2)\leq\bbp(U_{t}\geq{}z_{0}/2)\leq\exp\left(-\frac{d(z_{0}/2)^{2}}{4t}\right),
\end{equation}
which can be seen to be $O_{z_{0}}(t^{n+1})$. Also, the second term on the right-hand-side of
(\ref{TailUX}) is $O_{\varepsilon,z_{0}}(t^{n+1})$ in light of
(\ref{TailEstLevy}) by taking $a:=(n+1)/z_{0}$ and since $0<\varepsilon<z_{0}/(n+1)\wedge 1$.

\medskip
\noindent
\textbf{(2)} The second term in (\ref{T1b}) is also
$O_{\varepsilon,z_{0}}(t^{n+1})$ since $f\leq 1$ and clearly
\[
	{\PaleGrey e^{-\lambda_{\varepsilon} t}\sum_{k=n+1}^{\infty}(\lambda_{\varepsilon}t)^{k}/k!\leq (\lambda_{\varepsilon} t)^{n+1}=O(t^{n+1})}.
\]

\medskip
\noindent \textbf{(3)} We proceed to deal with the terms in
(\ref{T2b}). By the independence of $U$ and $X$,
\begin{equation}
    \bbe f_{z}\left(U_{t}+X^{\varepsilon}_{t}+\sum_{i=1}^{k}\xi_{i}\right)=\bbe
    \widetilde{f}_{k,z}\left(U_{t}+X^{\varepsilon}_{t}\right)
    =\bbe\breve{f}_{k,z,t}(X_{t}^{\varepsilon}),
\end{equation}
where
\[
   \widetilde{f}_{k,z}(y)\!:=\!(\lambda_{\varepsilon})^{-k}\!\!\int_{z-y}^{\infty}\bar{s}_{\varepsilon}^{*k}(u)du\quad\text{ and }\quad\breve{f}_{k,z,t}(y)\!:=\!\bbe\widetilde{f}_{k,z}\left(U_{t}+y\right).
\]
In particular, by the assumption (\ref{conB2}),
\begin{align*}
    &\widetilde{f}_{k,z}^{(j)}(y)=
    (\lambda_{\varepsilon})^{-k}(-1)^{j-1} \bar{s}_{\varepsilon}^{*(k-1)}*\bar{s}_{\varepsilon}^{(j-1)}(z-y),\\
    &\sup_{y,z}\left|\widetilde{f}_{k,z}^{(j)}(y)\right|\leq
\lambda_{\varepsilon}^{-1}\|
\bar{s}_{\varepsilon}^{(j-1)}\|_{\infty}\leq \lambda_{\varepsilon}^{-1} \max_{0\leq i\leq{}j-1}\gamma_{i,\varepsilon/2}:=\Gamma_{\varepsilon}<\infty.
\end{align*}
It follows that $\breve{f}_{k,z,t}\in C^{\infty}_{b}(\bbr)$ and
moreover,
\begin{align}
    \breve{f}_{k,z,t}^{(j)}(y)=\bbe\widetilde{f}_{k,z}^{(j)}\left(U_{t}+y\right),
    \text{ and }\sup_{z,y}\left|\breve{f}_{k,z,t}^{(j)}(y)\right|\leq \Gamma_{\varepsilon},\text{ for any }j\geq 0.\label{bdbrevef}
\end{align}
We can thus apply the iterated formula (\ref{IDynkin})
to get
\begin{equation}\label{expbrevef}
\bbe\breve{f}_{k,z,t}(X_{t}^{\varepsilon})=\sum_{i=0}^{n-k}\frac{t^{i}}{i!}L_{\varepsilon}^{i}\breve{f}_{k,z,t}(0)+\frac{t^{n-k+1}}{(n-k)!}\int_{0}^{1}(1-\alpha)^{n-k}\bbe\{L_{\varepsilon}^{n-k+1}\breve{f}_{k,z,t}(X_{\alpha
t}^{\varepsilon})\}d\alpha.
\end{equation}
It follows from the representation in Lemma \ref{RIGen} and
(\ref{bdbrevef}) that
\begin{equation}
    \sup_{z}
     \int_{0}^{1}(1-\alpha)^{n-k}\bbe\big(L_{\varepsilon}^{n-k+1}\breve{f}_{k,z,t}(X_{\alpha
t}^{\varepsilon})\big)d\alpha <\infty,\nonumber
\end{equation}
and thus the second term on the right hand side of (\ref{expbrevef})
is $O_{\varepsilon, z_{0}}(t^{n-k+1})$.

\smallskip
\noindent
\textbf{(4)} Combining (\ref{T1b}), (\ref{T2b})
and (\ref{expbrevef}), we obtain
\begin{align*}
\bbe
f(Z_{t})&=e^{-\lambda_{\varepsilon}t}\sum_{k=1}^{n}\frac{(\lambda_{\varepsilon}t)^{k}}{k!}\bbe\breve{f}_{k,z,t}(X_{t}^{\varepsilon})+O_{\varepsilon,
z_{0}}(t^{n+1})\\
&=e^{-\lambda_{\varepsilon}t}\sum_{k=1}^{n}\frac{(\lambda_{\varepsilon}t)^{k}}{k!}\sum_{i=0}^{n-k}\frac{t^{i}}{i!}L_{\varepsilon}^{i}\breve{f}_{k,z,t}(0)+O_{\varepsilon,
z_{0}}(t^{n+1})\\
&=e^{-\lambda_{\varepsilon}t}\sum_{j=1}^{n}\frac{t^{j}}{j!}\sum_{k=1}^{j}
\binom{j}{k}\lambda_{\varepsilon}^{k}L_{\varepsilon}^{j-k}\breve{f}_{k,z,t}(0)+O_{\varepsilon,z_{0}}(t^{n+1}).
\end{align*}
Using again the representation in Lemma \ref{RIGen} and
(\ref{bdbrevef}), it follows that
\[
    L_{\varepsilon}^{j-k}\breve{f}_{k,z,t}(x)=L_{\varepsilon}^{k}\left[\bbe\widetilde{f}_{k,z}(U_{t}+\cdot)\right](x)=\lambda_{\varepsilon}^{-k}
     L_{\varepsilon}^{k}\left[\bbe\widehat{f}_{k,z}(U_{t}+\cdot)\right](x),
\]
and (\ref{jumdifexpan1}) is obtained.\hfill
$\Box$

\section{Proof of Theorem \ref{jumpdifopt}} \label{ApC}
\renewcommand{\theequation}{C-\arabic{equation}}
\noindent We analyze each term in (\ref{T1b}) and (\ref{T2b}) {\Blue through the following steps:}

\smallskip
\noindent
\textbf{(1)} For $z\leq z_{0}<0$,
\begin{align}
\label{expfUX}\bbe f_{z}(U_{t}+X^{\varepsilon}_{t})&=\bbe
\left(e^{z+U_{t}+X^{\varepsilon}_{t}}-1\right)_{+}\leq\bbe\left(e^{U_{t}+X^{\varepsilon}_{t}}{\bf
1}_{\{U_{t}+X^{\varepsilon}_{t}\geq -z\}}\right)\\
&\leq \left(\bbe e^{2U_{t}+2X^{\varepsilon}_{t}}\bbp
(U_{t}+X^{\varepsilon}_{t}\geq -z)\right)^{1/2}\nonumber\\
&\leq \left(\bbe e^{2U_{t}}\bbe
e^{2X^{\varepsilon}_{t}}\right)^{1/2}\Big(\bbp (U_{t}\geq
-z/2)+\bbp(X^{\varepsilon}_{t}\geq
-z/2)\Big)^{1/2},\nonumber\\
&={e^{t\psi(2)/2}}\left(\bbe e^{2U_{t}}\right)^{1/2}\Big(\bbp
(U_{t}\geq -z/2)+\bbp(X^{\varepsilon}_{t}\geq
-z/2)\Big)^{1/2},\nonumber
\end{align}
where {$\psi$} is the characteristic exponent of $X^{\varepsilon}$.
Since $M_{t}:=e^{U_{t}}$ satisfies the SDE $d M_{t} = M_{t}
\sigma(Y_{t})dW^{(1)}_{t}$, and {from the Burkh\"older-Davis-Gundy
inequality},
\begin{align}
\bbe e^{2U_{t}}&=\bbe \left(1+\int_{0}^{t} M_{s} \sigma(Y_{s}) dW^{(1)}_{s}\right)^{2}\nonumber\\
&\leq
2+2\bbe\left(\int_{0}^{t}e^{U_{s}}\sigma(Y_{s})dW^{(1)}_{s}\right)^{2}\leq
2+2M^{2}\,\bbe\int_{0}^{t}e^{2U_{s}}ds. \nonumber
\end{align}
By Gronwall's {\PaleGrey inequality},
\begin{align}
\bbe e^{2U_{t}}\leq 2e^{2M^{2}t}=O_{\varepsilon,z_{0}}(1)\nonumber.
\end{align}
Therefore, the right-hand-side of (\ref{expfUX}) {is of order}
$o_{\varepsilon,z_{0}}(t^{n+1})$ by (\ref{TailEstLevy}) and
(\ref{tailU0}).

\smallskip
\noindent\textbf{(2)} The second summation in (\ref{T1b}) is also
$O_{\varepsilon,z_{0}}(t^{n+1})$ since for any $k\geq n+1$,
\begin{align}
\bbe f_{z}(U_{t}+X^{\varepsilon}_{t}+\sum_{i=1}^{k}\xi_{i})&\leq
e^{z}\bbe e^{U_{t}}\bbe e^{X^{\varepsilon}_{t}}(\bbe
e^{\xi_{1}})^{k}\leq
\lambda_{\varepsilon}^{-k}e^{t\Psi(1)}\left(\int_{\bbr}e^{x}\bar{s}_{\varepsilon}(x)dx\right)^{k}.\nonumber
\end{align}

\noindent
\textbf{(3)} To {study} the summation in (\ref{T2b}), recall
that by the independence of $U$ and $X$, for any $1\leq k\leq n$,
\begin{align}
\bbe
f_{z}\left(U_{t}+X^{\varepsilon}_{t}+\sum_{i=1}^{k}\xi_{i}\right)=\bbe
\widetilde{f}_{k,z}\left(U_{t}+X^{\varepsilon}_{t}\right)=\bbe
\breve{f}_{k,z,t}(X_{t}^{\varepsilon}),\nonumber
\end{align}
where
\begin{align}
\breve{f}_{k,z,t}(x)=\bbe
\widetilde{f}_{k,z}\left(U_{t}+x\right)\quad\text{ and
}\quad\widetilde{f}_{k,z}(x)=\bbe
f_{z}\left(x+\sum_{i=1}^{k}\xi_{i}\right). \nonumber
\end{align}
Let us show that $\tilde{f}_{k,z}$ is $C^{\infty}$. Indeed, since
\begin{align*}
    \widetilde{f}_{k,z}(x)=\lambda_{\varepsilon}^{-k}
    \int_{\bbr^{k-1}} \int_{-\sum_{\ell=2}^{k}u_{\ell}-z-x}^{\infty}
    \left(e^{z+x+\sum_{\ell =1}^{k}u_{\ell}}-1\right)
    \bar{s}_{\varepsilon}(u_{1}) du_{1}\prod_{\ell=2}^{k} \bar{s}_{\varepsilon}(u_{\ell}) du_{\ell},
\end{align*}
and $\bar{s}_{\varepsilon}\in C^{\infty}_{b}$,
we have that
\begin{align*}
    \widetilde{f}'_{k,z}(x)&=\lambda_{\varepsilon}^{-k}\int_{\bbr^{k-1}} \int_{-\sum_{\ell=2}^{k}u_{\ell}-z-x}^{\infty}e^{z+x+\sum_{\ell =1}^{k}u_{\ell}}
    \bar{s}_{\varepsilon}(u_{1}) du_{1}\prod_{\ell=2}^{k} \bar{s}_{\varepsilon}(u_{\ell}) du_{\ell},\\
    \widetilde{f}''_{k,z}(x)&=\lambda_{\varepsilon}^{-k}\int_{\bbr^{k-1}} \int_{-\sum_{\ell=2}^{k}u_{\ell}-z-x}^{\infty}e^{z+x+\sum_{\ell =1}^{k}u_{\ell}}
    \bar{s}_{\varepsilon}(u_{1}) du_{1}\prod_{\ell=2}^{k} \bar{s}_{\varepsilon}(u_{\ell}) du_{\ell}\\
    &+\lambda_{\varepsilon}^{-k}\int_{\bbr^{k-1}}
    \bar{s}_{\varepsilon}\left(-\sum_{\ell =2}^{k}u_{\ell}-z-x\right) \prod_{\ell=2}^{k} \bar{s}_{\varepsilon}(u_{\ell}) du_{\ell}.
\end{align*}
Using induction, we see that
 \begin{align}\label{tildefki}
    \widetilde{f}^{(i)}_{k,z}(x)&=\lambda_{\varepsilon}^{-k}
    \int_{\bbr}\int_{-\sum_{\ell=2}^{k}u_{\ell}-z-x}^{\infty}e^{z+x+\sum_{\ell =1}^{k}u_{\ell}}
    \bar{s}_{\varepsilon}(u_{1}) du_{1}\prod_{\ell=2}^{k} \bar{s}_{\varepsilon}(u_{\ell}) du_{\ell}\\
    &+\lambda_{\varepsilon}^{-k}\sum_{j=0}^{i-2}(-1)^{j}\int_{\bbr^{k-1}}
    \bar{s}_{\varepsilon}^{(j)}\left(-\sum_{\ell =2}^{k}u_{\ell}-z-x\right)\prod_{\ell=2}^{k} \bar{s}_{\varepsilon}(u_{\ell})
    du_{\ell}.\nonumber
\end{align}
In view of (\ref{conB2}), there exists a constant
$M_{i,\varepsilon}<\infty$ such that, for any $i\geq 1$,
\begin{align}\label{tildefUi}
\left|\widetilde{f}^{(i)}_{k,z}(U_{t}+x)\right|&\leq
\lambda_{\varepsilon}^{-k}\int_{\bbr^{k}}
e^{z+x+\sum_{\ell=1}^{k}u_{\ell}}\prod_{\ell=1}^{k}
\bar{s}_{\varepsilon}(u_{\ell})
du_{\ell}\cdot e^{U_{t}}\\
&\quad+M_{i,\varepsilon}\lambda_{\varepsilon}^{-k}\sum_{j=0}^{i-2}\int_{\bbr^{k-1}}\prod_{\ell=2}^{k}
\bar{s}_{\varepsilon}(u_{\ell})du_{\ell}\max_{0\leq
j\leq{}i}\gamma_{j,\varepsilon/2}.\nonumber
\end{align}
The right-hand side of (\ref{tildefUi}) is integrable since $\bbe
e^{U_{t}}=1$. By dominated convergence, we conclude that
$\breve{f}_{k,z,t}\in C^{\infty}(\bbr)$, and also,
\begin{align*}
\breve{f}^{(i)}_{k,z,t}(x)=\bbe
\left[\widetilde{f}^{(i)}_{k,z}(U_{t}+x)\right],\text{ for all }i\geq
0,\quad\text{and}\quad
\limsup_{|x|\to\infty}e^{-\frac{c}{2}|x|}\left|\breve{f}^{(i)}_{k,z,t}(x)\right|<\infty,
\end{align*}
since $c\geq{}2$. Thus, applying (\ref{IDynkin}) gives
\begin{align}\label{expbrevef2}
\bbe \breve{f}_{k,z,t}(X_{t}^{\varepsilon})=\sum_{i=0}^{n-k}\frac{t^{i}}{i!}L_{\varepsilon}^{i}\breve{f}_{k,z,t}(0)+\frac{t^{n-k+1}}{(n-k)!}\int_{0}^{1}(1-\alpha)^{n-k}\bbe\{L_{\varepsilon}^{n-k+1}\breve{f}_{k,z,t}(X_{\alpha
t}^{\varepsilon})\}d\alpha.
\end{align}
To show that the last integral in (\ref{expbrevef2}) is bounded, we apply Lemma \ref{RIGen} to get that
\begin{align}
\bbe\left((L_{\varepsilon}^{n-k+1}\breve{f}_{k,z,t})(X_{\alpha
t}^{\varepsilon})\right)=
    \sum_{{\bf k}\in\mathcal{K}_{n-k+1}}
        \prod_{i=0}^{4} b_{i}^{k_{i}}\binom{n-k+1}{\bf k}\bbe\left(B_{_{{\bf k},\varepsilon}}\breve{f}_{k,z,t}(X_{\alpha
        t}^{\varepsilon})\right).\nonumber
\end{align}
Thus, it is sufficient to show the boundedness of $\bbe B_{_{{\bf
k},\varepsilon}}\breve{f}_{k,z,t}(X_{\alpha t}^{\varepsilon})$, for
any $1\leq k\leq n$ and ${\bf
k}=(k_{0},\dots,k_{4})\in\mathcal{K}_{n-k+1}$. Indeed, noting that
(\ref{conB1}) implies that
\[
    {\tilde{M}}:=\int_{[0,1]^{k_{3}}\times\bbr^{k_{3}+k_{4}}} e^{\sum_{j=1}^{k_{3}}\beta_{j}w_{j}+\sum_{i=1}^{k_{4}} u_{i}}d\pi_{_{{\bf k},\varepsilon}}<\infty,
\]
we have, for any $x\in\bbr$ and some {\PaleGrey constants} $K_{1},K_{2}<\infty$,
\begin{align}
 \left|B_{_{{\bf k},\varepsilon}}\breve{f}_{k,z,t}(x)\right|& \leq \int_{[0,1]^{k_{3}}\times\bbr^{k_{3}+k_{4}}}
 \left|\breve{f}_{k,z,t}^{(\ell_{\bf k})}\right|\left(x+\displaystyle{\sum_{j=1}^{k_{3}}\beta_{j}w_{j}+\sum_{i=1}^{k_{4}} u_{i}}\right)d\pi_{_{{\bf k},\varepsilon}}\nonumber\\
&\leq \int_{[0,1]^{k_{3}}\times\bbr^{k_{3}+k_{4}}}\bbe\,\left|{\widetilde{f}_{k,z}^{(\ell_{\bf k})}}\right|\left(U_{t}+x+\displaystyle{\sum_{j=1}^{k_{3}}\beta_{j}w_{j}+\sum_{i=1}^{k_{4}} u_{i}}\right)
d\pi_{_{{\bf k},\varepsilon}}\nonumber\\
&\leq{\tilde{M}} \lambda_{\varepsilon}^{-k}\bbe\,
e^{U_{t}}\int_{\bbr^{k-1}} \int_{\bbr}
e^{z+x+\sum_{\ell=1}^{k}u_{\ell}}\bar{s}_{\varepsilon}(u_{1})
du_{1}\prod_{\ell=2}^{k} \bar{s}_{\varepsilon}(u_{\ell})
du_{\ell}\nonumber\\
&\quad+{M_{i,\varepsilon}}
\lambda_{\varepsilon}^{-k}\sum_{j=0}^{\ell_{\bf
k}-2}\int_{\bbr^{k-1}}\prod_{\ell=2}^{k}
\bar{s}_{\varepsilon}(u_{\ell})du_{\ell} \max_{0\leq j\leq{}i}\gamma_{j,\varepsilon/2}\nonumber\\
&={M_{1}}e^{x}+{M_{2}}<\infty,\nonumber
\end{align}
where the third inequality follows from (\ref{tildefUi}). It follows
that $\bbe B_{_{{\bf k},\varepsilon}}\breve{f}_{k,z,t}(X_{\alpha
t}^{\varepsilon})$ is $O_{\varepsilon,z_{0}}(1)$, and so is $\bbe
L_{\varepsilon}^{n-k+1}\breve{f}_{k,z,t}(X_{\alpha
t}^{\varepsilon})$. Therefore, the last integral in
(\ref{expbrevef2}) is indeed $O_{\varepsilon,z_{0}}(t^{n-k+1})$.

\smallskip
\noindent\textbf{(4)} {Plugging} (\ref{expbrevef2}) into (\ref{T1b})
and (\ref{T2b}), and rearranging terms lead to
\begin{align}
\bbe f_{z}(Z_{t})&=e^{-\lambda_{\varepsilon}t}\sum_{k=1}^{n}\frac{(\lambda_{\varepsilon}t)^{k}}{k!}\breve{f}_{k,z,t}(X_{t}^{\varepsilon})+O_{\varepsilon,z_{0}}(t^{n+1})\nonumber\\
\label{expfzZ}&=e^{-\lambda_{\varepsilon}t}\sum_{j=1}^{n}\frac{t^{j}}{j!}\sum_{k=1}^{j}
\binom{j}{k}\lambda_{\varepsilon}^{k}L_{\varepsilon}^{j-k}\breve{f}_{k,z,t}(0)+O_{\varepsilon,z_{0}}(t^{n+1}).
\end{align}
It remains to expand the coefficients
\begin{align}\label{relatbrevefhatf}
L_{\varepsilon}^{j-k}\breve{f}_{k,z,t}(0)=L_{\varepsilon}^{j-k}\left[\bbe\widetilde{f}_{k,z}(U_{t}+\cdot)\right](0)=\lambda_{\varepsilon}^{-k}L_{\varepsilon}^{j-k}\left[\bbe\widehat{f}_{k,z}(U_{t}+\cdot)\right](0).
\end{align}
Using the expansion (\ref{IDynkinDiff}) and Remark \ref{IDynkinSV2}, we have
\begin{align}
\bbe\widehat{f}_{k,z}(U_{t}+x)&=\sum_{i=0}^{n-j}\frac{t^{i}}{i!}\calL^{i}\widehat{f}_{k,z}(x)\!+\!\frac{t^{n-j+1}}{(n\!-\!j\!+\!1)!}\!\int_{0}^{1}\!\!(1\!-\!\alpha)^{n\!-\!j}\bbe\!\left(\!\calL^{n-j+1}\!\widehat{f}_{k,z}\!(U_{\alpha t}+x)\!\right)\!d\alpha\nonumber\\
\label{expwidehatfopt}&=\sum_{i=0}^{n-j}\frac{t^{i}}{i!}\sum_{l=0}^{i}B_{l}^{i}(y_{0}){\Red \calL}_{1}^{l}\widehat{f}_{k,z}(x)\\
&\quad
+\frac{t^{n-j+1}}{(n-j+1)!}\int_{0}^{1}(1-\alpha)^{n-j}\bbe\left(\calL^{n-j+1}\widehat{f}_{k,z}(U_{\alpha
t}+x)\right)d\alpha.\nonumber
\end{align}
Finally, by combining (\ref{expfzZ}), (\ref{relatbrevefhatf}) and
(\ref{expwidehatfopt}), it follows that
\begin{align}
\bbe f_{z}(Z_{t})&=
e^{-\lambda_{\varepsilon}t}\sum_{j=1}^{n}\frac{t^{j}}{j!}\sum_{k=1}^{j}
\binom{j}{k}\Bigg[\sum_{i=0}^{n-j}\frac{t^{i}}{i!}L_{\varepsilon}^{j-k}\left(\sum_{l=0}^{i}B_{l}^{i}(y_{0}){\Red \calL}_{1}^{l}\widehat{f}_{k,z}\right)(0)\nonumber\\
\label{expfzZfinal1}&+\!\frac{t^{n-j+1}}{(n\!-\!j\!+\!1)!}\!\int_{0}^{1}\!\!(1\!-\!\alpha)^{n-j}\bbe\!\left\{\!L_{\varepsilon}^{j-k}\!\big[\!\calL^{n\!-\!j\!+\!1}\!\widehat{f}_{k,z}\!(U_{\alpha
t}\!+\cdot)\big](0)\!\right\}\!d\alpha\!\Bigg]\!+\!O_{\varepsilon,z_{0}}\!(t^{n+1}).
\end{align}
{\Red Finally, since the integral in (\ref{expfzZfinal1}) is $O_{\varepsilon,z_{0}}(1)$, as seen from the uniform boundedness condition (\ref{con320}) and the estimate
(\ref{tildefUi}), we obtain that} 
\begin{align*}
\bbe f_{z}(Z_{t})&=e^{-\lambda_{\varepsilon}t}\sum_{j=1}^{n}\frac{t^{j}}{j!}\sum_{k=1}^{j}
\binom{j}{k}\sum_{i=0}^{n-j}\frac{t^{i}}{i!}L_{\varepsilon}^{j-k}\left(\sum_{l=0}^{i}B_{l}^{i}(y_{0}){\Red \calL}_{1}^{l}\widehat{f}_{k,z}\right)(0)+O_{\varepsilon,z_{0}}(t^{n+1})\\
&=e^{-\lambda_{\varepsilon}t}\!\sum_{j=1}^{n}\frac{t^{j}}{j!}\!\!\sum_{p+q+r=j}\!\binom{j}{p,q,r}\!L_{\varepsilon}^{q}\left(\sum_{m=0}^{r}B_{m}^{r}(y_{0}){\Red \calL}_{1}^{m}\widehat{f}_{p,z}\!\right)(0)\!+\!O_{\varepsilon,z_{0}}(t^{n+1}).
\end{align*}
\hfill $\Box$

\section{Proof of Theorem \ref{ExpTrDnsty}} \label{ApD}
\renewcommand{\theequation}{D-\arabic{equation}}
 We only consider {\Green the case} $x>0$ (the case $x<0$ can be done similarly by considering $\bbp(X_{t}\leq x)$). Again, we start with the expression
\begin{align}\label{T1}
    \bbp(X_{t}\geq{}x)&
    = \underbrace{e^{-\lambda_{\varepsilon}t}\bbp \left(X^{\varepsilon}_{t}\geq{}x \right)}_{B_{t}(x)}
   +\underbrace{
    e^{-\lambda_{\varepsilon}t}\sum_{k=n+1}^{\infty}\frac{(\lambda_{\varepsilon}t)^{k}}{k!}
    \bbp\left(X^{\varepsilon}_{t}+\sum_{i=1}^{k}\xi_{i}\geq{}x\right)}_{C_{t}(x)}\\
    \label{T2}
    &\quad+
    \underbrace{e^{-\lambda_{\varepsilon}t}\sum_{k=1}^{n}\frac{(\lambda_{\varepsilon}t)^{k}}{k!}\bbp \left(X^{\varepsilon}_{t}+\sum_{i=1}^{k}\xi_{i}\geq{}x\right)}_{D_{t}(x)}.
\end{align}
Let $\f_{t}^{\varepsilon}$ denote the density of
$X^{\varepsilon}_{t}$, whose existence follows from (\ref{C1E}).
Given that
\begin{align*}
    &\frac{d}{dx}\bbp \left(X^{\varepsilon}_{t}+\sum_{i=1}^{k}\xi_{i}\geq{}x\right)= - \frac{1}{\lambda_{\varepsilon}^{k}} f_{t}^{\varepsilon} * \bar{s}^{*k}_{\varepsilon}(x),\end{align*}
and that 
\[
	\sup_{x}|f_{t}^{\varepsilon} *  \bar{s}_{\varepsilon}^{*k}(x)|\leq \sup_{x}| \bar{s}_{\varepsilon}^{*k}(x)|\leq \gamma_{\varepsilon/2,0}\lambda_{\varepsilon}^{k-1},
\]
one can interchange derivative and summation in (\ref{T1}) to show that for each $t\geq 0$, $C_{t}$ admits a density $c_{t}$, with moreover,
\begin{equation}\label{Est2}
    \sup_{x}\left|c_{t}(x)\right|=
    \sup_{x} e^{-\lambda_{\varepsilon}t}\sum_{k=n+1}^{\infty}\frac{t^{k}}{k!}f_{t}^{\varepsilon} * \bar{s}^{*k}_{\varepsilon}(x)\leq  e^{-\lambda_{\varepsilon}t}\frac{\gamma_{\varepsilon/2,0}}{\lambda_{\varepsilon}}\sum_{k=n+1}^{\infty}\frac{(\lambda_{\varepsilon}t)^{k}}{k!}\leq\lambda_{\varepsilon}^{n} \gamma_{\varepsilon/2,0} t^{n+1}.
\end{equation}
Also, in view of Proposition III.2 in \cite{Leandre}, there exists a real $\varepsilon_{0}(\eta,n)>0$ such that for all $0<\varepsilon<\varepsilon_{0}$ and $t\leq{}1$,
\begin{equation}\label{LEst}
    \sup_{|x|\geq{}\eta}f_{t}^{\varepsilon}(x)\leq M(\eta,\varepsilon)
    t^{n+1},
\end{equation}
where $M(\eta,\varepsilon)$ is some constant depending only
on $\eta$ and $\varepsilon$. The last step of the proof is to deal with the terms
in $D_{t}$. Recall that
\begin{align*}
    &\bbp\left(X^{\varepsilon}_{t}+\sum_{i=1}^{k}\xi_{i}\geq{}x\right)=\bbe \widetilde{f}_{k,x}
    \left(X^{\varepsilon}_{t}\right),
\end{align*}
and 
\begin{align*}
    &\frac{d^{(i)}}{d z^{i}}\widetilde{f}_{k,x}(y)=\lambda_{\varepsilon}^{-k}
    (-1)^{i-1}\bar{s}_{\varepsilon}^{*(k-1)}*\bar{s}_{\varepsilon}^{(i-1)}(x-y),
\end{align*}
with
\[
     \widetilde{f}_{k,x}(y):=\bbp\left(y+\sum_{\ell=1}^{k}\xi_{i}\geq{}x\right)
     =\lambda_{\varepsilon}^{-k}\int_{x-y}^{\infty}
     \bar{s}_{\varepsilon}^{*k}(u)du.
\]
Then, applying the iterated formula (\ref{IDynkin}), we get
\begin{equation}\label{AE1}
    \bbe \widetilde{f}_{k,x}(X^{\varepsilon}_{t})=
    \sum_{i=0}^{n-k} \frac{t^{i}}{i!}
    L^{i}_{\varepsilon}\widetilde{f}_{k,x}(0)+
    \frac{t^{n+1-k}}{(n-k)!}
    \int_{0}^{1} (1-\alpha)^{n-k}\bbe\left(
    L_{\varepsilon}^{n+1-k} \widetilde{f}_{k,x}(X^{\varepsilon}_{\alpha
    t})\right)d\alpha.
\end{equation}
Using the representation of $L_{\varepsilon}$ in Lemma \ref{RIGen},
one can easily verify that
\begin{align}\label{PDv}
    &\frac{d}{dx}L^{i}_{\varepsilon}\widetilde{f}_{k,x}(y)=- L^{i}_{\varepsilon}\widetilde{f}_{k,x}'(y)=-(\lambda_{\varepsilon})^{-k} L^{i}_{\varepsilon}\hat{s}_{k,x}(y),\\
    &\sup_{x,z}\left|\frac{d}{dx}L_{\varepsilon}^{n+1-k} \widetilde{f}_{k,x}(y)\right|\leq M_{n,k,\varepsilon}
    {\max_{0\leq{}k\leq{}2n}\{\gamma_{\varepsilon/2,k}\}},\label{PDvb}
\end{align}
for some constants $M_{n,k,\varepsilon}<\infty$. Hence, one can pass
$d/dx$ through the integral and the expectation in the last term of
(\ref{AE1}) to get
\begin{equation}\label{AE1b}
    \frac{d}{dx}\bbe \widetilde{f}_{k,x}(X^{\varepsilon}_{t})=-(\lambda_{\varepsilon})^{-k}
    \sum_{i=0}^{n-k} \frac{t^{i}}{i!}
    L^{i}_{\varepsilon}\hat{s}_{k,x}(0)+t^{n+1-k}O_{\varepsilon,k,n}(1),
\end{equation}
where $O_{\varepsilon,k,n}(1)$ {\Green indicates} that
$\sup_{x}|O_{\varepsilon,k,n}(1)|$ is bounded by a constant
depending only on $\varepsilon$, $k$, and $n$. Differentiating
$\bbp(X_{t}\geq{}x)$ in (\ref{T1}) and plugging in (\ref{Est2}),
(\ref{LEst}), (\ref{AE1b}) gives for any
$0<\varepsilon<\varepsilon_{0}$ and $t\leq{}1$,
\[
    \f_{t}(x)=e^{-\lambda_{\varepsilon}t}\sum_{k=1}^{n}\frac{t^{k}}{k!} \sum_{i=0}^{n-k} \frac{t^{i}}{i!}
    L^{i}_{\varepsilon}\hat{s}_{k,x}(0)+t^{n+1}O_{\varepsilon,\eta}(1),
\]
where $O_{\varepsilon,\eta}(1)$ is such that $\sup_{t\leq{}1}\sup_{|x|\geq{}\eta}|O_{\varepsilon,\eta}(1)|<\infty$.
Rearranging the terms above leads to
\[
     \f_{t}(x)=e^{-\lambda_{\varepsilon}t}\sum_{p=1}^{n}c_{p}(x)\frac{t^{p}}{p!}+t^{n+1}O_{\varepsilon,\eta}(1),
\]
with
\[
    c_{p}(x):=\sum_{k=1}^{p} \binom{p}{k} L_{\varepsilon}^{p-k}\hat{s}_{k,x}(0).
\]
The expression in  (\ref{Cnst11}) follows from the Taylor
expansion of $e^{-\lambda_{\varepsilon}t}$, using also that
$\sup_{x}|c_{p}(x)|<\infty$ (a fact which follows from
(\ref{PDv})). Finally, the "constant property" of
(\ref{Cnst11}), for any $0<\varepsilon<\varepsilon_{0}$, follows
from inversion. Indeed, given that a posterior
    \begin{equation}\label{GED2}
 \f_{t}(x)=\sum_{k=1}^{n} \frac{\a_{k}(x)}{k!} t^{k}+ t^{n+1}O_{\eta,\varepsilon}(1),
\end{equation}
holds true for any $t\leq{}1$ and $0<\varepsilon<\varepsilon_{0}$,  $a_{k}(x)$ can be recovered from $f_{t}(x)$ (independently of $\varepsilon$) by the recursive formulas:
\[
    a_{1}(x)=\lim_{t\to{}0}\frac{1}{t} f_{t}(x), \quad a_{k}(x)=\lim_{t\to{}0}\frac{k!}{t^{k}}\left(f_{t}(x)-\sum_{i=1}^{k} \frac{\a_{i}(x)}{i!} t^{i}\right), \quad 2\leq{}k\leq{}n.
\]
\hfill
$\Box$

\bibliographystyle{plain}


\end{document}